\newtheorem{theorem}{Theorem}[section]
\newtheorem{lemma}[theorem]{Lemma}
\newtheorem{corollary}[theorem]{Corollary}
\newtheorem{definition}{Definition}[section]
\newtheorem{remark}[definition]{Remark}
\newtheorem{proposition}[theorem]{Proposition}
\newenvironment{proofof}[1]{{\bf Proof of #1:}}{$\qed$\par}
\newcommand{\comment}[1]{}
\def\III{|\!|\!|}
\def\prob{{\mathbb P}}
\def\alg{{\sf Alg}}
\def\bigo{{\sf O}}
\newcommand{\reals}{{\mathds R }}
\newcommand{\naturals}{{ \mathds N}}
\def\cE{{\cal E}}
\def\Var{{\rm Var}}
\def\cA{{\cal A}}
\def\E{{\mathbb E}}
\def\alg{{\sf Alg}}
\def\prob{{\mathbb P}}
\def\eps{\epsilon}
\def\psucc{{\rm P}_{\rm succ}}
\def\normal{{\sf N}}
\def\de{{\rm d}}
\def\complex{{\mathbb C}}
\def\cL{{\mathcal L}}
\def\cG{{\cal G}}
\def\cE{{\cal E}}
\def\cV{{\cal V}}
\def\dg{{\rm deg}}
\def\sign{{\rm sign}}
\def\hG{\widehat{G}}
\def\hQ{\widehat{Q}}
\def\hM{\widehat{M}}
\def\tr{{\rm Tr}}
\def\supp{{\rm supp}}
\def\Param{\Theta}
\def\tParam{\tilde{\Theta}}
\def\hParam{\widehat{\Theta}}
\def\param{\theta}
\def\rls{{\sf Rls}}
\def\bF{{\bf F}}
\def\<{\langle}
\def\>{\rangle}
\begin{document}

\title{Support Recovery for the Drift Coefficient of High-Dimensional Diffusions}
\author{~Jos\'e~Bento, and Morteza~Ibrahimi
\thanks{J. Bento is with the Department of Electrical Engineering, Stanford University, Stanford, CA, 94305 USA email: jbento@stanford.edu.}
\thanks{M. Ibrahimi is with the Department of Electrical Engineering, Stanford University, Stanford,
CA, 94305 USA email: ibrahimi@stanford.edu.}
}
\date{\today}

\maketitle

\begin{abstract}
Consider the problem of learning the drift coefficient of a $p$-dimensional
stochastic differential equation from a sample path of length $T$.
We assume that the drift is parametrized by a
high-dimensional vector,
and study the support recovery problem when both $p$ and $T$ can tend to infinity.
In particular, we prove a general lower bound on the sample-complexity $T$
by using  a characterization of mutual information as a time integral
of conditional variance, due to Kadota, Zakai, and Ziv.
For linear stochastic differential equations, the drift coefficient
is parametrized by a $p\times p$ matrix which describes
which degrees of freedom interact under the dynamics.
In this case, we analyze a $\ell_1$-regularized least squares
estimator and prove an upper bound on $T$ that nearly matches the
lower bound on specific classes of sparse matrices.
\end{abstract}

\begin{IEEEkeywords}
Stochastic differential equation, sparse recovery, dynamical systems, maximum likelihood
\end{IEEEkeywords}

%
%
\section{Introduction}
\label{sec:Intro}

Consider a continuous-time stochastic process $\{x(t)\}_{t \geq 0}$,
$x(t) = [ x_1(t), \dots, x_p(t)] \in\reals^p,$ that is defined by a
stochastic differential equation (SDE) of diffusion type
\begin{eqnarray}
\de x(t) = 
F(x(t);\Param^0) \, \de t + \de b(t) \, , \label{eq:BasicModel}
\end{eqnarray}
where $b(t)$ is a $p$-dimensional standard Brownian
motion and the \textit{drift coefficient}\footnote{Throughout the
paper, vectors are `column vector' even if they are represented in
row form for typographical reasons.} $$F(x(t);\Theta^0) =
[F_1(x(t);\Theta^0),\dots,F_p(x(t);\Theta^0)] \in \reals^p,$$ is a function of $x(t)$
parametrized by $\Theta^0$. This  is an unknown vector, with
dimensions scaling polynomially with $p$.

In this paper we consider the problem of learning the support of
the vector  $\Theta^0$ from a sample trajectory $X^T_0 \equiv \{x(t):\; t \in  [0,T]\}$.  
More precisely, we focus  on the high-dimensional scenario
where $p$ and $T$ are allowed to increase simultaneously.	
Our goal is to determine necessary and sufficient conditions for recovering the support of $\Theta^0$ and the sign of its entries with high probability.
We refer to the smallest $T$ that allows to achieve a prescribed
success probability as the `sample-complexity' of the problem
(although the number of samples is, strictly speaking, infinite). We
are particularly interested in achieving the optimal scaling of sample
complexity with the problem dimensions through computationally 
efficient procedures.

Concretely, given a SDE parametrized by $\Param^0$ and an algorithm
$\alg =\alg(X^T_0)$ that outputs an estimate $\hParam$,
we define the sample-complexity $T_{\alg}(\Param^0)$ as
\begin{align}\label{eq:def_sample_complexity}
\inf \big\{ T_0 \in \reals^{+} : \prob_{\Param^0,T} \{\sign(\hParam) =
  \sign(\Param^0) \} \geq 1- \delta, \hfill& \nonumber \\
\text{ for all } T \geq T_0 & \big\}.
\end{align}
In the expression above, $\prob_{\Param^0,T}$ denotes
probability with
respect to the trajectory $X^T_0$. The function $\sign(.)$
acts element-wise on its vector-valued argument and
to each scalar applies the mapping $\sign: \reals \mapsto \{-1, 0, 1\}$ such that 
\begin{equation*}
\sign(x) = \left\{
\begin{array}{l l}
  -1 & \text{if} \quad x < 0,\\
  +1 & \text{if} \quad x > 0,\\
  0  & \text{if} \quad x = 0.
\end{array}
\right.
\end{equation*}
Obviously,
$T_{\alg}(\Param^0)$ defined above is an upper bound for
sample-complexity of learning the support alone. 
In addition to this definition, given some class
$\cA$ of parameters, we define
\begin{equation}\label{eq:def_sample_complexity_group}
T_{\alg}(\cA) = \sup_{\Param^0 \in \cA} T_{\alg}(\Param^0).
\end{equation}


Models based on SDEs play a crucial role in several domains of
science and technology, ranging from chemistry to
finance. Consequently, estimating their parameters
has been a topic of great interest in several fields. We refer to Section \ref{sec:Related} for a
brief overview.  A complete understanding of support recovery in a high-dimensional setting is nevertheless missing. 

Our results address these challenges for special classes of SDEs of
immediate relevance. 
A first class is constituted by drift coefficients that are
parametrized linearly. Explicitly, we are given a set of basis functions
\begin{equation}
\bF(x) = [f_1(x), f_2(x), \dots, f_m(x)],
\end{equation}
with $f_i:\reals^p \rightarrow \reals$. The drift is then given as $F(x;\Param^0) = \Param^0 \bF(x)$,
with matrix $\Param^0 \equiv \left\{\param^0_{ij}\right\}_{i\in[p],j\in [m]}\in \reals^{p \times
  m}$, $[p] = \{1, ...,p\}$ and $[m] = \{1, ...,m\}$. We then have, for each $i \in \reals^p$,
\begin{equation}
\label{eq:BasicModelBasis}
\de x_i(t) = 
\sum^m_{j = 1} \param^0_{ij} f_j(x(t))\, \de t + \, \de b_i(t)\, . 
\end{equation}
Suitable sets of basis functions can be provided by domain-specific
knowledge. As an example, within stochastic models of chemical
reactions, the drift coefficient is a low-degree polynomial. For instance, the reaction ${\sf A}+2{\sf B}\to{\sf C}$
is modeled  as $\de x_{{\sf C}} = k_{{\sf C,AB}} \, x_{{\sf A}} x_{{\sf
  B}}^2\de t - k_{{\sf AB,C}}\,x_{{\sf C}} + \de b_{{\sf C}}$ where $x_A$, $x_B$ and $x_C$ denote the
concentration of the species $A$, $B$ and $C$ respectively, and $\de
b_C$ is a chemical noise term. In order to learn a model of this type, one
can consider a basis of functions $\bF(x)$ that comprises all monomials 
up to a maximum degree. In this case, the support of $\Param^0$ tells
which species react with which species, i.e. a network interactions. The sign of its entries distinguishes ``inhibitory'' effects from ``excitatory'' effects.
In the end of this section we give a concrete example of using our method to learn chemical reactions.

An important subclass of models of the last type is provided by linear SDEs. 
In this case, the drift is a linear function of $x(t)$, namely 
$F(x;\Param^0) = \Param^0 x(t)$ with $\Param^0 \equiv \left\{\param^0_{ij}\right\}_{i,j\in[p]}\in \reals^{p\times p}$.
Explicitly, for each $i \in \reals^p$,
\begin{eqnarray}
\de x_i(t) = 
\sum^p_{j = 1} \param^0_{ij} x_j(t)\, \de t + \, \de b_i(t) \, . \label{eq:BasicModelLin}
\end{eqnarray}
A model of this type is a good approximation for many systems near a stable equilibrium.
The model \eqref{eq:BasicModelLin} can be used to trace fluctuations
of the species' concentrations  in proximity
of an equilibrium point in chemical reactions.
In this case, the matrix $\Param^0$ would represent the linearized interactions between different chemical factors.

More generally, we can associate to the model (\ref{eq:BasicModelLin})
a directed graph $G= (V,E)$ with edge weight $\theta^{0}_{ij}\in\reals$
associated to the directed edge $(j,i)$ from $j\in V$ to $i\in V$.
Each component $x_i(t)$ of the vector $x(t)$ describes the state of a
node $i\in V$. The graph $G$ describes which nodes interact: the rate
of change of $x_i(t)$ is given by a weighted sum of the current values
of its neighbors, corrupted by white noise.
In other words, linear SDEs can be seen as graphical models -- a probabilistic model parametrized by a graph.

This paper establishes lower bounds on the sample-complexity for
estimating the support of $\Param^0$ in the
general model (\ref{eq:BasicModel}). These are based on
information theoretic techniques and apply irrespective of
computational considerations. For linear models of the form
(\ref{eq:BasicModelLin}), we put forward a low-complexity estimator
and derive upper bounds on its sample-complexity.
Upper and lower bounds are shown to be within a constant factor for
special classes of sparse networks $\Theta^0$.

Before stating our results more formally, it is useful to stress two
key differences with respect to other high-dimensional
estimation problems.
\begin{enumerate}
\item[$(i)$] \emph{Samples are not independent}.
\item[$(ii)$] \emph{Infinitely many samples are
given as data} (in fact a collection indexed by 
$t\in [0,T]$).  
\end{enumerate}
A simple approach would be to select a finite subsample set.
For instance, one can select a sampling interval $\eta>0$ and
only use samples at regularly spaced times
$\{x(\eta),x(2\eta),x(3\eta),\dots\}$. At first sight, this 
reduces the problem to a more classical one. A closer consideration illustrates instead the
new challenges posed by the present model. 
\begin{itemize}
\item If $\eta$ is small, one obtains a large  number of strongly
  dependent samples and earlier analysis does not apply. In particular, a careful analysis
  must reveal that there is limited information to be harnessed from a
  given time interval $T$.
\item  One might be lead into the conclusion that $\eta$ must be taken sufficiently large as to make samples approximately
independent. However, this approach will waste important information
contained in the sample path.
For example, for a linear SDE, the matrix $\Theta^0$ contains more information than the stationary
distribution of the process (\ref{eq:BasicModelLin})\footnote{Let $\Theta^0_1 = \{\{-2,-1,-1\},\{1,-2,-1\},\{1,1,-2\}\}$ and $\Theta^0_2 = \{\{-2,1,0\},\{-1,-2,1\},\{0,-1,-2\}\}$. The linear systems defined by these matrices have different support. Yet, their stationary behavior is described by the same covariance matrix $\Sigma = \{\{1/4,0,0\},\{0,1/4,0\},\{0,0,1/4\}\}$.}.
\end{itemize}
Our results confirm in a detailed and quantitative way these intuitions.

%
%
\subsection{Regularized least squares}
\label{sec:ResContinuumRegularized}

Regularized least squares, $\rls$, is an efficient and well-studied
method for support recovery. We discuss relations 
with existing literature in Section \ref{sec:Related}.
In this paper we study its application to estimating the drift
coefficient of a high-dimensional diffusion and show that 
its sample-complexity compares favorably with our
information-theoretic lower bounds.

Its use is better explained for 
the general linearly parametrized model \eqref{eq:BasicModelBasis}.
For this model, we estimate independently each
row of the matrix $\Param^0 \in \reals^{p \times m}$.
The $r^{{\rm th}}$ row, denoted by $\Param^0_r$, is estimated by solving the following 
convex optimization problem for $\Param_r\in\reals^p$
\begin{equation}
{\rm minimize}\;\;\;
 \cL(\Param_r;X^T_0) + \lambda \| \Param_r\|_1\, ,\label{eq:cont_reg_prob}
\end{equation}
where the log-likelihood function $\cL$ is defined by
\begin{align}
\cL(\Param_r;X^T_0) = 
& \frac{1}{2T} \int^T_0\!\! \<\Param_r, \bF(x(t))\>^2\,\, \de t \nonumber \\
& \quad - \frac{1}{T}\int^T_0\!\! \<\Param_r, \bF(x(t))\>\,\, \de x_r(t)\, . \label{eq:ContCost}
\end{align}
Here and below $\<u,v\>$ denotes the standard scalar product of
vectors $u,v\in\reals^N$.

We denote this algorithm by $\rls(\lambda)$.
The $\ell_1$ regularization term in Eq.~(\ref{eq:cont_reg_prob})
has the role of shrinking to $0$ all the entries $\param_{rj}$, except
the most significant ones, thus effectively selecting the support of $\Param$.

By minimizing the function $\cL$ alone, i.e. setting $\lambda = 0$, one obtains the maximum likelihood
estimator for the diffusion process \eqref{eq:BasicModel}.
Maximum likelihood optimization has been used
before in the context of estimating
diffusions in the low-dimension setting
\footnote{Low-dimensional in the sense
of keeping the number of degrees of
freedom, $p$, fixed and letting $T$ converge
to infinity.}. See \cite{brown1975asymptotic}
and other references in Section \ref{sec:Related}.
In particular, the normalized log-likelihood function (\ref{eq:ContCost}) is the
appropriate generalization of the sum of square residuals for a
continuous-time process.
To see this heuristically, one can \emph{formally} write $\dot{x}_r(t)
=\de x_r(t)/\de t$.
A careless sum of square residuals would take the form $\int (\<\Param_r,
\bF(x(t)) \>- \dot{x}_r(t))^2\de t$.
Unfortunately, this expression is not defined because  $x_r(t)$ is not differentiable.
On the other hand, expanding the square, we get
$2T\cL(\Param_r;X^T_0) +\int (\dot{x}_r(t))^2\de t$. The first
term is well defined, as is clear from Eq.~(\ref{eq:ContCost}), and
the second is independent of $\Param$ and hence can be dropped.

Notice that constructing a well-defined cost function as in
Eq.~(\ref{eq:ContCost}) is not a purely academic problem. Indeed, a
cost function that included the time derivative $\dot{x}(t)$ would in
practice require to estimate $\dot{x}(t)$  itself. This is all but hopeless because $\dot{x}(t)$ does not
exist in the model.

%
%

\section{Main Results}

Our main contributions are the followings:

\noindent\textbf{Information-theoretic lower bound:} We establish a general lower
  bound on the sample-complexity for estimating the drift coefficient
  of a diffusion of the form (\ref{eq:BasicModel}). By specializing
  this result, we obtain bounds for the linearly parametrized model
  (\ref{eq:BasicModelBasis}), and the linear model
  (\ref{eq:BasicModelLin}).\\
  \textbf{Upper bound via regularized least squares:} For the linear model
  (\ref{eq:BasicModelLin}), and suitable class of sparse matrices
  $\Param^0$, we prove high-dimensional consistency of the penalized
  least-squares method introduced in Section \ref{sec:ResContinuumRegularized}.
  The resulting upper bound on sample-complexity
  matches the information theoretic lower bound up to constant factors in $p$. 

For the sake of simplicity, in this section we focus on the case of sparse linear SDEs, stating upper and lower bounds, cf. Section \ref{sec:SparseLinear}.
We then illustrate the general theory by analyzing a specific but rich problem: learning the Laplacian of a sparse graph, cf. Section \ref{sec:LearningLaplacian}. In Section \ref{sec:num_resul_main_th} we give numerical illustrations of our main results. Extensions, in particular, general lower bounds on the sample
complexity, are discussed in Section \ref{sec:extensions}.
Finally, in Section \ref{sec:num_ext}, we present numerical illustrations of these extensions, part of which are motivated by real-world applications. 

Proofs for the technical lemmas are provided in the appendix.

%
%
\subsection{Notation}

For any $N \in\naturals$, we let $[N] = \{1,2,\dots,N\}$.

Given any matrix $Q$, its transpose is denoted by $Q^*$
and its support, $\supp(Q)$, is the $0-1$
matrix such that $\supp(Q)_{ij}=1$ if and only if $Q_{ij}\neq 0$. 

For a vector $v\in\reals^N$, $\supp(v)$ is defined analogously.
With a slight abuse of notation, we occasionally
write $\supp(v)$ for the subset of indices $i\in [N]$ such that
$v_i\neq 0$. The \emph{signed support} of a matrix (or vector) $Q$,
denoted by $\sign(Q)$, is the matrix  defined by 
$\sign(Q)_{ij}=\sign(Q_{ij})$ where the function $\sign(Q_{ij})$ is defined as
\begin{equation}
  \sign(Q)_{ij} = \left\{
    \begin{array}{l l}
      +1 & \qquad \text{if } Q_{ij} > 0 \\
      0 & \qquad \text{if } Q_{ij} = 0 \\ 
      -1 & \qquad \text{if } Q_{ij} < 0 
    \end{array}
  \right.
\end{equation}
The $r$-th row of a matrix $Q$ is denoted
by $Q_r$.
Given a matrix $Q\in\reals^{M\times N}$, and sets $L\subseteq [M]$,
$R\subseteq [N]$, we denote by $Q_{L,R}$ the sub-matrix $Q_{L,R}
\equiv (Q_{ij})_{i\in L, j\in R}$.

For $q\ge 1$, the $\ell_q$ norm of a vector $v\in\reals^N$ is given by
$\|v\|_q \equiv(\sum_{i\in [N]} |v_i|^q)^{1/q}$. This is extended in
the usual way to $q=\infty$. 
As usual, the misnomer `$0$-norm' is used for the size of the
support of $v$, namely $\|v\|_0$ is the number of non-zero entries of $v$. 
The $\ell_q$ operator norm of a matrix
$Q\in\reals^{M\times N}$ is denoted by $\III Q \III_{q}$.
In particular the
$\ell_{\infty}$ operator norm is given by
$\III Q \III_{\infty}\equiv \max_{r\in [M]} \|Q_r\|_1$.

If $Q\in\reals^{N\times N}$ is symmetric, then its eigenvalues are
denoted by $\lambda_1(Q)\le \lambda_2(Q)\le \dots \le \lambda_N(Q)$.
The minimum and maximum eigenvalues are denoted as
$\lambda_{\rm min}(Q) \equiv\lambda_1(Q)$ and $\lambda_{\rm max}(Q)
\equiv \lambda_N(Q)$.
For a general (non-symmetric)  matrix $Q\in\reals^{M\times N}$ we let
$0\le \sigma_1(Q)\le \dots\le \sigma_{\min\{M, N\}}(Q)$ denote its
singular values. Further $\sigma_{\rm min}(Q) =\sigma_1(Q)$ and
$\sigma_{\rm max}(Q)=\sigma_{\min\{M, N\}}(Q)$ are the minimum and
maximum singular values. 

Throughout the paper, we denote by $C$, $C_1$, $C_2$, etc, constants
that can be adjusted from point to point.
%
%
\subsection{Sample complexity for sparse linear SDEs}
\label{sec:SparseLinear}

In order to state our results, it is convenient to define the class of
sparse matrices $\cA^{(S)}$,  depending on parameters $k,p\in\naturals$,
$k\ge 3$, $\param_{\rm min},\rho_{\min}>0$,
\begin{equation}
\cA^{(S)}= \cA^{(S)}(k,p,\param_{\rm min},\rho_{\min}) \subseteq \reals^{p\times p}
\end{equation}
by letting  $\Param \in \cA^{(S)}$ if and only if
\begin{itemize}
\item[$(i)$]  $\|\Param_r\|_0 \leq k$  for all  $r\in [p]$.
\item[$(ii)$] $|\param_{ij}| \ge \param_{\min}$ for all  $i,j\in [p]$ such
  that $\param_{ij} \neq 0$.
\item[$(iii)$] $\lambda_{\min}(-(\Param+\Param^*)/2)  \ge
  \rho_{\min} >0$. 
\end{itemize}
Notice in particular that condition $(iii)$ implies that the system of
linear ODEs $\dot{x}(t) = \Param x(t)$  is stable. Equivalently, the
spectrum of $\Param$ is contained in the half plane $\{z\in\complex\, :\, {\rm Re}(z)<0\}$. 
As a consequence, if $\Param^0\in \cA^{(S)}$, then the diffusion process (\ref{eq:BasicModelLin})
has a unique stationary measure which is Gaussian with covariance $Q^0\in\reals^{p\times p}$
and is given by the unique solution of Lyapunov's equation  \cite{Control}
\begin{equation}
\Param^0 Q^0 +  Q^0 (\Param^0)^* + I = 0 \label{eq:Lyapunov}.
\end{equation}
Hence $X^T_0 = \{x(t):\; t\in [0,T]\}$ is a stationary trajectory distributed
according to the linear model (\ref{eq:BasicModelLin}) if 
$x(t=0)\sim\normal(0,Q^0)$ is a Gaussian random variable independent of $b(t)$.

We consider the linear model \eqref{eq:BasicModelLin} with $\Param^0
\in \cA^{(S)}$. Given a row index $r\in [p]$, let $S^0 = S^0(r)$
be the support of $\Param^0_r$.\\
\textbf{Assumption 1 (Restricted convexity).} For $C_{\rm min}>0$, we have
\begin{eqnarray}
 \lambda_{\min} (Q^0_{S^0,S^0})\ge C_{\rm min}\, .
\end{eqnarray}
\textbf{Assumption 2 (Irrepresentability):} For some $\alpha>0$, we have
\begin{equation}
\III  {Q^0}_{(S^0)^C, {S^0}} \left({Q^0}_{{S^0},{S^0}}\right)^{-1}  
\III_\infty \le  1 - \alpha\, .
\end{equation}
We refer to \cite{zhao,Buhlmann} for the original development of these
conditions in the context of sparse regression.

Our first theorem establishes high-dimensional consistency of
$\ell_1$-penalized least squares for estimating
$\sign(\Param^0)$
from a stationary trajectory $X^T_0$ according to
the linear model (\ref{eq:BasicModelLin}) when
$\Param^0 \in \cA^{(S)}$.

\begin{theorem}\label{th:main_cont_time}
If $\Param^0 \in \cA^{(S)}(k,p,\param_{\min},\rho_{\min})$
satisfies assumptions $1$ and $2$ above for all $r \in [p]$
and some $C_{\min},\alpha > 0$, then there
exists $\lambda = \lambda(T) > 0$ such that
\begin{equation} 
T_{\rls(\lambda)}(\Param^0) \leq
\frac{2 \cdot 10^4 k^2 (k \, \rho_{\min}^{-2}
+ \param_{\min}^{-2})}{\alpha^{2}
\rho_{\min} C_{\min}^{2} } \, \log\Big(\frac{4p
k}{\delta}\Big).
 \label{eq:sample_bound_cont}
\end{equation}
In particular, one can choose
\begin{equation} 
\lambda = 
 \sqrt{\frac{36}{T\alpha^2 \rho_{\min}}\,
\log\Big(\frac{4p}{\delta}\Big)}\,\, .
\end{equation} 
\end{theorem}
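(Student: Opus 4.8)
The plan is to adapt the now-standard "primal–dual witness" (PDW) construction for $\ell_1$-regularized M-estimators, as developed for sparse linear regression in \cite{zhao,Buhlmann}, to the continuous-time least-squares cost \eqref{eq:ContCost}. Fix a row $r\in[p]$, write $S^0=S^0(r)$ for its support, and let $s=|S^0|\le k$. The cost $\cL(\Param_r;X^T_0)$ is quadratic in $\Param_r$ with Hessian $\hQ \equiv \frac1T\int_0^T x(t)x(t)^*\,\de t$ (here $\bF(x)=x$ for the linear model), and gradient at the true parameter equal to $-W$ where $W \equiv \frac1T\int_0^T x(t)\,\de b_r(t)$ is a martingale noise term; note $\E[\hQ]=Q^0$. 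I would first set up the PDW: construct the candidate solution $\hParam_r$ by solving the restricted problem on coordinates $S^0$ with the off-support coordinates forced to zero, then choose the dual vector $\hz\in\partial\|\hParam_r\|_1$ so that the zero-subgradient (KKT) equations hold on $S^0$, and finally \emph{check} strict dual feasibility $\|\hz_{(S^0)^C}\|_\infty<1$ on the complement. Standard algebra reduces this to showing two events hold with high probability: (a) on the restricted block, $\lambda_{\min}(\hQ_{S^0,S^0})$ is bounded below (so the restricted problem is well-conditioned) and the on-support estimate is close enough to $\Param^0_r$ that no sign flips occur (this uses Assumption~1 and $\param_{\min}$); and (b) the incoherence condition \`a la Assumption~2 is preserved for the empirical $\hQ$, i.e. $\III \hQ_{(S^0)^C,S^0}(\hQ_{S^0,S^0})^{-1}\III_\infty\le 1-\alpha/2$, which together with a bound on $\|W\|_\infty$ gives strict dual feasibility.

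The technical heart is therefore a pair of concentration inequalities, one for the empirical covariance deviation $\|\hQ - Q^0\|$ (in the relevant operator/entrywise norms, restricted to the support block and the cross block, which have only $O(k)$ and $O(kp)$ entries respectively) and one for the martingale noise $\|W\|_\infty$. For $W$, each component $W_j = \frac1T\int_0^T x_j(t)\,\de b_r(t)$ is a continuous martingale; conditioning on the path of $x$, it is Gaussian with variance $\frac1{T^2}\int_0^T x_j(t)^2\,\de t = \hQ_{jj}/T$, so a Gaussian tail bound plus a union bound over $p$ coordinates and control of $\hQ_{jj}$ (which concentrates around $Q^0_{jj}$, itself bounded via stability, e.g. $Q^0_{jj}\le 1/(2\rho_{\min})$ from Lyapunov's equation \eqref{eq:Lyapunov} and Assumption~(iii)) yields $\|W\|_\infty \lesssim \sqrt{\rho_{\min}^{-1}T^{-1}\log(p/\delta)}$. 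For $\hQ-Q^0$, the entries are time averages of functions of the stationary ergodic diffusion; I would use the exponential mixing/contractivity guaranteed by $\lambda_{\min}(-(\Param^0+\Param^{0*})/2)\ge\rho_{\min}$ (the semigroup contracts at rate $\rho_{\min}$) to get a concentration bound of the form $\prob\{|\hQ_{ij}-Q^0_{ij}|>\epsilon\}\le 2\exp(-cT\rho_{\min}\epsilon^2/\text{poly})$, then union bound over the $O(k^2)$ entries in the support block and $O(kp)$ entries in the cross block. Choosing $\epsilon$ a small multiple of $\alpha C_{\min}/k$ makes both (a) and (b) go through.

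Once those concentration events are established, the rest is deterministic bookkeeping in the style of \cite{Buhlmann}: on the good event, $\hQ_{S^0,S^0}$ is invertible with $\lambda_{\min}\ge C_{\min}/2$, the on-support error satisfies $\|\hParam_{r,S^0}-\Param^0_{r,S^0}\|_\infty \le \frac{2}{C_{\min}}(\|W\|_\infty + \lambda)$, and one checks this is $<\param_{\min}$ so that $\sign(\hParam_r)=\sign(\Param^0_r)$; simultaneously the dual-feasibility residual is bounded by $(1-\alpha/2)\|W_{S^0}\|_\infty/\lambda + (2-\alpha)\|W_{(S^0)^C}\|_\infty/\lambda\cdot(\text{const}) + (1-\alpha)$-type terms, which is $<1$ provided $\lambda \gtrsim \alpha^{-1}\|W\|_\infty$. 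Plugging in the choice $\lambda = \sqrt{\frac{36}{T\alpha^2\rho_{\min}}\log(4p/\delta)}$ and demanding both $\lambda\gtrsim\|W\|_\infty/\alpha$ and $\lambda\cdot k/C_{\min} < \param_{\min}$ and $\epsilon\lesssim\alpha C_{\min}/k$ all hold, then solving for $T$, produces exactly a bound of the form \eqref{eq:sample_bound_cont} with the stated dependence $k^2(k\rho_{\min}^{-2}+\param_{\min}^{-2})/(\alpha^2\rho_{\min}C_{\min}^2)\cdot\log(pk/\delta)$; finally a union bound over all $p$ rows replaces $\delta$ by $\delta/p$ (absorbed into the log). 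The main obstacle I anticipate is \emph{not} the PDW skeleton but getting clean, explicit-constant concentration for the time-averaged quadratic functionals $\hQ_{ij}$ of the non-reversible diffusion — controlling the variance of $\int_0^T x_i x_j\,\de t$ requires either a Poincaré/spectral-gap argument for the generator or an explicit computation with the transition semigroup $e^{t\Param^0}$ and its covariance, and tracking the dependence on $\rho_{\min}$ through that step (rather than an unspecified mixing time) is what makes the final bound quantitative; I would handle it via the identity $\Var(\hQ_{ij}) = \frac{2}{T^2}\int_0^T(T-u)\,\mathrm{Cov}(x_i(0)x_j(0),x_i(u)x_j(u))\,\de u$ combined with $\|e^{u\Param^0}\|\le e^{-\rho_{\min}u}$ and Gaussian fourth-moment (Wick) formulas.
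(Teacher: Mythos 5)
Your primal--dual witness skeleton is exactly the one the paper uses (Proposition~\ref{th:cond_to_hold}, imported from \cite{zhao,ravikumar2008high}), and your reduction of the problem to two concentration statements --- one for the martingale term $\hG$ and one for the empirical covariance $\hQ$ --- matches the paper's structure. The route you propose for those concentration statements, however, has a genuine gap. For $\hQ$, you assert a tail bound of the form $\prob\{|\hQ_{ij}-Q^0_{ij}|>\epsilon\}\le 2\exp(-cT\rho_{\min}\epsilon^2/\mathrm{poly})$, but the tools you name do not deliver it: the identity $\Var(\hQ_{ij})=\frac{2}{T^2}\int_0^T (T-u)\,\mathrm{Cov}(x_i(0)x_j(0),x_i(u)x_j(u))\,\de u$ together with Wick formulas controls only the \emph{variance}, hence gives Chebyshev-type decay, and a Poincar\'e/spectral-gap inequality likewise yields variance bounds, not exponential tails for an unbounded quadratic functional of a non-reversible Gaussian diffusion. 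Exponential concentration here requires a bound on the moment generating function of the quadratic functional, i.e.\ control of the \emph{entire spectrum} of the associated covariance operator (both its operator norm and its Hilbert--Schmidt norm), not just the second moment. This is precisely the hard technical content of the paper: it discretizes the dynamics, writes $\hQ_{ij}$ and $\hG_j$ as quadratic forms $z^*Rz$ in i.i.d.\ Gaussians, bounds $\max_i|\nu_i|$ and $\sum_i\nu_i^2$ for the structured matrices $R$ (Lemmas~\ref{th:matrix_proper_grad} and \ref{th:matrix_proper_hess}), runs a Bernstein/Chernoff argument (Propositions~\ref{th:gradie_prob_bound} and \ref{en:bddQij}), proves the discrete-time Theorem~\ref{th:main_discrete}, and only then obtains Theorem~\ref{th:main_cont_time} by coupling the discrete and continuous processes so that $\hG^n\to\hG$, $\hQ^n\to\hQ$ almost surely as $\eta\to 0$, with Lemma~\ref{th:equiv_of_bound_in_the_limit} converting $(1-\sigma_{\max})/\eta$ into $\rho_{\min}$. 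Your plan to work directly in continuous time is viable in principle, but only if you supply the continuous-time analogue of that spectral control; as written, the step from ``variance computation'' to ``$\exp(-cT\rho_{\min}\epsilon^2)$'' is missing, and without it the union bound over $O(kp)$ entries cannot produce the $\log(pk/\delta)$ sample complexity.

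A second, more local flaw: your treatment of $\hG_j=\frac1T\int_0^T x_j(t)\,\de b_r(t)$ claims that conditioning on the path of $x$ makes it Gaussian with variance $\hQ_{jj}/T$. That is not legitimate, because $x_j$ is adapted to the same filtration as $b_r$ (every coordinate of $x$ is driven by the full Brownian vector through the coupled dynamics), so the integrand is not independent of the integrator and the ``conditional Gaussian'' picture fails --- this is exactly why the paper handles $\hG$ through the same global quadratic-form machinery rather than by conditioning. The fix is standard (an exponential supermartingale / self-normalized inequality for the stochastic integral on the event that its quadratic variation $\int_0^T x_j^2\,\de t$ is controlled), and it yields the tail you want, but the argument must be stated that way rather than via conditioning on $x$.
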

\begin{remark}
Note that the notions of sample-complexity
introduced in \ref{eq:def_sample_complexity} and 
\ref{eq:def_sample_complexity_group} are well-defined
for reconstruction algorithms that depend on $T$,
the length of the stationary trajectory $X^T_0$.
This is the case with the regularized least squares algorithm
$\rls(\lambda)$, since $\lambda$ can depend on $T$.
\end{remark}

\begin{remark}
If there exists $C_{\min},\alpha > 0$ such that
assumptions 1 and 2 hold for all $r \in [p]$
and for all $\Param^0 \in \cA^{(S)}(k,p,\param_{\min},\rho_{\min})$,
then we can replace $T_{\rls(\lambda)}(\Param^0)$
by $T_{\rls(\lambda)}(\cA^{(S)})$ in \eqref{eq:sample_bound_cont}.
\end{remark}

The next theorem establishes a lower bound on the
sample-complexity
of learning the signed support of $\Param^0 \in \cA^{(S)}$ from
a stationary trajectory, $X^T_0$, distributed
according to the linear model (\ref{eq:BasicModelLin}).
\begin{theorem} \label{th:linear_lbound_sparse}
Let $\alg = \alg(X^T_0)$ be an estimator of $\sign(\Param^0)$.
There is a constant $C(k,\delta)$, such that,
for all $p$ large enough,
\begin{align}\label{eq:lower_bound_linear_model}
T_{\alg}(\cA^{(S)})
\ge C(k,\delta)\, \max \Big\{ \frac{\rho_{\min} }
{\param_{\min}^2},\frac{1}{\param_{\min}}\Big\} \log p\, .
\end{align}
\end{theorem}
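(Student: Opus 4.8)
The plan is to reduce signed–support recovery to a multiple–hypothesis testing problem over a finite sub‑ensemble of $\cA^{(S)}$ and then apply Fano's inequality, controlling the relevant mutual information through the Kadota–Zakai–Ziv identity. Concretely, set $c := \max\{\rho_{\min}+\tfrac{\param_{\min}}{2},\ \tfrac32\param_{\min}\}$, so that $c-\tfrac{\param_{\min}}{2}=\max\{\rho_{\min},\param_{\min}\}$ and $c\ge\param_{\min}$; for each $j\in[p]\setminus\{2\}$ and $s\in\{+1,-1\}$ define $\Param^{(j,s)} := -c\,I + s\,\param_{\min}\, e_2 e_j^*$, i.e. the matrix $-cI$ with a single off–diagonal entry $s\param_{\min}$ at position $(2,j)$. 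Each such matrix lies in $\cA^{(S)}(k,p,\param_{\min},\rho_{\min})$: it has at most two nonzeros per row (so $(i)$ holds since $k\ge3$), its nonzero entries have modulus $c\ge\param_{\min}$ or $\param_{\min}$ (so $(ii)$ holds), and $-(\Param^{(j,s)}+\Param^{(j,s)*})/2 = cI-\tfrac{s\param_{\min}}{2}(e_2e_j^*+e_je_2^*)$ has least eigenvalue $c-\tfrac{\param_{\min}}{2}=\max\{\rho_{\min},\param_{\min}\}\ge\rho_{\min}$ (so $(iii)$ holds). Let $W=(j,s)$ be uniform over these $|\mathcal W|=2(p-1)$ hypotheses and let $X^T_0$ be the corresponding stationary trajectory of \eqref{eq:BasicModelLin}, $x(t)\sim\normal(0,Q^{(W)})$ with $Q^{(W)}$ the solution of \eqref{eq:Lyapunov}. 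Since $\sign(\Param^{(j,s)})$ has a unique nonzero off–diagonal entry, in row $2$, equal to $s$ in column $j$, recovering $\sign(\Param^{(W)})$ is equivalent to recovering $W$; hence any $\alg$ meeting the success requirement uniformly on $\cA^{(S)}$ yields, for all $T\ge T_{\alg}(\cA^{(S)})$, an estimate $\hat W(X^T_0)$ with $\prob(\hat W\neq W)\le\delta$.

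Next I would bound $I(W;X^T_0)$. By the chain rule $I(W;X^T_0)=I(W;x(0))+I(W;X^T_0\mid x(0))$. Conditionally on $x(0)$ the trajectory is a diffusion in which $W$ enters only through the drift, so the Kadota–Zakai–Ziv identity gives $I(W;X^T_0\mid x(0)) = \tfrac12\int_0^T \E\big\|\Param^{(W)}x(t)-\E[\Param^{(W)}x(t)\mid X^t_0]\big\|^2\,\de t$. The conditional expectation is the minimum mean–square estimate, hence dominated in mean square by the $X^t_0$–measurable estimate $\bar\Param\, x(t)$ with $\bar\Param:=\E_W[\Param^{(W)}]=-cI$; since $\Param^{(W)}x(t)-\bar\Param x(t)=s\param_{\min}x_j(t)\,e_2$ this yields $\E\|\Param^{(W)}x(t)-\E[\Param^{(W)}x(t)\mid X^t_0]\|^2 \le \param_{\min}^2 Q^{(W)}_{jj}$. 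From \eqref{eq:Lyapunov}, $Q^{(W)}_{jj}\le\lambda_{\max}(Q^{(W)})\le\big(2\lambda_{\min}(-(\Param^{(W)}+\Param^{(W)*})/2)\big)^{-1}=\big(2\max\{\rho_{\min},\param_{\min}\}\big)^{-1}$, so
\[
I(W;X^T_0\mid x(0)) \le \frac{T\,\param_{\min}^2}{4\max\{\rho_{\min},\param_{\min}\}}\,.
\]
For the remaining term, $E^{(W)}:=Q^{(W)}-\tfrac1{2c}I$ solves $\Param^{(W)}E^{(W)}+E^{(W)}\Param^{(W)*}=-\tfrac{s\param_{\min}}{2c}(e_2e_j^*+e_je_2^*)$, whence $\|E^{(W)}\|_F = O(\param_{\min}/(c\max\{\rho_{\min},\param_{\min}\}))$; a direct Gaussian–KL estimate of $\normal(0,Q^{(W)})$ against $\normal(0,\tfrac1{2c}I)$ is then $O(c^2\|E^{(W)}\|_F^2)=O\big(\param_{\min}^2/\max\{\rho_{\min},\param_{\min}\}^2\big)=O(1)$, so $I(W;x(0))=O(1)$ uniformly in $p$.

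Finally, Fano's inequality with $|\mathcal W|=2(p-1)$ equiprobable hypotheses and error probability at most $\delta$ gives $I(W;X^T_0)\ge(1-\delta)\log(2(p-1))-\log 2$; combining this with the two bounds above and letting $T\downarrow T_{\alg}(\cA^{(S)})$, for $p$ large the $O(1)$ and $\log 2$ terms are absorbed and
\[
T_{\alg}(\cA^{(S)}) \ge 2(1-\delta)\,\frac{\max\{\rho_{\min},\param_{\min}\}}{\param_{\min}^2}\,\log p = 2(1-\delta)\,\max\Big\{\frac{\rho_{\min}}{\param_{\min}^2},\frac1{\param_{\min}}\Big\}\log p\,,
\]
which is the claimed inequality with $C(k,\delta)=2(1-\delta)$. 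I expect the main obstacle to be the mutual–information bound: one must verify that the $W$–dependence of the \emph{stationary} initial law $\normal(0,Q^{(W)})$ contributes only $O(1)$ — equivalently, that $Q^{(W)}$ equals $\tfrac1{2c}I$ up to a small rank–two perturbation, which is what makes both the admissibility of the comparison estimator $\bar\Param x(t)$ and the bound $I(W;x(0))=O(1)$ go through — while the $\cA^{(S)}$–membership checks, the Fano step and the large‑$p$ asymptotics are routine.
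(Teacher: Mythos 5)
Your proposal is correct, and it reaches the stated bound by a genuinely different route than the paper. The paper proves Theorem \ref{th:linear_lbound_sparse} by drawing $\Param^0$ at random from a rich sub-ensemble of $\cA^{(S)}$ built from uniformly random $k$-regular graphs with sign-flipped entries, then invoking its general symmetric-linear-SDE bound (Lemma \ref{th:mut_inf_bound_linear}, itself derived from the Kadota--Zakai--Ziv identity via Corollary \ref{th:main_simpler_lbound}); the numerator is an entropy of order $pk\log(p/k)$ obtained from counting regular graphs, and the denominator is controlled through Kesten--McKay asymptotics in Lemma \ref{th:random_matrix_calc_sparse}, whose two regimes produce the $\max\{\rho_{\min}/\param_{\min}^2,\,1/\param_{\min}\}$ factor. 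You instead use a minimal explicit packing of $2(p-1)$ single-edge perturbations of $-cI$ with $c=\max\{\rho_{\min}+\param_{\min}/2,\,\tfrac32\param_{\min}\}$, apply Fano directly, bound the KZZ conditional variance by the mean-square error of the $X^t_0$-measurable comparison estimator $-c\,x(t)$ (the same MMSE-domination trick the paper uses, but inside the time integral rather than only at $t=0$), and dispose of $I(W;x(0))$ by a rank-two Gaussian KL estimate; the $\max$ arises simply from $\lambda_{\max}(Q^{(W)})\le (2\max\{\rho_{\min},\param_{\min}\})^{-1}$, which follows from the Lyapunov equation and your choice of $c$. Your route buys a fully self-contained, elementary argument with an explicit, $k$-independent constant, avoiding random matrix theory and graph-counting results entirely; the paper's route buys a reusable general lemma (also feeding the dense and nonlinear lower bounds) and a per-coordinate entropy-versus-variance comparison, at the price of $k$-dependent constants and heavier machinery. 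The two technical points you flag --- admissibility of the comparison estimator and the $O(1)$ control of $I(W;x(0))$ via the rank-two, well-conditioned perturbation $Q^{(W)}-\tfrac1{2c}I$ --- are indeed the only places needing care, and your sketch of them (confinement of the perturbation to the $\{2,j\}$ block, $2c\,\lambda_{\min}(Q^{(W)})$ bounded below) is sound; your stationary-start decomposition $I(W;X^T_0)=I(W;x(0))+I(W;X^T_0\mid x(0))$ also mirrors exactly how the paper itself handles the parameter-dependent initial condition.
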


\begin{remark}
Theorem \ref{th:linear_lbound_sparse}
cannot be used to conclude that,
if $T$ is `small', then $\rls(\lambda)$ always fails
to reconstruct $\Param^0$ from $X^T_0$
regardless of the choice of $\lambda$. What the lower bound
says is that, if $T$ is `small', then, for every choice of
$\lambda = \lambda(X^T_0)$, there exists a
$\Param^0 \in \cA^{(S)}$ that cannot be reconstructed.
The particular $\Param^0$ that cannot be reconstructed, however,
can depend on the choice of $\lambda$.
\end{remark}

These two theorems establish that, under assumptions 1 and 2 above,
the time-complexity of learning the signed support of the diffusion
coefficient for sparse linear SDEs in the class $\cA^{(S)}$
is $\bigo(\log p)$.

Notice that both upper and lower bounds depend
in a non-trivial way on the parameter $\rho_{\min}$.
In order to gain intuition on this quantity,
consider Eq.~(\ref{eq:BasicModelLin}) in absence of the driving term $\de b_i(t)$.
By using the Lyapunov function
$\|x(t)\|_2^2$, it is  easy to verify that $\|x(t)\|_2\le \|x(0)\|_2\,
e^{-\rho_{\rm min}t/2}$. Hence $\rho_{\rm min}^{-1}$ provides a general upper
bound on the mixing time of the diffusion (\ref{eq:BasicModelLin}).
The upper bound is essentially tight if the matrix $\Param^0$ is
symmetric.

Theorems \ref{th:main_cont_time} and
\ref{th:linear_lbound_sparse} can therefore be used to characterize the
dependence of the sample complexity on the mixing time. 
One subtle aspect is that $C_{\rm min}$ and $\rho_{\rm min}$ cannot be
varied independently because of  the Lyapunov equation,
Eq.~(\ref{eq:Lyapunov}).
In order to clarify this dependency, we apply our
general results to the problem of learning the
Laplacian of an undirected graph. 
%
%

\subsection{Learning the laplacian of graphs with bounded degree}
\label{sec:LearningLaplacian} 

Given a simple graph $\cG=(\cV,\cE)$ on 
vertex set $\cV=[p]$, its Laplacian $\Delta^{\cG}$  is
the symmetric $p\times p$ matrix which is equal to the adjacency matrix
of $\cG$ outside the diagonal, and with entries $\Delta^{\cG}_{ii}=
-\dg(i)$ on the diagonal \cite{Chung}. 
(Here $\dg(i)$ denotes the degree of vertex $i$.)

It is well known that $\Delta^{\cG}$ is negative semidefinite, 
with one eigenvalue equal to $0$, whose multiplicity is equal to the
number of connected components of $\cG$.
The matrix $\Param^0 = -m\, I + \Delta^{\cG}$ fits into the setting of
Theorem \ref{th:main_cont_time} for $m>0$. 
The corresponding model \eqref{eq:BasicModelLin} describes the 
over-damped dynamics of a network of masses connected by springs
of unit strength, and connected by a spring of strength $m$ to the origin.

Let $\cG_{\text{bounded}} =
\cG_{\text{bounded}}(k,p)$ be the class of graphs on $p$ nodes
with maximum vertex degree bounded by $k$. Define,
\begin{align}
& \cA^{(L)}(m, p, k) = \nonumber \\
& \quad \{\Param^0 = -m\, I + \Delta^{G} \;|\; m > 0,  G \in \cG_{\text{bounded}}\}	
\end{align}
The following theorem holds regarding the sample-complexity
of learning the signed support
of $\Param^0$ from a stationary trajectory $X^T_0$ of a linear
SDE with $\Param^0 \in \cA^{(L)}$.

\begin{theorem}\label{th:cont_reg_graph_bounded_degree}
If $\Param^0 \in \cA^{(L)}(m, p, k)$
then there exists $\lambda = \lambda(T) > 0$ such that
\begin{equation*}
T_{\rls(\lambda)}(\cA^{(L)}) \leq 4\cdot  10^5 k^2\,
\Big(\frac{k+m}{m}\Big)^{5} (k + m^2)\, 
\log \Big(\frac{4p k}{\delta}\Big)\, ,
\end{equation*}
In particular one can take,
$$\lambda = \sqrt{36(k+m)^2\log(4p/\delta)/(T m^3)}.$$
\end{theorem}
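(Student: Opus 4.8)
The plan is to derive Theorem~\ref{th:cont_reg_graph_bounded_degree} from Theorem~\ref{th:main_cont_time} by exhibiting $\cA^{(L)}(m,p,k)$ as a subclass of a suitable $\cA^{(S)}$ on which Assumptions~1 and~2 hold uniformly with explicit constants. Write $\Param^{0}=-mI+\Delta^{\cG}$, let $B$ be the adjacency matrix of $\cG$ and $D$ its degree matrix, so $\Delta^{\cG}=B-D$, and set $A:=-\Param^{0}=mI-\Delta^{\cG}=mI+D-B$. I would first record the easy membership facts: row $r$ of $\Param^{0}$ has at most $\dg(r)+1\le k+1$ nonzero entries; each nonzero entry is either an off-diagonal adjacency entry equal to $1$ or a diagonal entry of modulus $m+\dg(i)\ge m$; and since $\Param^{0}$ is symmetric with $\Delta^{\cG}\preceq0$ we have $-(\Param^{0}+(\Param^{0})^{*})/2=A\succeq mI$. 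Hence $\cA^{(L)}(m,p,k)\subseteq\cA^{(S)}(k',p,\param_{\min},\rho_{\min})$ with $k'=\max\{k+1,3\}\le 3k$, $\param_{\min}=\min\{1,m\}$, $\rho_{\min}=m$; these depend only on $(m,k)$, so the uniform version of Theorem~\ref{th:main_cont_time} (its second remark) will apply.

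Because $\Param^{0}$ is symmetric, Lyapunov's equation \eqref{eq:Lyapunov} is solved by $Q^{0}=\tfrac12A^{-1}$ (direct substitution). By Gershgorin, $\|\Delta^{\cG}\|_{\rm op}\le2\max_{i}\dg(i)\le2k$, so the eigenvalues of $A$ lie in $[m,m+2k]$ and those of $Q^{0}$ in $[\tfrac1{2(m+2k)},\tfrac1{2m}]$. Since $Q^{0}_{S^{0},S^{0}}$ is a principal submatrix of the positive definite matrix $Q^{0}$, Cauchy interlacing gives $\lambda_{\min}(Q^{0}_{S^{0},S^{0}})\ge\lambda_{\min}(Q^{0})\ge\tfrac1{2(m+2k)}$, so Assumption~1 holds for all $r$ with $C_{\min}=\tfrac1{2(m+2k)}$.

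The crux is Assumption~2 (irrepresentability). Fix $r$, put $S=S^{0}(r)$ and $N=[p]\setminus S$. The block $A_{N,N}=mI+D_{N,N}-B_{N,N}$ is strictly row-diagonally dominant with positive diagonal and nonpositive off-diagonal entries, hence a nonsingular M-matrix, so $A_{N,N}^{-1}\ge0$ entrywise (equivalently, via the convergent Neumann series for $(I-(mI+D_{N,N})^{-1}B_{N,N})^{-1}$). The block-inversion identity $(A^{-1})_{N,S}((A^{-1})_{S,S})^{-1}=-A_{N,N}^{-1}A_{N,S}$, together with $A_{N,S}=-B_{N,S}$ (the diagonal part $mI+D$ contributes nothing off the diagonal), gives $Q^{0}_{N,S}(Q^{0}_{S,S})^{-1}=A_{N,N}^{-1}B_{N,S}\ge0$, so $\III Q^{0}_{N,S}(Q^{0}_{S,S})^{-1}\III_{\infty}$ equals the maximal entry of $A_{N,N}^{-1}B_{N,S}\mathbf 1_{S}$. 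Counting neighbours gives $B_{N,S}\mathbf 1_{S}+B_{N,N}\mathbf 1_{N}=D_{N,N}\mathbf 1_{N}$, i.e.\ $B_{N,S}\mathbf 1_{S}=(A_{N,N}-mI)\mathbf 1_{N}$, hence $A_{N,N}^{-1}B_{N,S}\mathbf 1_{S}=\mathbf 1_{N}-mA_{N,N}^{-1}\mathbf 1_{N}$. Finally, $A_{N,N}\mathbf 1_{N}$ has $i$-th entry $m+|\{j\in S:ij\in\cE\}|\le m+k$, so $A_{N,N}^{-1}\ge0$ forces $A_{N,N}^{-1}\mathbf 1_{N}\ge\tfrac1{m+k}\mathbf 1_{N}$ and therefore $A_{N,N}^{-1}B_{N,S}\mathbf 1_{S}\le\bigl(1-\tfrac{m}{m+k}\bigr)\mathbf 1_{N}$. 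Thus Assumption~2 holds for all $r$ and all $\cG$ with $\alpha=\tfrac{m}{m+k}$. This is the step I expect to be the main obstacle; the rest is the symmetric-Lyapunov trick plus bookkeeping.

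It remains to substitute $k'\le3k$, $\rho_{\min}=m$, $\param_{\min}=\min\{1,m\}$, $C_{\min}=\tfrac1{2(m+2k)}$ and $\alpha=\tfrac{m}{m+k}$ into \eqref{eq:sample_bound_cont} and the accompanying $\lambda$. Since $\tfrac1{\alpha^{2}\rho_{\min}}=\tfrac{(k+m)^{2}}{m^{3}}$, the $\lambda$ of Theorem~\ref{th:main_cont_time} becomes exactly $\sqrt{36(k+m)^{2}\log(4p/\delta)/(Tm^{3})}$; and, using $m+2k\le2(m+k)$ and $\param_{\min}^{-2}\le m^{-2}(1+m^{2})$, the right-hand side of \eqref{eq:sample_bound_cont} is at most a numerical constant times $k^{2}(k+m^{2})(k+m)^{4}m^{-5}\log(4pk/\delta)$, whence the claimed bound after absorbing constants and using $(k+m)^{4}\le(k+m)^{5}$. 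The only care needed here is tracking the exponents of $(k+m)$ and $m$ so as to land on the stated form.
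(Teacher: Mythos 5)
Your proposal is correct and follows the same overall reduction as the paper: embed $\cA^{(L)}$ into a suitable $\cA^{(S)}$, verify Assumptions 1 and 2 uniformly in $r$ and in the graph with explicit constants ($\rho_{\min}=m$, $C_{\min}$ of order $1/(k+m)$, $\alpha=m/(k+m)$), use the symmetric Lyapunov identity $Q^0=-\tfrac12(\Param^0)^{-1}$, and substitute into Theorem \ref{th:main_cont_time}. Where you genuinely differ is in the key irrepresentability step. The paper (Lemma \ref{th:cont_time_incoher_bound}) also reduces $\III Q^0_{(S^0)^C,S^0}(Q^0_{S^0,S^0})^{-1}\III_\infty$ to $\III(\Param^0_{(S^0)^C,(S^0)^C})^{-1}\Param^0_{(S^0)^C,S^0}\III_\infty$ by block inversion, but then expands the inverse in a Neumann series and interprets each row sum as a path generating function, bounded by $\E_G\big((kz)^{T_{i,S^0}}\big)\le k/h$ via the first hitting time of $S^0$ by a random walk; that lemma is stated for the constant-diagonal matrix $-hI+\tilde{\Param}$ on a \emph{connected} graph (Perron--Frobenius is invoked) and then applied with $h=k+m$. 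Your route instead observes that $A_{N,N}=mI+D_{N,N}-B_{N,N}$ is a strictly diagonally dominant M-matrix, so $A_{N,N}^{-1}\ge 0$ entrywise, and combines this with the exact row-sum identity $A_{N,N}^{-1}B_{N,S}\mathbf 1_S=\mathbf 1_N-mA_{N,N}^{-1}\mathbf 1_N$ and the comparison $A_{N,N}^{-1}\mathbf 1_N\ge (m+k)^{-1}\mathbf 1_N$. This is a correct, fully deterministic argument yielding the same $\alpha=m/(m+k)$; it has the advantage of handling the degree-dependent diagonal of $-mI+\Delta^{\cG}$ directly and needing neither connectivity nor Perron--Frobenius, whereas the paper's hitting-time formulation is probabilistic in flavor and could in principle give a sharper bound when walks from $(S^0)^C$ take long to reach $S^0$ (though the paper only uses $T_{i,S^0}\ge 1$, so nothing is lost in the end).

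One bookkeeping caveat on your last step: with the conservative replacements $k'\le 3k$, $\param_{\min}=\min\{1,m\}$ and $C_{\min}=1/(2(m+2k))$, substituting into \eqref{eq:sample_bound_cont} gives the right functional form $k^2(k+m^2)(k+m)^5m^{-5}\log(4pk/\delta)$, but the resulting leading constant exceeds $4\cdot 10^5$ when $k+m$ is small, since the only slack you invoke is a single factor $k+m\ge 1$; so "absorbing constants" does not literally deliver the stated constant. The paper lands on $4\cdot 10^5$ by taking $\param_{\min}=1$, $C_{\min}=1/(4(k+m))$, and treating the row sparsity as $k$ rather than $k+1$. Since the stated constant is itself an artifact of exactly this kind of accounting (and your choices are, if anything, the more scrupulous ones), this is cosmetic rather than a gap in the argument, but you would need to tighten the tracking to reproduce the theorem verbatim.
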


In other words, for $m$ bounded away
from $0$ and $\infty$, regularized least squares regression correctly
reconstructs the graph $\cG$ from a trajectory of time length 
which is polynomial in the degree and logarithmic in the graph size.

Using this theorem we can write the following corollary that helps compare the bounds
obtained in Theorems \ref{th:main_cont_time} and \ref{th:linear_lbound_sparse} above.

\begin{corollary}\label{th:laplace_large_m}
Assume the same setting as in Theorem
\ref{th:cont_reg_graph_bounded_degree}.
There exist constants $\lambda = \lambda(T)$,
$C_1=C_1(k,\delta)$ and $C_2=C_2(k,\delta)$
such that, for all $p$ large enough,
\begin{align*}
m< k & \;\;\Rightarrow \;\;\;\;\;C_1  \log p \le T_{\rls(\lambda)}(\cA^{(L)}) \le  C_2 m^{-5} \log p, \\
m\ge k  &\;\;\Rightarrow\;\; C_1 m \log p \le T_{\rls(\lambda)}(\cA^{(L)}) \le  C_2 m^2 \log
p\, .
\end{align*}
In addition, the lower-bounds hold regardless of the choice of $\lambda$.
\end{corollary}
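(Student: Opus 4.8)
The plan is to derive Corollary~\ref{th:laplace_large_m} as a direct consequence of Theorem~\ref{th:cont_reg_graph_bounded_degree} (for the upper bounds) and Theorem~\ref{th:linear_lbound_sparse} (for the lower bounds), by carefully instantiating the various structural parameters that appear in those statements for matrices of the special form $\Param^0 = -m\,I + \Delta^{\cG}$ with $\cG \in \cG_{\text{bounded}}(k,p)$. First I would verify that $\cA^{(L)}(m,p,k) \subseteq \cA^{(S)}(k',p,\param_{\min},\rho_{\min})$ for appropriate choices: each row of $\Param^0$ has at most $k+1$ nonzeros (the degree plus the diagonal), so we may take the sparsity parameter to be on the order of $k$; the nonzero off-diagonal entries equal $1$ and the diagonal entries equal $-(m+\dg(i))$, so $\param_{\min}$ can be taken to be $\min\{1, m\}$ (indeed $\ge \min\{1,m\}$); and since $\Delta^{\cG}$ is negative semidefinite, $-(\Param^0 + (\Param^0)^*)/2 = m I - \Delta^{\cG} \succeq m I$, so $\rho_{\min} = m$ works. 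This lets me feed $\cA^{(L)}$ into the lower bound Theorem~\ref{th:linear_lbound_sparse}.

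Next, for the lower bound I would substitute $\rho_{\min} = m$ and $\param_{\min} \asymp \min\{1,m\}$ into \eqref{eq:lower_bound_linear_model}, which gives $T_{\alg}(\cA^{(L)}) \ge C(k,\delta)\max\{\rho_{\min}/\param_{\min}^2,\, 1/\param_{\min}\}\log p$. When $m < k$ — and in particular one should check the subcase $m<1$ where $\param_{\min}\asymp m$ and the subcase $1\le m < k$ where $\param_{\min}\asymp 1$ — the maximum is bounded below by a constant (depending on $k$) times $\log p$, yielding $C_1\log p$. When $m\ge k \ge 3 > 1$, we have $\param_{\min}\asymp 1$, so $\max\{\rho_{\min}/\param_{\min}^2, 1/\param_{\min}\} \asymp m$, giving the lower bound $C_1 m\log p$. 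I must be slightly careful that the class over which the sup in Theorem~\ref{th:linear_lbound_sparse} is taken contains enough matrices of Laplacian type to make the construction in that theorem's proof go through; assuming that is so (the hard instances used there can be realized as bounded-degree Laplacians shifted by $-mI$), the lower bounds follow, and they visibly do not depend on $\lambda$ since the lower bound is information-theoretic and holds for every estimator.

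For the upper bounds I would simply read off Theorem~\ref{th:cont_reg_graph_bounded_degree}, which gives $T_{\rls(\lambda)}(\cA^{(L)}) \le 4\cdot 10^5 k^2 ((k+m)/m)^5 (k+m^2)\log(4pk/\delta)$. In the regime $m < k$, one has $(k+m)/m \le 2k/m$, so $((k+m)/m)^5 \le 32 k^5 m^{-5}$, and $k+m^2 \le k + k^2 \le 2k^2$; absorbing all factors of $k$ and the $\log(4k/\delta)$ contribution into the constant $C_2=C_2(k,\delta)$ leaves $T_{\rls(\lambda)}(\cA^{(L)}) \le C_2 m^{-5}\log p$. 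In the regime $m \ge k$, one has $(k+m)/m \le 2$ so $((k+m)/m)^5 \le 32$, and $k+m^2 \le m^2 + m^2 = 2m^2$ (using $k \le m \le m^2$ since $m\ge k\ge 3$), so the bound becomes $T_{\rls(\lambda)}(\cA^{(L)}) \le C_2 m^2 \log p$ after absorbing constants and the $k$-dependent logarithmic term. Combining with the lower bounds from the previous paragraph gives the two displayed chains of inequalities.

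The main obstacle is not any single calculation — each step is elementary — but rather the bookkeeping of ensuring consistency across the two theorems: confirming that the hard instance family underlying Theorem~\ref{th:linear_lbound_sparse} genuinely lies inside $\cA^{(L)}$ (so that the lower bound applies to this smaller class), and pinning down exactly how $\param_{\min}$ and $\rho_{\min}$ translate for the shifted-Laplacian family across the transition at $m=1$ hidden inside the case $m<k$. I would flag that the constants $C_1,C_2$ depend on $k$ through these absorptions, which is consistent with the corollary's statement, and note that the $\log(4pk/\delta)$ versus $\log p$ discrepancy is harmless for $p$ large since $\log(4pk/\delta) = \log p + O_{k,\delta}(1) \le 2\log p$ eventually.
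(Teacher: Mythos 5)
Your proposal is correct and follows essentially the same route as the paper, whose entire proof of this corollary is the single remark that it ``follows immediately'' from Theorem~\ref{th:cont_reg_graph_bounded_degree} (upper bounds) and Theorem~\ref{th:linear_lbound_sparse} (lower bounds); your parameter bookkeeping ($\rho_{\min}=m$, $\param_{\min}\asymp\min\{1,m\}$, the two regimes in $m$) just fills in the arithmetic the paper leaves implicit. The caveat you flag --- that the hard instances behind Theorem~\ref{th:linear_lbound_sparse} (sign-flipped $k$-regular adjacency matrices) are not literally shifted Laplacians, so the lower bound over $\cA^{(S)}$ does not formally transfer to the smaller class $\cA^{(L)}$ without adapting that construction --- is a real subtlety, but it is one the paper itself glosses over in exactly the same way.
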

\begin{proof}
The proof of this corollary follows immediately from
Theorem \ref{th:cont_reg_graph_bounded_degree} and Theorem \ref{th:linear_lbound_sparse}.
\end{proof}

Notice that the upper bound on $T_{\rls}$ presents a non-trivial behavior
in $m$. It diverges both at large $m$, and at small $m$. The reasons
of these behaviors are different. For small $m$, the mixing time of the
diffusion (which is proportional to $1/m$) gets large, and hence a large
time is necessary to accumulate information about
$\Param^0$. Vice-versa for large $m$, $\Theta^0$ gets close to $-m\, I$
and hence it depends weakly on the graph structure. 

Notice that the lower bound also diverges as $m\to\infty$, hence
confirming the above picture. On the other hand, the behavior of $T_{\rls}$
as $m\to 0$ remains an open question since our lower bound stays
bounded in that limit.

%
%

\section{Related work}
\label{sec:Related}

The problem of estimating the parameters
of a diffusion plays a central role in several applied domains,
examples being econometrics, chemistry and system biology.

In the first context, diffusions are used to model the evolution 
of price indices \cite{PhilipsYu}. While the most elementary process
is the (geometric) Brownian motion \cite{Bachelier,BlackScholes},
a number of parametric families have been introduced to account for
nonlinearities. The number of parameters is usually small and
parameter estimation is addressed via maximum
likelihood (ML). We refer to \cite{brown1975asymptotic,Basawa,Kutoyants} for proofs of consistency and
asymptotic normality of the ML estimator.
Much of the recent research has focused on dealing with the challenges
posed by the fact that the diffusion is sampled at discrete intervals,
and the transition probabilities cannot be computed in closed form. A
short list of contributions on this problem includes
\cite{lo1986maximum,Dacunha,PedersenFirst,Sahalia}. 
In particular, asymptotically consistent methods based on approximate
transition probabilities exist, see for instance \cite{PedersenSecond,ChangChen}.
Nonparametric estimation of the
drift coefficient has been studied as well
\cite{Fan,Spokoiny,Dalalyan}.

However, all of these works focus on the low-dimensional
setting: the vector of parameters to be estimated is $p$-dimensional,
and the diffusion is observed for a time $T\to\infty$. 
Hence there is little overlap with the present work. 
In particular, simple ML estimators are not viable in the
high-dimensional setting.
At the same time, it would be interesting to address the problems posed by discrete sampling and
non-parametric estimation in the high-dimensional setting as well.

Applications to chemistry and system biology have been mentioned 
in Section \ref{sec:Intro}. A large variety of chemical reactions are 
modeled by diffusions with suitably parametrized drift terms \cite{Gillespie,Higham}. 
Of particular interest here are special classes of drift coefficients,
for instance those exhibiting time-scale separation \cite{Stuart} or
gradients of a potential \cite{RemarksDrift}. \cite{oates2012network} use
regularized least squares to learn SDEs and from them recover both intracellular
and intercellular biological networks. In this work, several regularizations are
studied, including $\ell$-1 regularization, but no guarantees are proved. In a
different work, \cite{valdes2005estimating}, the same method is applied to study
the functional connectivity of the brain. As with the econometrics applications, these works have focused on low-dimensional diffusions.

Technically, our work fits on recent developments in learning
high-dimensional graphical models.
The typical setting assumes that the data are \emph{independent and identically distributed} (i.i.d.) samples from a
high-dimensional Gaussian distribution with sparse inverse covariance.
The underlying graph structure (the support of the inverse covariance)
is estimated using convex regularizations that promote sparsity. 
Well known examples include the \emph{graphical} LASSO \cite{friedman2008sparse}
and the pseudo-likelihood method of \cite{Buhlmann}.
In the context of binary pairwise graphical models, similar methods
were developed in \cite{wainwright2007high}.

More closely related to our paper is the work reported in \cite{nowak2012}. 
It proposes an algorithm to learn the interference graph in a wireless network
from passive measurements of the traffic.
The paper is concerned with the number of samples required in order
to recover the interference graph correctly.
Both information theoretic lower bounds and upper bounds using a practical
algorithm are provided.
The model used in this work is a time-evolving discrete time model and the algorithm
 is domain specialized. 
In contrast, the emphasis of our work is on the continuous time models and
indeed a significant portion of our effort is dedicated to obtaining the right
scaling in this scenario.
Furthermore, the algorithms analyzed  in these two works are completely
different.

To the best of our knowledge the present work is the first one moving beyond the
assumption of independent samples from a continuous time diffusion process when
dealing with the sample complexity of learning the structure of the underlying graph. While we extend ideas and methods from this literature, dealing with dependent samples raises new mathematical challenges.

Our methods  build on the work on $\ell_1$-regularized least squares, and its variants
\cite{tibshirani_lasso, donoho2006nearsol, donoho2006sol, zhang2009ssp, wainwright2009sth}.
The most closely related results are the one concerning high-dimensional 
consistency for support recovery \cite{Buhlmann,wainwright2007high,zhao}. 
Our proof for our upper bound follows indeed the approach developed in these papers,
with two important challenges. First, the design matrix in our case is produced by a stochastic diffusion, and it does not necessarily
satisfy the irrepresentability conditions used by these works. Second,
the observations are not independent  and therefore elementary concentration 
inequalities are not sufficient.

Most of these proofs build on the technique of \cite{zhao}.
A naive adaptation to the present case 
allows to prove some performance guarantee  
for the discrete-time setting. However the resulting 
bounds are not uniform as the sampling interval  $\eta$ tends to $0$ for $n\eta = T$ fixed.
In particular, they do not allow to prove an analogous of
our continuous time result, Theorem \ref{th:main_cont_time}.
A large part of our effort is devoted to proving more accurate 
probability estimates that capture the correct scaling for small
$\eta$.

Finally, the related topic of learning graphical models for
autoregressive processes was studied recently in 
\cite{haufe2009sparse,Songsiri1,Songsiri2}. These papers propose a convex relaxation
that is different from the one studied in this paper, without however
establishing high-dimensional consistency for  
model selection.

Preliminary report of our work were presented at NIPS 2010 \cite{bento2010learning}
and ISIT 2011 \cite{Bento_IT}. Subsequent work by Bolstad, Van Veen
and Nowak \cite{Bolstad} establishes high-dimensional consistency for
estimating autoregressive models through a related approach. These
guarantees are non-uniform in the sampling rate $\eta$.
The work of \cite{loh2011high} provides upper bounds on the error of regularized least square when observations are not independent. Although bounding the error of $\rls$ is related to our problem of support recovery, in the context of learning SDEs, the conditions under which their result holds are never reduced or related to properties of the dynamics of the SDE alone. In addition, it is unclear whether their conditions hold uniformly with the sampling rate $\eta$( the results presented only apply directly to discrete time). The more recent work of \cite{ibrahimi2012efficient} relates to ours by showing that, under suitable conditions, sparse linear quadratic systems can be estimated and adaptively controlled with few observations. Finally, \cite{no2013minimax} provides a framework for filtering $X^T_0$ which could be used to estimate $\Param^0$. It is an interesting open problem to investigate how an estimator obtained from their framework compares to ours.

%
%

\section{Numerical illustrations of the main theoretical results} \label{sec:num_resul_main_th}

In this section we illustrate our main results on
synthetic data.  These numerical results agree
with our observations in Theorems \ref{th:main_cont_time}, \ref{th:linear_lbound_sparse} and 
\ref{th:cont_reg_graph_bounded_degree} 
that the time-complexity for learning linear sparse SDEs
scales logarithmically with the number 
of nodes in the network $p$, given a constant maximum degree.
They also agree with the implication of Theorem \ref{th:main_discrete} that
the time-complexity is roughly independent of the sampling rate,
assuming that we are in the regime of small $\eta$. Or, in other words,
that our reconstruction guarantees are uniform in the sampling rate
for small $\eta$.
%
%
\begin{figure}[H]
\centering
\includegraphics[width = .39 \textwidth]{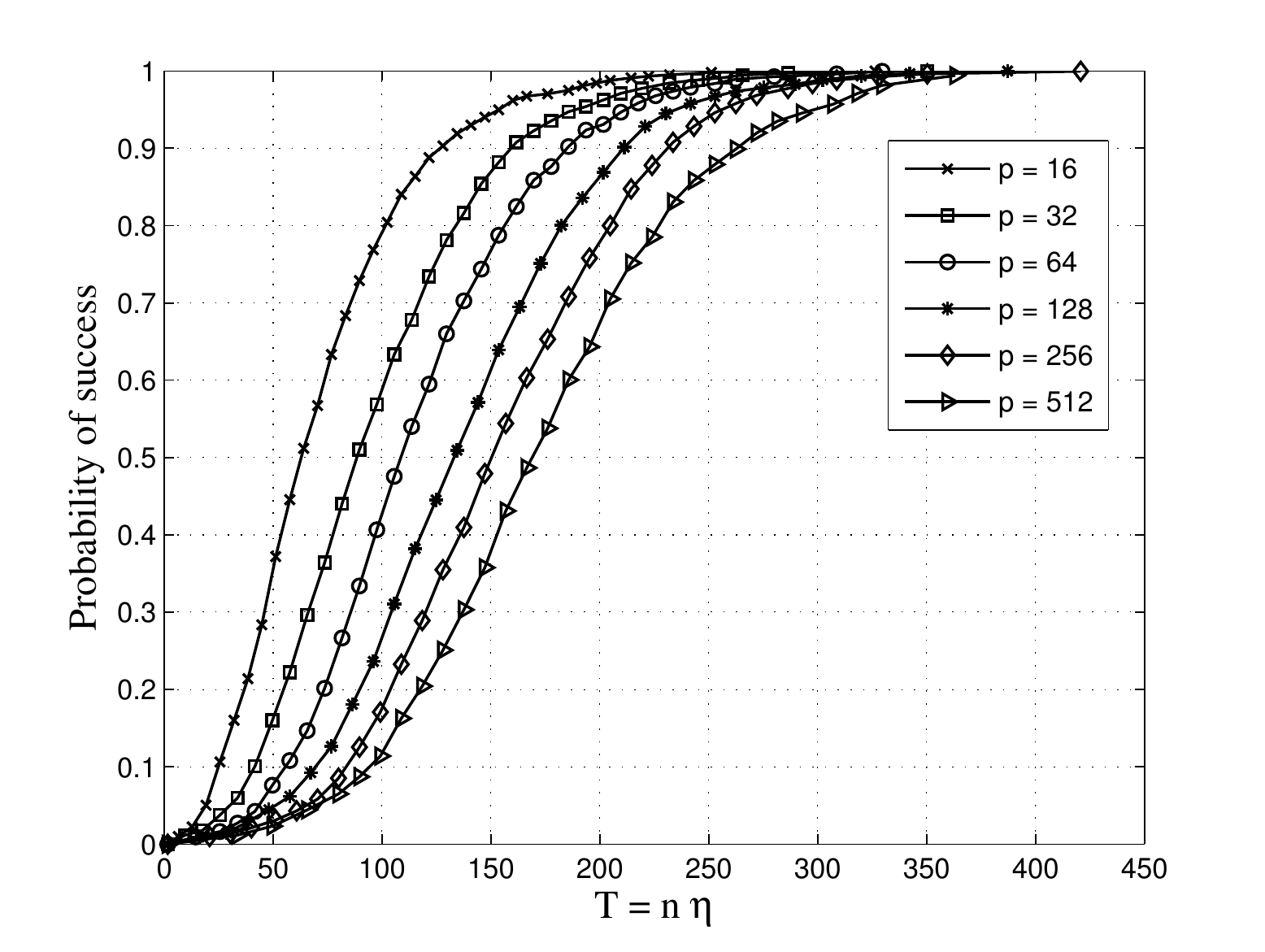}\\
\includegraphics[width =  .39 \textwidth]{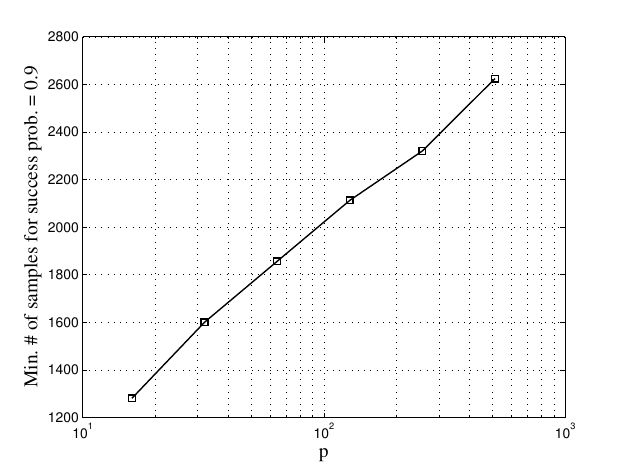}
\caption{(top) Probability of success vs. length of the observation interval $n\eta$. (bottom) Sample complexity for 90\% probability of success vs. p.}
\label{fig:psVsN}
\end{figure}
%
%

Note that, in order to obtain numerical values for the time-complexity
that do not depend on $\lambda$, we use a definition for
sample-complexity and time-complexity that is slightly different
than the one used when stating our main results. 

We start by analyzing the performance of $\rls$ for the discrete analogue of \eqref{eq:BasicModelLin} (See
equation \eqref{eq:DiscreteTimeModel} in Section \ref{sec:extensions}).
Our results are summarized in Figures \ref{fig:psVsN} and \ref{fig:psVsNeta_eta2}. First, we generate data as follows. We draw  $\tilde{\Param}^0$ as a random sparse matrix in $\{0,1\}^{p \times p}$ with elements chosen 
independently at random with $\prob(\param^0_{ij}=1) = k/p$, $k=5$, and form
$\Param^0 = -7 \mathbb{I} + \tilde{\Param}^0$
\footnote{For $p$ large, the SDE generated is stable with high-probability.}.
Second, a sample path $X^n_0\equiv\{x(t): 0\le t\le n \}$ is obtained from Eq.~\eqref{eq:DiscreteTimeModel}.
Finally, we choose an $r \in [p]$ uniformly at random and 
solve the regularized least squares problem 
\footnote{For discrete-time SDEs, the cost function
is given explicitly in Eq.~\eqref{eq:discr_reg_prob}.}
for a different number of observations $n$
and different values of $\lambda$. We record a $1$ or a $0$ if
the correct signed support of $\Param^0_r$ is recovered or not.
For every value of $n$ and $\lambda$,
the probability of successful
recovery is then estimated by taking the average
of these errors over all realizations of $\Param^0$, $X^n_0$ and $r$.
Finally, for each fixed $n$, we take the maximum over $\lambda$ of
these probability of success.
The top plot in Figure \ref{fig:psVsN} depicts the probability of success vs. $n\eta$ 
for $\eta =0.1$ and different values of $p$. Each curve is obtained using
$2^{11}$ instances, and each instance is generated using a new random 
matrix $\Param^0$.
In addition, from this plot of $n$ vs. probability of success, we generate
the bottom plot in Figure \ref{fig:psVsN}: sample-complexity vs. $p$. 
To be explicit, the definition of sample-complexity in use is
%
%
\begin{align} \label{eq:sample_complex_def_for_num_res}
N_{\rls}(\cA) = \inf \{ & n_0 \in \naturals_0 : 
\sup_{\lambda>0} \hat{\E} \{ \hat{\prob}_{\Param^0,n} \{\rls(\lambda) = \sign(\Param^0) \} \} \nonumber \\
& \qquad \geq 1- \delta \text{ for all } n \geq n_0\},
\end{align}
where we choose a probability of success of $\delta =0.9$.
Above, $\hat{\E}$ represents empirical expectation over $\Param^0$
and $\hat{\prob}$ empirical probability over $X^n_0$, and,
$\cA$ is the class of all matrices that can be generated
by the random procedure described before.
In agreement with Theorem \ref{th:cont_reg_graph_bounded_degree},
the curve shows the logarithmic scaling of the sample-complexity with $p$. 

In Figure \ref{fig:psVsNeta_eta2} we turn to the continuous-time model 
(\ref{eq:BasicModelLin}). Trajectories are generated by `discretizing'
this stochastic differential equation with step $\eta'$ much smaller
than the sampling rate $\eta$. 
We draw random matrices $\Param^0$ as above and plot the probability of success 
for $p=16$, $k=4$ and different values of $\eta$,
as a function of $T$. We used $2^{11}$ instances for each curve.
The time-complexity in use for these plots is
the continuous-time analog of \eqref{eq:sample_complex_def_for_num_res}.
Again in agreement with Theorem \ref{th:main_discrete},
for a fixed observation interval $T$, the probability of success 
converges to some limiting value as $\eta\to 0$.

\begin{figure}[H]
\centering
\includegraphics[width = .39 \textwidth]{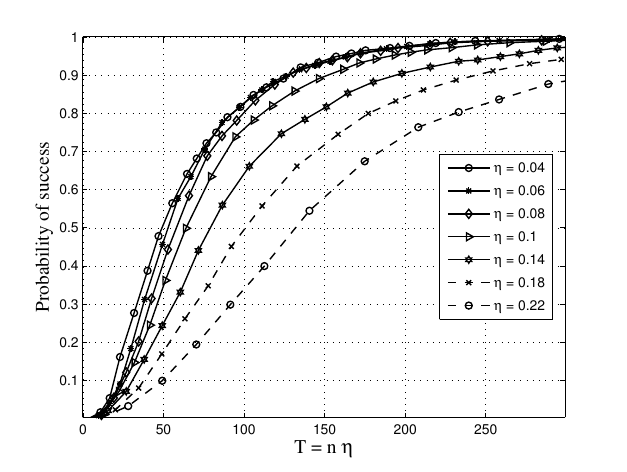}\\
\includegraphics[width = .39 \textwidth]{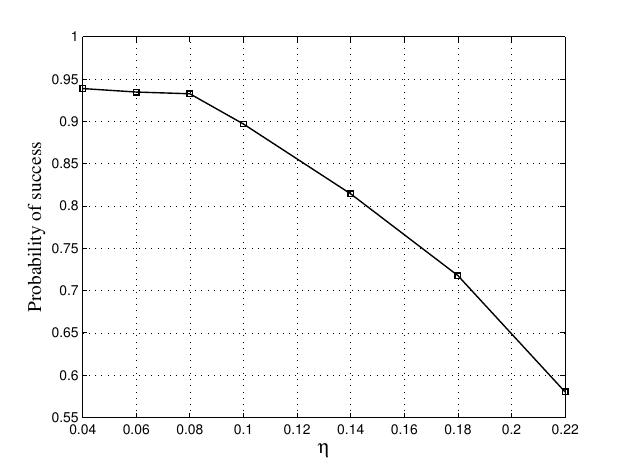}
\caption{(top) Probability of success vs. length of the observation interval $n\eta$ for different values of $\eta$. (bottom) Probability of success vs. $\eta$ for a fixed length of the observation interval, ($n\eta = 150$) . The process is generated for a small value of $\eta$ and sampled at different rates.}
\label{fig:psVsNeta_eta2}
\end{figure}

%
%

\section{Extensions}
\label{sec:extensions}

In this section we present some extensions to our previous results.
We begin by presenting an analogous theorem of Theorem \ref{th:main_cont_time}
for the case of a discrete time system. This is an important result in
itself and also constitutes the basis for
the proof of Theorem \ref{th:main_cont_time}.
In fact, Theorem \ref{th:main_cont_time} is
proved by letting $\eta \rightarrow 0$ in the result below.
We then present a general lower bound on the time-complexity
of learning continuous stochastic differential equations. Using
this result, lower bounds for the time-complexity of
linear SDEs with dense matrices $\Param^0$ and non-linear SDEs are derived.

\subsection{Discrete-time model}
\label{sec:DiscreteResults}
The problem of learning stochastic differential equations in discrete time
is important in itself and also because it relates to the problem of learning
a continuous-time stochastic differential equation from discretely sampling its
continuous trajectory. Focusing on continuous-time dynamics allowed us to
obtain the elegant statements of Section \ref{sec:SparseLinear}. However, much
of the theoretical analysis concerning the regularized least square algorithm
is in fact devoted to the analysis of the following discrete-time dynamics, with parameter $\eta>0$:
\begin{eqnarray}
x(t) = x(t-1)+\eta \Param^0 x(t-1) + \, w(t), \;\;\;\;\; t \in \naturals_0\, .
\label{eq:DiscreteTimeModel}
\end{eqnarray}
Here $x(t)\in \reals^p$ is the vector collecting the dynamical
variables, $\Param^0  \in\reals^{p\times p}$ specifies the dynamics as 
above, and $\{w(t)\}_{t\ge 0}$ is a sequence of i.i.d.
normal vectors with covariance $\eta\, I_{p\times p}$ (i.e.
with independent components of variance $\eta$).
We assume that $n+1$ consecutive samples are
given, $X^n_0 \equiv \{x(t):  0 \leq t \leq t \}$,
and ask under which conditions 
regularized least squares reconstructs the signed support of $\Param^0$. 

The parameter $\eta$ has the meaning of a time-step size. The 
continuous-time model (\ref{eq:BasicModelLin}) is recovered,
in a sense made precise below, by letting $\eta\to 0$. Indeed,
for this discrete time model, we prove reconstruction guarantees
that are uniform in this
limit as long as the product $n \eta$ (which corresponds to
the time interval $T$ in the Section \ref{sec:SparseLinear} ) is kept constant.
For a formal statement we refer to Theorem \ref{th:main_discrete}.
Theorem \ref{th:main_cont_time} is indeed proved by carefully 
controlling this limit. The mathematical challenge in this problem
is related to the fundamental fact that the samples $\{x(t)\}_{0\le t\le n}$
are dependent (and strongly dependent as $\eta\to 0$).

Discrete time models of the form (\ref{eq:DiscreteTimeModel}) can arise 
either because the system under study evolves by discrete steps,
or because we are sub-sampling a continuous time system
modeled as in Eq.~(\ref{eq:BasicModel}). Notice that in the latter case 
the matrices $\Param^0$  appearing in Eq.~(\ref{eq:DiscreteTimeModel})
and (\ref{eq:BasicModel}) coincide only to the zeroth order in $\eta$.
Neglecting this technical complication, the uniformity of 
our reconstruction guarantees as $\eta\to 0$ has an appealing
interpretation already mentioned above. Whenever the samples spacing
is not too large, the time-complexity (i.e. the product $n\eta$) 
is roughly independent of the spacing itself.

Consider a system evolving in discrete time according to 
the model (\ref{eq:DiscreteTimeModel}),
and let $X^n_0$ be the observed portion of the trajectory.
The $r^{\text{th}}$ row of $\Param^0$, $\Param^0_r$,
is estimated by solving the following 
convex optimization problem
\begin{equation}
\underset{\Param_r\in\reals^p}{\rm minimize}\;\;\;
 \cL(\Param_r;X^n_0) + \lambda \| \Param_r\|_1\, ,\label{eq:discr_reg_prob}
\end{equation}
where the log-likelihood function $\cL(\Param_r;X^n_0)$ is defined as 
\begin{equation}
\frac{1}{2\eta^2 n}\, \sum_{t=0}^{n-1}
\left\{x_r(t+1)-x_r(t)-\eta\, \< \Param_r, x(t)\>\right\}^2\, .
\label{eq:opt_prob}
\end{equation}
Apart from an additive constant, the $\eta\to 0$ limit of this cost 
function can be shown to coincide with the cost function 
in the continuous time case, cf. Eq.~(\ref{eq:ContCost}). Indeed the proof
of Theorem \ref{th:main_cont_time} will amount to a more precise version
of this statement.
Furthermore, $\cL(\Param_r;X^n_0)$ is easily seen to be the log-likelihood of
$\Param_r$ within model (\ref{eq:DiscreteTimeModel}).

Let us introduce the class of sparse matrices $\cA'^{(S)}$ as being exactly equal to
the class  $\cA^{(S)}$ introduced in Section \ref{sec:SparseLinear} but with
condition $(iii)$ replaced by
\begin{equation}
\frac{ 1-\sigma_{\max} (I+\eta\, \Param^0) }{\eta} \geq D > 0
\end{equation}

If $\Param^0 \in \cA'^{(S)}$ then, under the model (\ref{eq:DiscreteTimeModel}), $x(t)$ has
a unique stationary measure which is Gaussian with covariance $Q^0$ determined
by the following modified Lyapunov equation
\begin{equation}
\Param^0 Q^0 +  Q^0 (\Param^0)^* + \eta \Param^0 Q^0 (\Param^0)^* +I = 0 \, .
\label{eq:Lyapunov2}
\end{equation}
It will be clear from the context whether $\Param^0$ (or $Q^0$) refers to the dynamics matrix (or covariance of the stationary distribution) from the continuous or discrete time system.

The following theorem establishes the conditions under which $\ell_1$-regularized least squares recovers $\sign (\Theta^0)$ with high probability. 
\begin{theorem}\label{th:main_discrete}
Assume that $\Param^0 \in \cA'^{(S)}(k,p,\param_{\min},D)$
and that $\Param^0_r$ satisfies assumptions 1 and 2 of Section \ref{sec:SparseLinear}.
Let $X^n_0$ be a stationary trajectory
distributed according to the linear model \eqref{eq:DiscreteTimeModel}. If
\begin{equation} 
n \eta > \frac{10^4  k^2 (k D^{-2} + \param_{\min}^{-2})}{\alpha^{2} D C_{\min}^{2}}
\, \log\Big(\frac{4p k}{\delta}\Big)\, ,\label{eq:sample_bound_disc}
\end{equation}
then there exists $\lambda = \lambda(n \eta) > 0$ such that 
$\ell_1$-regularized least squares recovers the 
signed support of $\Param^0_r$ with probability larger than $1-\delta$.
This is achieved by taking
$\lambda = \sqrt{(36\,\, \log(4p/\delta))/(D\alpha^2 n\eta)}$.
\end{theorem}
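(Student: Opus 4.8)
The plan is to follow the primal-dual witness (PDW) construction of \cite{zhao, Buhlmann, wainwright2007high}, adapted to the dependent-sample setting of the discrete-time model \eqref{eq:DiscreteTimeModel}. Fix a row $r$ and write $S = S^0(r)$, $k = |S|$. The KKT conditions for the convex program \eqref{eq:discr_reg_prob} are $\nabla \cL(\Param_r; X^n_0) + \lambda z = 0$ with $z \in \partial \|\Param_r\|_1$. First I would set $\widehat{\Param}_{r,S^C} = 0$ by fiat, solve the restricted problem on coordinates in $S$ to get $\widehat{\Param}_{r,S}$ and the associated subgradient $z_S = \sign(\Param^0_{r,S})$, and then \emph{check} that the induced $z_{S^C}$ obtained from the stationarity equation on $S^C$ satisfies $\|z_{S^C}\|_\infty < 1$ strictly; if so, the candidate is the unique optimum and it has the correct signed support. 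The whole proof then reduces to showing that, under \eqref{eq:sample_bound_disc}, two events hold with probability at least $1-\delta$: (a) strict dual feasibility $\|z_{S^C}\|_\infty < 1$, and (b) $\ell_\infty$ sign-consistency on $S$, i.e. $\min_{j \in S} |\widehat{\param}_{rj}| > 0$, which follows from $\|\widehat{\Param}_{r,S} - \Param^0_{r,S}\|_\infty < \param_{\min}$ together with assumption $(ii)$.

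Concretely, write the (rescaled) empirical Gram matrix $\widehat{Q} = \tfrac{1}{n}\sum_{t=0}^{n-1} x(t) x(t)^*$ and the empirical noise-covariate cross term $\widehat{W}_r = \tfrac{1}{\eta n}\sum_{t=0}^{n-1} x(t) w_r(t+1)$ (up to the $O(\eta)$ correction coming from the discretization of the drift, which must be tracked carefully). Standard algebra gives $\widehat{\Param}_{r,S} - \Param^0_{r,S} = \widehat{Q}_{SS}^{-1}(\widehat{W}_{r,S} - \lambda z_S)$ and $z_{S^C} = \widehat{Q}_{S^C S}\widehat{Q}_{SS}^{-1} z_S + \tfrac{1}{\lambda}\big(\widehat{W}_{r,S^C} - \widehat{Q}_{S^C S}\widehat{Q}_{SS}^{-1}\widehat{W}_{r,S}\big)$. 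Next I would replace the empirical objects by their population counterparts $Q^0$ (the stationary covariance solving \eqref{eq:Lyapunov2}) plus a deviation. Using Assumption 2, the population part of $z_{S^C}$ has $\ell_\infty$ norm at most $1-\alpha$; using Assumption 1 ($\lambda_{\min}(Q^0_{SS}) \ge C_{\min}$) and $(iii')$ (which controls mixing through $D$), the deviation terms $\|\widehat{Q}_{SS}^{-1} - (Q^0_{SS})^{-1}\|$, $\|\widehat{Q}_{S^C S} - Q^0_{S^C S}\|_\infty$, $\|\widehat{W}_r\|_\infty$ are all small. Choosing $\lambda = \sqrt{36\log(4p/\delta)/(D\alpha^2 n\eta)}$ balances the $\lambda z_S$ bias against the noise, and the stated threshold on $n\eta$ is exactly what forces the residual $\alpha$-gap to survive all deviations.

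The heart of the argument — and the main obstacle — is establishing the concentration inequalities for $\widehat{Q}$ and $\widehat{W}_r$ with the \emph{correct dependence on $\eta$}, i.e. with bounds that stay uniform as $\eta \to 0$ with $n\eta$ fixed. The samples $\{x(t)\}_{0\le t\le n}$ are strongly correlated (correlation time $\sim 1/(D\eta)$ steps), so a naive union-bound-over-independent-blocks argument would lose a factor diverging as $\eta\to 0$. The fix is to exploit the explicit Gaussian structure: $x(t)$ is a linear functional of the i.i.d.\ Gaussian innovations $\{w(s)\}_{s\le t}$, so $\widehat{Q}$ entries are Gaussian chaoses of order two and $\widehat{W}_r$ entries are bilinear forms in independent Gaussians. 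I would bound these via the Hanson–Wright inequality (or direct moment/Chernoff bounds on quadratic forms), where the relevant quantities are operator and Frobenius norms of matrices built from the transition operator $(I+\eta\Param^0)$; summing the geometric series $\sum_{s\ge 0}(I+\eta\Param^0)^s$ produces factors of order $\rho_{\min}^{-1}$ or $D^{-1}$ rather than $(\eta D)^{-1}$, because the continuous-time integral $\int_0^T$ stays finite. These estimates are precisely the "more accurate probability estimates" the introduction alludes to, and are deferred to the appendix lemmas; assembling them into events (a) and (b) via a union bound over the at most $2pk$ relevant coordinate comparisons yields the $\log(4pk/\delta)$ factor and completes the proof.
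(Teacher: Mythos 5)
Your proposal follows essentially the same route as the paper: a primal--dual witness / KKT sufficient-condition argument in the style of \cite{zhao} (the paper's Proposition \ref{th:cond_to_hold} and Lemmas \ref{th:inco_decomp}, \ref{th:main_cond}), combined with concentration of the Gaussian quadratic forms $\hG$ and $\hQ$ obtained by writing them as $z^* R z$ for i.i.d.\ Gaussian $z$ and bounding the spectral and Frobenius norms of $R$ through geometric series in $(I+\eta\Param^0)$, exactly so that the exponents scale with $n\eta D$ uniformly as $\eta\to 0$, followed by a union bound and the stated choice of $\lambda$. The only cosmetic differences are that the paper uses a direct Bernstein/Chernoff computation rather than citing Hanson--Wright, handles stationarity by a limiting initialization $x(-m)=w(-m)$, and needs no $O(\eta)$ drift correction since the discrete likelihood is exact for model \eqref{eq:DiscreteTimeModel}.
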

In other words the discrete-time sample complexity, $n$, is logarithmic 
in the model dimension, polynomial in the maximum network degree
and inversely proportional to the time spacing between samples.
The last point is particularly important. It enables us
to derive the bound on the continuous-time
sample complexity as the limit $\eta \rightarrow 0$ of
the discrete-time sample complexity. It also confirms our intuition mentioned
in the Introduction: although one can produce an arbitrary large number of
samples by sampling the continuous process with finer resolutions, there
is limited amount of information that can be harnessed from a given time interval $[0,T]$.

\begin{remark}
The form of Theorem \ref{th:main_discrete} is different than that of
Theorem \ref{th:main_cont_time}. In Theorem \ref{th:main_discrete}
we do not compute a bound on
\begin{align*}
& N_{\rls(\lambda)}(\Param^0) \equiv \min \big\{ n_0 > 0: \prob_{\Param^0,n} \{ \sign(\hParam) \\
&\quad  = \sign(\Param^0)\} \geq 1 - \delta \text{ for all } n \geq n_0 \big\},
\end{align*}
the sample-complexity
of reconstructing $\sign(\Param^0)$, but rather a bound
on the sample-complexity of reconstructing
the signed support of a \emph{particular}
row $r$, $\sign(\Param^0_r)$.
Obviously, if assumptions
1 and 2 hold for the same constants
$C_{\min},\alpha > 0$ across $r \in [p]$, then replacing
$\delta$ by $\delta/p$ in \eqref{eq:sample_bound_disc}
allows us to use union
bound and conclude that there exists
$\lambda$ for which 
\begin{equation*}
N_{\rls(\lambda)}(\Param^0) \, \eta \leq
\frac{2 \cdot 10^4  k^2 (k D^{-2}
+ \param_{\min}^{-2})}{\alpha^{2} D C_{\min}^{2}}
\, \log\Big(\frac{4p k}{\delta}\Big)\,.
\end{equation*}
(Notice the factor of $2$). The reason why we present
Theorem \ref{th:main_discrete}
in a different
form is to emphasize the fact that the proofs
for the upper bounds are based on the success
of $\rls$ for reconstructing a particular row $r$.
\end{remark}

%
%

\subsection{General lower bound on time-complexity}
\label{sec:general_lower_bound}

In this section we derive a general lower bound
on the minimum time $T$ required to learn a property $M(\Param^0)$ associated
to $\Param^0$ from a trajectory $X^T_0$ distributed according to the general model
\eqref{eq:BasicModel}. For our problem, $M(\Param^0)$ is the signed-support of $\Param^0$. However, the bound holds in general. This result is used afterwards to derive
lower bounds for the time-complexity of learning linear SDEs
with dense matrices $\Param^0$
(Section \ref{sec:linear_dense_lower_bound}) and for the time-complexity of learning
non-linear SDEs (Section \ref{sec:non_linear_lower_bound}).

The general form of the results in this section, and in the remainder of 
Section \ref{sec:extensions}, is as follow: If $\hM_T(X^T)$,
an estimator of $M(\Param^0)$ based on $X^T$, achieves successful
recovery with probability greater than $1/2$ for every $\Theta^0$ in
a class $\cA$, then $T$ must be greater then a certain value that is dependent
on properties of $\cA$ (cf. Theorems \ref{th:linear_lbound_dense} and \ref{th:non_linear_lbound}).
These results however are a corollary of a more relaxed
result (Theorem \ref{th:main_lbound} and Corollary \ref{th:main_simpler_lbound})
where we only require that the expected rate of miss-estimation
is small when $\Theta^0$ is drawn at random from the ensemble $\cA$.
Clearly, if an estimator performs well over all $\Theta^0 \in \cA$ then
it must also perform well in expectation regardless of the distribution
assumed over $\cA$.

Without loss of generality, in the remainder of Section \ref{sec:general_lower_bound},
the parameter $\Param^0$ is a random variable
chosen with some unknown prior distribution $\prob_{\Param^0}$ (subscript
will be often omitted). Also, in the following theorems we assume that $M(\Param^0)$ can be described by an alphabet $\mathcal{M}$
of finite size $|\mathcal{M}| < \infty$. For example, if $\Param^0 \in \reals^{p \times p}$ and $M(.) = \supp(.)$ then $\mathcal{M}$ can be a set of $2^{p^2}$ symbols, one per possible support of $\Param^0$. If $M(.) = \sign(.)$ then
$|\mathcal{M}| = 3^{p^2}$ symbols suffice to describe all
possible signed-supports of $\Param^0$.

\begin{remark}[\bf Special notation]
In this section we make a small change in our
notation. Outside Section \ref{sec:general_lower_bound},
where $\Param^0$ is a
matrix of real numbers, $\prob_{\Param^0}$ represents
a probability
distribution over $X^T_0$ parametrized by $\Param^0$. In this
section however, subscripts indicate that probabilities
and expectations
are to be taken with respect to the random variable
in the subscript.
Hence, $\prob_{\Param^0}$ is a probability distribution
{\bf \emph{for}} the
random variable $\Param^0$

Unless specified otherwise, $\prob$ and $\E$ denote probability
and expectation with respect to the joint law of $\{x(t)\}_{t\ge 0}$
and $\Param^0$. As mentioned above $X^T_0 \equiv \{ x(t): t \in [0, T] \}$
denotes the trajectory up to time $T$.
Also, we define the variance of a vector-valued random
variable as the sum of the variances over all components.
In particular,
\begin{equation*}
\Var_{\Param^0|X^t_0 } (F(x(t);\Param^0)) = \sum^p_{i = 1}  \Var_{\Param^0|X^t_0 } (F_i(x(t);\Param^0)),
\end{equation*}
where $\Var_{\Param^0|X^t_0 }$ is the variance with
respect to $\Param^0$ conditioned on $X^t_0$.
\end{remark}

The following general lower bound,
is a consequence of an identity between mutual information and the integral
of conditional variance proved by Kadota, Zakai and Ziv \cite{kadotamutinfo} and
a similar result by Duncan \cite{duncan}.
\begin{theorem}\label{th:main_lbound}
Let $X^T_0$ be a trajectory of system \eqref{eq:BasicModel} with initial state $x(0)$ for
a specific realization of the random variables $x(0)$ and $\Param^0$.
Let $\hM_T(X^T_0)$ be an estimator of
$M(\Param^0)$ based on $X^T_0$. If  
$\prob_{x(0),\Param^0,X^T_0}(\hM_T(X^T_0) \neq M(\Param^0) ) <\frac{1}{2}$ then
\begin{equation} \label{eq:main_bound}
T \geq \frac{2 H(M(\Param^0)) - \log (\mathcal{|M|})- 2 I(\Param^0;x(0)) -2}{\frac{1}{T} \int^T_0
  \E_{X^t_0}  \{ \Var_{\Param^0|X^t_0 } (F(x(t);\Param^0)) \} \de t}\, ;
\end{equation}
where $|\mathcal{M}|$ is the size of the alphabet
of $M(\Param^0)$. 
\end{theorem}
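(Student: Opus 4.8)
The plan is to derive \eqref{eq:main_bound} by chaining together the Kadota--Zakai--Ziv/Duncan identity, the chain rule and data-processing inequality for mutual information, and Fano's inequality. Concretely, I will establish the single chain
\begin{align*}
\int^T_0 \E_{X^t_0}\{\Var_{\Param^0|X^t_0}(F(x(t);\Param^0))\}\,\de t
&= 2\, I(\Param^0; X^T_0 \mid x(0)) \\
&= 2\, I(\Param^0; X^T_0) - 2\, I(\Param^0; x(0)) \\
&\ge 2\, I(M(\Param^0); X^T_0) - 2\, I(\Param^0; x(0)) \\
&\ge 2H(M(\Param^0)) - \log|\mathcal{M}| - 2 I(\Param^0; x(0)) - 2 ,
\end{align*}
after which \eqref{eq:main_bound} follows by dividing through by $\tfrac1T\int^T_0(\cdots)\,\de t$; if this quantity vanishes then the numerator of \eqref{eq:main_bound} is non-positive and there is nothing to prove.

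For the first equality I would invoke the Duncan/KZZ identity in its parametric form. Conditioned on the initial state $x(0)$, the observed trajectory $X^T_0$ is exactly the output of the additive white Gaussian noise ``channel'' $\de x(t) = F(x(t);\Param^0)\,\de t + \de b(t)$ whose input is governed by $\Param^0$; the optimal causal estimate of the drift $F(x(t);\Param^0)$ given the past $X^t_0$ is $\E_{\Param^0|X^t_0}F(x(t);\Param^0)$, so the mean-squared causal filtering error at time $t$ equals $\E_{X^t_0}\{\Var_{\Param^0|X^t_0}(F(x(t);\Param^0))\}$, and the identity of \cite{kadotamutinfo,duncan} gives $I(\Param^0; X^T_0 \mid x(0)) = \tfrac12 \int_0^T \E_{X^t_0}\{\Var_{\Param^0|X^t_0}(F(x(t);\Param^0))\}\,\de t$. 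The second equality is the chain rule combined with the fact that $x(0)$ is a component of $X^T_0$, namely $I(\Param^0;X^T_0) = I(\Param^0;x(0)) + I(\Param^0;X^T_0\mid x(0))$. The first inequality is the data-processing inequality applied to the Markov chain $M(\Param^0)\to\Param^0\to X^T_0$, valid because $M(\Param^0)$ is a deterministic function of $\Param^0$.

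The final inequality is where Fano enters. Write $I(M(\Param^0);X^T_0) = H(M(\Param^0)) - H(M(\Param^0)\mid X^T_0)$. Since $\hM_T(X^T_0)$ is a function of $X^T_0$ we have $H(M(\Param^0)\mid X^T_0) \le H(M(\Param^0)\mid \hM_T(X^T_0))$, and Fano's inequality bounds the latter by $H_b(P_e) + P_e\log(|\mathcal{M}|-1)$, where $P_e = \prob(\hM_T(X^T_0)\neq M(\Param^0)) < 1/2$ and $H_b$ is the binary entropy function. Using $H_b\le 1$ and $P_e<1/2$ yields $H(M(\Param^0)\mid X^T_0) < 1 + \tfrac12\log|\mathcal{M}|$, hence $2I(M(\Param^0);X^T_0) > 2H(M(\Param^0)) - \log|\mathcal{M}| - 2$, which closes the chain (with a strict inequality, slightly stronger than stated).

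The genuinely delicate step is the application of the Duncan/KZZ identity: one must check the finite-energy hypothesis $\E\int_0^T \|F(x(t);\Param^0)\|^2\,\de t < \infty$ under which it is valid, and one must be careful that the mutual information it computes is between $\Param^0$ (with $x(0)$ held fixed) and the whole path $X^T_0$ — equivalently, that by Girsanov the path likelihood depends on $\Param^0$ only through the drift process $\{F(x(t);\Param^0)\}$, so that the relevant posterior is precisely $\prob_{\Param^0\mid X^t_0}$ and the causal MMSE estimate coincides with the conditional mean. Everything else is routine bookkeeping with standard information-theoretic inequalities.
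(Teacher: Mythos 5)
Your proposal is correct and follows essentially the same route as the paper's proof: the Kadota--Zakai--Ziv/Duncan identity applied conditionally on $x(0)$, the chain rule $I(\Param^0;X^T_0)=I(\Param^0;x(0))+I(\Param^0;X^T_0\mid x(0))$ to handle the possible dependence of the initial state on $\Param^0$, then data processing and Fano's inequality with $P_e<1/2$. The only differences are cosmetic bookkeeping (you apply Fano through $H(M(\Param^0)\mid X^T_0)\le H(M(\Param^0)\mid \hM_T(X^T_0))$, while the paper states it directly for $I(\hM_T(X^T_0);M(\Param^0))$), so the arguments coincide in substance.
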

\begin{proof}
Equation \eqref{eq:BasicModel} can be regarded as
describing a white Gaussian channel with feedback
where $\Param^0$ denotes the message to be
transmitted. For this scenario, Kadota et al. \cite{kadotamutinfo}
give the following identity for the mutual information between
$X^T_0$ and $\Param^0$ when the initial condition is $x(0) = 0$,
\begin{equation*}
I(X^T_0;\Param^0) = \frac{1}{2} \int^T_0 \E_{X^t_0}  \{ \Var_{\Param^0|X^t_0 } (F(x(t);\Param^0)) \} \de t.
\end{equation*}
For the general case where $x(0)$ might depend
on $\Param^0$ (if, for example, $x(0)$ is the stationary
state of the system) we can write
$I(X^T_0;\Param^0) = I(x(0);\Param^0) + I(X^T_0;\Param^0 | x(0))$
and apply the previous identity to $I(X^T_0;\Param^0 | x(0))$.
Taking into account that $I(\hM_T(X^T_0));M(\Param^0)) \leq I(X^T_0;\Param^0)$
and making use of Fano's inequality $I(\hM_T(X^T_0));M(\Param^0))
\geq H(M(\Param^0)) - 1 - (\prob(\hM_T(X^T_0) \neq M(\Param^0) ))\log(|\mathcal{M}|)$ the results follows.
\end{proof}

The bound in Theorem \ref{th:main_lbound} is often too complex to be
evaluated.  Instead, the following corollary provides a more easily
computable bound for the case when $X^T_0$ is a stationary process.
\begin{corollary}\label{th:main_simpler_lbound}
Assume that \eqref{eq:BasicModel} has a stationary distribution
for every realization of $\Param^0$
and let $X^T_0$ be a trajectory following any such stationary distribution for
a specific realization of the random variable $\Param^0$. Let $\hM_T(X^T_0)$ be an estimator of
$M(\Param^0)$ based on $X^T_0$.
If $\prob_{\Param^0,X^T_0}(\hM_T(X^T_0) \neq M(\Param^0) ) <\frac{1}{2}$ then
\begin{equation} \label{eq:main_simpler_bound}
T \geq \frac{2 H(M(\Param^0)) - \log (\mathcal{|M|})- 2 I(\Param^0;x(0)) -2}{ \E_{x(0)}
\{ \Var_{\Param^0|x(0) } (F(x(0);\Param^0)) \}},
\end{equation}
where $|\mathcal{M}|$ is the size of the alphabet of $M(\Theta^0)$.
\end{corollary}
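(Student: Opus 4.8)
The plan is to obtain Corollary \ref{th:main_simpler_lbound} directly from Theorem \ref{th:main_lbound} by showing that, under stationarity, the generally time-dependent integrand in the denominator of \eqref{eq:main_bound} is bounded, uniformly in $t$, by the single quantity $\E_{x(0)}\{\Var_{\Param^0|x(0)}(F(x(0);\Param^0))\}$ appearing in \eqref{eq:main_simpler_bound}. Enlarging the denominator only weakens the lower bound on $T$, so this yields \eqref{eq:main_simpler_bound} whenever the numerator $2H(M(\Param^0))-\log|\mathcal{M}|-2I(\Param^0;x(0))-2$ is nonnegative; when it is negative the asserted bound is vacuous and there is nothing to prove. (I will also note in passing that Theorem \ref{th:main_lbound} already permits $x(0)$ to depend on $\Param^0$ through the $I(\Param^0;x(0))$ term, so choosing $x(0)$ according to the $\Param^0$-dependent stationary measure is legitimate and requires no extra argument.)

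First I would fix $t\ge 0$ and compare conditioning on the whole past $X^t_0$ with conditioning on the single marginal $x(t)$. Writing $Z_t:=F(x(t);\Param^0)$ and using $\sigma(x(t))\subseteq\sigma(X^t_0)$, the law of total variance, applied componentwise and then summed (consistent with the paper's convention for the variance of a vector), gives
\[
\E\big\{\Var_{\Param^0|x(t)}(Z_t)\big\}
= \E\big\{\Var_{\Param^0|X^t_0}(Z_t)\big\}
+ \E\big\{\Var\big(\E[Z_t\mid X^t_0]\,\big|\,x(t)\big)\big\}
\ \ge\ \E\big\{\Var_{\Param^0|X^t_0}(Z_t)\big\}.
\]
The key point is that this holds even though $Z_t$ is itself a function of the conditioning variable $x(t)$: the only fact used is the nesting of $\sigma$-algebras.

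Next I would invoke stationarity. Since $\Param^0$ is a single random variable that does not evolve in time and $\{x(t)\}$ follows a stationary law for each realization of $\Param^0$, the pair $(x(t),\Param^0)$ has the same joint distribution as $(x(0),\Param^0)$ for every $t$. Hence the conditional law of $\Param^0$ given $x(t)$, together with the map $\param\mapsto F(x(t);\param)$, matches the $t=0$ situation, so $\E\{\Var_{\Param^0|x(t)}(F(x(t);\Param^0))\}=\E_{x(0)}\{\Var_{\Param^0|x(0)}(F(x(0);\Param^0))\}$ for all $t$. Combining with the previous display and averaging over $t\in[0,T]$,
\[
\frac{1}{T}\int_0^T \E_{X^t_0}\{\Var_{\Param^0|X^t_0}(F(x(t);\Param^0))\}\,\de t
\ \le\ \E_{x(0)}\{\Var_{\Param^0|x(0)}(F(x(0);\Param^0))\},
\]
and substituting this upper bound for the denominator in \eqref{eq:main_bound} produces \eqref{eq:main_simpler_bound}.

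The step I expect to require the most care — and the only genuine subtlety — is the total-variance comparison above, precisely because $Z_t=F(x(t);\Param^0)$ depends on $x(t)$ rather than being an "external" random variable; one must state explicitly that $\Var(Z\mid\mathcal{G})=\E[\Var(Z\mid\mathcal{F})\mid\mathcal{G}]+\Var(\E[Z\mid\mathcal{F}]\mid\mathcal{G})$ for $\mathcal{G}\subseteq\mathcal{F}$ is a purely measure-theoretic identity imposing no relation between $Z$ and the $\sigma$-algebras, and that its vector form follows by summing over coordinates. Everything else is bookkeeping: passing from Theorem \ref{th:main_lbound} to the corollary, checking that the inequality direction is preserved (monotonicity of $x\mapsto \mathrm{Num}/x$ on $x>0$ when $\mathrm{Num}\ge 0$, and triviality otherwise), and the observation that the stationary initial condition is allowed.
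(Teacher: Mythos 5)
Your proposal is correct and follows essentially the same route as the paper: bound the integrand in Theorem \ref{th:main_lbound} using the fact that conditioning on the larger $\sigma$-algebra $\sigma(X^t_0)\supseteq\sigma(x(t))$ reduces expected conditional variance (which you justify via the law of total variance), then use stationarity to replace the $t$-dependent quantity by its value at $t=0$. Your extra remarks — the explicit total-variance identity and the observation that the bound is vacuous when the numerator is negative — only make explicit what the paper leaves implicit.
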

\begin{proof}
Since conditioning reduces variance, we have 
\begin{align*}
\E_{X^t_0}  \{
&\Var_{\Param^0|X^t_0 } (F(x(t);\Param^0)) \} \\
& \quad \leq \E_{x(t)}  \{ \Var_{\Param^0|x(t) }
(F(x(t);\Param^0)) \}. 
\end{align*}
Using stationarity, we have 
\begin{align*}
&\E_{x(t)}  \{ \Var_{\Param^0|x(t) } (F(x(t);\Param^0)) \} \\
& \quad =\E_{x(0)}  \{ \Var_{\Param^0|x(0) } (F(x(0);\Param^0)) \}, 
\end{align*}
which simplifies \eqref{eq:main_bound} to \eqref{eq:main_simpler_bound}.
\end{proof}
In the rest of section \ref{sec:extensions}, we apply this lower bound to special
classes of SDEs, namely linear SDEs with dense matrices $\Param^0$ and
non-linear SDEs. In all of our applications it is to be understood that the 
process $\{x_{t}\}_{t\ge 0}$ is stationary.

%
%

\subsection{Learning dense linear SDEs}
\label{sec:linear_dense_lower_bound}

A different regime of interest in learning the
network of interactions for a linear SDE is the case
of dense matrices. As we shall see shortly, this regime
exhibits fundamentally different behavior in terms
of sample-complexity compared to the regime of sparse matrices. 

Let $\cA^{(D)} \subset \reals^{p\times p}$ be the
set of dense matrices defined as $\Param \in \cA^{(D)} $ if and only if, 
\begin{itemize}
\item[(i)] $ \param_{\min}\leq |\param_{ij}| p^{1/2} \leq \param_{\max} \forall i,j: \param_{ij} \neq 0$,
\item[(ii)] $\lambda_{\min}(-(\Param+\Param^*)/2) \ge \rho_{\min} > 0$.
\end{itemize}

The following theorem provides a lower bound for learning the signed
support of models from the class $\cA^{(D)}$ from stationary
trajectories $X^T_0$ of \eqref{eq:BasicModelLin}.%

\begin{theorem}\label{th:linear_lbound_dense}
Let $\alg = \alg(X^T_0)$ be an estimator of $\sign(\Param^0)$.
There is a constant $C(\delta)$
such that, for all $p$ large enough,
\begin{align}
T_{\alg}(\cA^{(D)}) \geq  C(\delta) \max \Big
\{\frac{\rho_{\min}}{\param^2_{\min}},
\frac{1}{\param_{\min}} \Big\} p.
\label{eq:lbound_dense}
\end{align}
\end{theorem}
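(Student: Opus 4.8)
The plan is to apply the general information‑theoretic lower bound of Corollary~\ref{th:main_simpler_lbound} to a carefully chosen prior supported on a subclass of $\cA^{(D)}$. As explained before Theorem~\ref{th:main_lbound}, an estimator that recovers $\sign(\Param^0)$ with probability $\ge 1-\delta>1/2$ for \emph{every} $\Param^0\in\cA^{(D)}$ has error probability $<1/2$ under any prior supported in $\cA^{(D)}$, so it suffices to exhibit one such prior for which \eqref{eq:main_simpler_bound} yields the claim (replacing the estimator, if needed, by its projection onto the alphabet used below, which cannot increase its error). I would take $\Param^0=-\mu I+\Delta$, where $\mu=\rho_{\min}+3\param_{\min}$ and $\Delta$ is the symmetric, zero‑diagonal matrix with entries $\Delta_{ij}=\Delta_{ji}=\varepsilon_{ij}\,\param_{\min}\,p^{-1/2}$ for $i<j$, the $\varepsilon_{ij}$ being i.i.d.\ uniform signs. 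The off‑diagonal entries then satisfy condition~(i), while the diagonal entries $-\mu$ are of order $\max\{\rho_{\min},\param_{\min}\}$, which we take to be admissible in $\cA^{(D)}$ (the target bound does not involve $\param_{\max}$). For this linear model $F(x(0);\Param^0)=\Param^0x(0)$, the stationary covariance solves \eqref{eq:Lyapunov} and, $\Param^0$ being symmetric, equals $Q^0=-\tfrac12(\Param^0)^{-1}=\tfrac12(\mu I-\Delta)^{-1}$; moreover $M(\Param^0)=\sign(\Param^0)$ is a bijective function of the $\binom{p}{2}$ signs $\{\varepsilon_{ij}\}$.

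First I would verify membership in $\cA^{(D)}$ and control the entropy. Condition~(ii) is $\mu-\lambda_{\max}(\Delta)\ge\rho_{\min}$, i.e.\ $\lambda_{\max}(\Delta)\le 3\param_{\min}$ and, by symmetry, $\lambda_{\min}(\Delta)\ge-3\param_{\min}$. Since $\Delta$ is a Wigner‑type matrix with entries of size $\param_{\min}p^{-1/2}$, one has $\lambda_{\max}(\Delta)\to 2\param_{\min}$ and the spectral edge concentrates, so the event $E=\{\lambda_{\max}(\Delta)\le\tfrac52\param_{\min},\ \lambda_{\min}(\Delta)\ge-\tfrac52\param_{\min}\}$ has probability $1-e^{-cp}$. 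Conditioning the prior on $E$ places it inside $\cA^{(D)}$, and because $\prob(E^c)=e^{-cp}$ this costs only $O(1)$ bits. Taking the alphabet $\mathcal M$ to be the (at most $2^{\binom{p}{2}}$) signed supports that occur, uniformity of the conditioned prior gives $2H(M(\Param^0))-\log|\mathcal M| = H(M(\Param^0)) \ge \binom{p}{2}\log 2 - O(1)=\Omega(p^2)$.

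Next I would bound the other two terms of \eqref{eq:main_simpler_bound}. For the mutual information: on $E$ the eigenvalues of $Q^0=\tfrac12(\mu I-\Delta)^{-1}$ lie in a fixed multiplicative band around $\tfrac1{2\mu}I$, so $\log\det Q^0=-p\log(2\mu)-\sum_i\log(\mu-\lambda_i(\Delta))$ varies by only $O(p)$ across the ensemble; combining this with the maximum‑entropy bound $h(x(0))\le\tfrac12\log\big((2\pi e)^p\det\E[Q^0]\big)$ gives $I(\Param^0;x(0))=h(x(0))-\tfrac12\E[\log((2\pi e)^p\det Q^0)]=O(p)$. For the denominator, the variance‑versus‑MSE inequality with the fixed (given $x(0)$) predictor $-\mu\,x(0)$ yields $\Var_{\Param^0|x(0)}(\Param^0x(0))\le\E_{\Param^0|x(0)}\|\Delta x(0)\|^2$, whence by the tower rule $\E_{x(0)}\{\Var_{\Param^0|x(0)}(\Param^0x(0))\}\le\E_{\Param^0}\tr(\Delta Q^0\Delta^*)=\E_{\Param^0}\tr(\Delta^2Q^0)$. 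Since $\Delta$ and $Q^0$ are simultaneously diagonalizable, $\tr(\Delta^2Q^0)=\tfrac12\sum_i\lambda_i(\Delta)^2/(\mu-\lambda_i(\Delta))$, and on $E$ one has $\mu-\lambda_i(\Delta)\ge\max\{\rho_{\min},\tfrac12\param_{\min}\}\ge\tfrac12\max\{\rho_{\min},\param_{\min}\}$, so using $\sum_i\lambda_i(\Delta)^2=\|\Delta\|_F^2=(p-1)\param_{\min}^2$,
\begin{equation*}
\E_{x(0)}\{\Var_{\Param^0|x(0)}(\Param^0x(0))\}\ \le\ \frac{\|\Delta\|_F^2}{\max\{\rho_{\min},\param_{\min}\}}\ \le\ p\,\min\Big\{\frac{\param_{\min}^2}{\rho_{\min}},\ \param_{\min}\Big\}.
\end{equation*}

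Plugging the three estimates into \eqref{eq:main_simpler_bound} would give, for $p$ large,
\[
T\ \ge\ \frac{\Omega(p^2)}{p\,\min\{\param_{\min}^2/\rho_{\min},\ \param_{\min}\}}\ =\ C(\delta)\,\max\Big\{\frac{\rho_{\min}}{\param_{\min}^2},\ \frac1{\param_{\min}}\Big\}\,p ,
\]
which is \eqref{eq:lbound_dense}. The step I expect to be the real obstacle is the construction of the ensemble: it must be dense and obey the entry‑magnitude bound~(i) while keeping its symmetric part negative definite with margin $\rho_{\min}$ (condition~(ii)), which forces $\mu=\Theta(\max\{\rho_{\min},\param_{\min}\})$ — precisely the source of the $\max$ in the bound. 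A second subtlety is that the stationary law of $x(0)$ depends on $\Param^0$, so the conditional‑variance denominator cannot be read off a fixed Gaussian; exploiting the symmetry of the ensemble, so that $Q^0=-\tfrac12(\Param^0)^{-1}$, is what lets us replace the crude estimate $\|Q^0\|_2\le 1/(2\rho_{\min})$ — which would yield only the $\rho_{\min}/\param_{\min}^2$ term — by the per‑eigenvalue bound above that also captures the $1/\param_{\min}$ term.
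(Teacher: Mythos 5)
Your proposal is correct, and it follows the paper's overall skeleton — reduce to a Bayesian problem over a random symmetric ensemble inside $\cA^{(D)}$, apply the Kadota--Zakai--Ziv-based bound of Corollary \ref{th:main_simpler_lbound}, use $Q^0=-\tfrac12(\Param^0)^{-1}$ for symmetric drifts, and compare an $\Omega(p^2)$ entropy against an $O(p)$ mutual information and an $O\big(p\min\{\param_{\min}^2/\rho_{\min},\param_{\min}\}\big)$ conditional variance — but the execution of the random-matrix estimates is genuinely different and more elementary. The paper routes the denominator through Lemma \ref{th:mut_inf_bound_linear} (with the linear-MMSE predictor $B=(\E\{\Param^0 Q^0\})(\E\{Q^0\})^{-1}$) and then evaluates $\tfrac1p\big(\tr\E\{-\Param^0\}-\tr\{(\E\{-(\Param^0)^{-1}\})^{-1}\}\big)$ asymptotically via Wigner's semicircle law, Jensen's inequality, and an explicit Stieltjes-transform formula (Lemma \ref{th:random_matrix_calc_dense}), with a random diagonal shift $\gamma(\tParam^0)$ enforcing stability; you instead fix the shift $\mu=\rho_{\min}+3\param_{\min}$, condition on the spectral-edge event $E$ (standard Füredi--Komlós/Talagrand-type concentration, which you assert but should cite), and then bound everything deterministically on $E$: the crude predictor $-\mu x(0)$ plus $\|\Delta\|_F^2=(p-1)\param_{\min}^2$ and the per-eigenvalue bound $\mu-\lambda_i(\Delta)\ge\tfrac12\max\{\rho_{\min},\param_{\min}\}$ replace the semicircle-law integrals, and an eigenvalue band for $Q^0$ replaces the paper's $\log\det\le\tr$ plus $\beta$-optimization step for $I(\Param^0;x(0))$. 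What each buys: the paper's computation is sharper in constants and reuses the same machinery as Theorem \ref{th:linear_lbound_sparse}, whereas yours is non-asymptotic, avoids free limits entirely, and makes transparent where the $\max\{\rho_{\min}/\param_{\min}^2,1/\param_{\min}\}$ scaling comes from. Two small remarks: your handling of the diagonal entries (magnitude $\Theta(\max\{\rho_{\min},\param_{\min}\})$, not $O(p^{-1/2})$) is exactly as implicit in the paper's own ensemble, so it is not a gap relative to the intended reading of condition (i); and your reduction of the worst-case estimator to the Bayesian setting with the restricted alphabet (projection onto $\mathcal M$, which cannot increase the error) is the same step the paper leaves implicit.
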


The sample-complexity bound is similar to the one in Theorem \ref{th:linear_lbound_sparse}
but the scaling with $p$ has now changed from $\bigo(\log p)$ to $O(p)$.
The lack of structure in $\Param^0$ requires exponentially more
samples for successful reconstruction.
The proof is deferred to Section \ref{sec:appendix_lbound_dense_proof} in the appendix.

\begin{remark} \label{sec:remark_dense_linear_sdes}
Although the above theorem only gives a lower bound
on $T_{\rls(\lambda)}(\cA^{(D)})$, it is not hard to upper
bound $T_{\rls(\lambda)}(\cA^{(D)})$ for linear dense systems of SDEs
and certain values of $\lambda$.
In particular, it is not hard to upper
bound $T_{\rls(\lambda = 0)}(\cA^{(D)})$ by $\bigo(p)$. 	
This can be done in two steps.
First, taking $\lambda = 0$, one can
compute a closed form solution for $\rls$.
This solution is an unbiased estimator
involving sums of dependent Gaussian random variables.
Second, one can prove concentrations bounds similar
to the ones proved for Theorem \ref{th:main_cont_time},
and compute the trajectory length $T$
required to guarantee that
\begin{equation} \label{eq:remark_dense_sdes}
\|\hParam - \Param^0\|_{\infty} \leq \param_{\min}/2
\end{equation}
with probability greater than $1 - \delta$.
This value of $T$ is an upper bound on
$T_{\rls(0)}(\cA^{(D)})$ since
\eqref{eq:remark_dense_sdes} plus a simple thresholding
decision rule
\footnote{If $|\hat{\param}_{ij}|< \theta_{\min}/2$ declare
$0$, if $\hat{\param}_{ij} < -\theta_{\min}/2$ declare $-1$
and if $\hat{\param}_{ij} > \theta_{\min}/2$ declare $+1$.} is
enough to guarantee that
\begin{equation}		
\sign(\hParam) = \sign(\Param^0).
\end{equation}
We start Section \ref{sec:num_ext} with a numerical
illustration of this behaviour.
\end{remark}

%
%

\subsection{Learning (sparse) non-Linear SDEs}
\label{sec:non_linear_lower_bound}

We now assume that the observed samples $X^T_0$ come from
a stochastic process driven by a general SDE of the form \eqref{eq:BasicModel}.

In what follows, $v_i$ denotes the $i^{th}$ component of vector $v$.
For example, $x_3(2)$ is the $3^{th}$ component of the
vector $x(t)$ at time $t = 2$. $JF(\,\cdot\, ;\Param^0)\in\reals^{p\times p}$
denotes the Jacobian of the 
function $F(\,\cdot\, ;\Param^0)$.

For fixed $L$, $B$ and $D\ge 0$, 
define the class of functions 
$\cA^{(N)}=\cA^{(N)}(L,B,D)$ by letting $F(x;\Param) \in \cA^{(N)}$ if and only if
\begin{itemize}
\item[(i)] the support of $JF(x;\Param)$ has at most $k$ non-zero entries for every $x$,
\item[(ii)] the SDE \eqref{eq:BasicModel} admits a
stationary solution with covariance matrix, $Q$,
satisfying $\lambda_{\min}(Q) \geq L$,
\item[(iii)] $\Var_{x(0)|\Param}(x_i(0)) \leq B \; \forall i$,
\item[(iv)] $|\partial F_i(x;\Param) / \partial x_j |\leq D$ for all
$x\in\reals^p$ $i,j\in [p]$.
\end{itemize}

For simplicity we write $F(x;\Param^0) \in \cA^{(N)}$ by $\Param^0 \in \cA^{(N)}$.

Note that our objective is different than before.
Given $\Param^0 \in \cA^{(N)}$,
we are interested in recovering the smallest
support, $M(\Param^0)$,
for which $\supp(JF(x;\Param^0))
\subseteq M(\Param^0) \; \forall x$.
Hence, we consider the following modified definition
of sample-complexity
that can be applied to learning SDEs of the
form \eqref{eq:BasicModel},
\begin{align*}
& T_{\alg}(\cA^{(N)}) =
\sup_{\Param^0 \in \cA^{(N)}} \inf \big\{T_0 \in \reals^+: \prob_{\Param^0,T}\{\alg(X^T_0) \\
& \qquad = M(\Param^0)\}  \geq 1- \delta \text{ for all } T \geq T_0 \big\}.
\end{align*}

The following theorem holds for learning $M(\Param^0)$, $\Param^0 \in \cA^{(N)}$,
from a stationary trajectory of \eqref{eq:BasicModel}.

\begin{theorem} \label{th:non_linear_lbound}
Let $\alg = \alg(X^T_0)$ be an estimator of $M(\Param^0)$.
Then  
\begin{equation}
T_{\alg}(\cA^{(D)}) \geq \frac{k \log p/k
- \log B/L }{C + 2 k^2 D^2 B},
\end{equation}
where
$C = \max_{i \in [p]}
\E \{F_i(\E_{x(0)|\Param^0}(x(0));\Param^0)\}$.
\end{theorem}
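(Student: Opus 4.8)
The plan is to derive this from the general lower bound, Corollary~\ref{th:main_simpler_lbound}, by carefully bounding each quantity appearing in \eqref{eq:main_simpler_bound} using the defining properties of the class $\cA^{(N)}$. First I would choose a convenient prior distribution $\prob_{\Param^0}$ over a sub-family of $\cA^{(N)}$: the natural choice is to pick, for each of $p$ ``blocks'' of coordinates, one out of roughly $p/k$ possible sparse interaction patterns uniformly at random, so that $M(\Param^0)$ is uniformly distributed over an alphabet of size $(p/k)^{k}$ (up to constants), giving $H(M(\Param^0)) = k\log(p/k)$ and $\log|\mathcal{M}| = H(M(\Param^0))$; then $2H - \log|\mathcal{M}| = H(M(\Param^0)) = k\log(p/k)$, which supplies the numerator. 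One must also check that every realization in the support of this prior actually lies in $\cA^{(N)}(L,B,D)$, i.e.\ that conditions (i)--(iv) can be met simultaneously by the construction.

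Next I would control the two subtracted terms and the denominator. For $I(\Param^0;x(0))$: since the stationary law of $x(0)$ given $\Param^0$ has variance at most $B$ per coordinate (condition (iii)), and $x(0)$ has $p$ coordinates, a Gaussian/maximum-entropy bound gives $I(\Param^0;x(0)) \le \sum_i \big(h(x_i(0)) - h(x_i(0)\mid\Param^0)\big)$, which one bounds by something like $\tfrac{p}{2}\log(B/(\text{something}))$; more to the point, the statement's numerator reads $k\log(p/k) - \log(B/L)$, so I expect the intended bound is $2I(\Param^0;x(0)) + 2 \le \log(B/L)$ after absorbing constants, using that conditioning on $\Param^0$ reduces the differential entropy of $x(0)$ by at least the $\log$-volume term controlled by $\lambda_{\min}(Q)\ge L$ versus the unconditional variance bound $B$. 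For the denominator $\E_{x(0)}\{\Var_{\Param^0|x(0)}(F(x(0);\Param^0))\}$: write $F(x(0);\Param^0) = F(c;\Param^0) + \big(F(x(0);\Param^0)-F(c;\Param^0)\big)$ with $c = \E_{x(0)|\Param^0}(x(0))$; bound the fluctuation term via the mean-value theorem using $|\partial F_i/\partial x_j|\le D$ (condition (iv)) together with the sparsity of the Jacobian (condition (i), at most $k$ nonzero entries per row effectively), so that $\Var$ of that piece is $\le k^2 D^2 B$ by Cauchy--Schwarz over the $\le k$ active coordinates each of variance $\le B$; the remaining term $\Var_{\Param^0|x(0)}(F(c;\Param^0))$ is bounded by $\E_{\Param^0}\{F_i(c;\Param^0)^2\}$ summed appropriately, which is where the constant $C = \max_i \E\{F_i(\E_{x(0)|\Param^0}(x(0));\Param^0)\}$ enters. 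Collecting these gives the denominator $\le C + 2k^2D^2B$ and hence the claimed bound.

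The main obstacle I anticipate is the construction of the prior ensemble: one needs a family of nonlinear drifts $F(\,\cdot\,;\Param^0)$ that (a) genuinely differ in their minimal Jacobian support $M(\Param^0)$ across an alphabet of size $(p/k)^{k}$, (b) all satisfy the uniform Jacobian bound (iv) and sparsity (i), and (c) all admit stationary distributions with $\lambda_{\min}(Q)\ge L$ and per-coordinate variance $\le B$ — the existence and uniformity of the stationary law is delicate for general nonlinear SDEs, and coupling this to the entropy computation for $M(\Param^0)$ requires the supports to be ``information-theoretically distinguishable'' in the precise sense demanded by Fano. A secondary subtlety is making the decomposition of $\Var_{\Param^0|x(0)}(F(x(0);\Param^0))$ rigorous: conditioning is on $x(0)$ while the expansion point $c$ depends on $\Param^0$, so one should instead expand around a fixed $\Param^0$-independent point or handle the cross terms carefully, and the factor $k^2$ (rather than $k$) in the bound suggests using the operator-norm/$\ell_1$-row structure of the sparse Jacobian rather than a naive coordinatewise bound. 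Once the ensemble is pinned down, the rest is a routine substitution into Corollary~\ref{th:main_simpler_lbound} and a relabeling of constants.
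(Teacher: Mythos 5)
Your overall strategy (evaluate Corollary \ref{th:main_simpler_lbound} on a uniform prior over a sub-family of $\cA^{(N)}$ indexed by Jacobian supports, bound $I(\Param^0;x(0))$ via a max-entropy argument using $B$ and $L$, and bound the variance term via the Lipschitz constant $D$ and row-sparsity $k$) is the same as the paper's, but your bookkeeping of the factor $p$ breaks the argument. The paper's ensemble contains \emph{one function for every} support of a $p\times p$ matrix with at most $k$ nonzeros per row, so $H(M(\Param^0))=\log|\mathcal{M}|\ge p\,k\log(p/k)$; the mutual information is bounded by $I(\Param^0;x(0))\le \tfrac{p}{2}\log(B/L)$ (Gaussian max-entropy for $h(x(0))$ with per-coordinate variance $\le B$, and $\log|Q^0|\ge p\log L$ from $\lambda_{\min}(Q^0)\ge L$); and the denominator, which by the paper's convention is the variance \emph{summed over all $p$ components}, is bounded by $p\,(C+2k^2D^2B)$. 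The stated bound then follows because the common factor $p$ cancels between numerator and denominator. Your ensemble, with $H(M(\Param^0))=k\log(p/k)$ (alphabet of size $(p/k)^k$), cannot work: the denominator still scales with $p$, and your hoped-for estimate $2I(\Param^0;x(0))+2\le\log(B/L)$ is unjustified --- nothing prevents $I$ from being comparable to $H(\Param^0)$ for a small ensemble, in which case your numerator can even be negative and the bound is vacuous. The appearance of $k\log(p/k)-\log(B/L)$ in the theorem is the \emph{per-$p$ normalized} quantity, not the raw entropy of the prior.

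Two smaller points. First, the conditioning subtlety you flag in the denominator (expansion point $c=\E_{x(0)|\Param^0}(x(0))$ depends on $\Param^0$ while the variance is conditioned on $x(0)$) is sidestepped in the paper by the crude but sufficient step $\E_{x(0)}\{\Var_{\Param^0|x(0)}(F(x(0);\Param^0))\}\le \E\{\|F(x(0);\Param^0)\|_2^2\}$, i.e. variance $\le$ second moment; the expansion around $c$ is then done inside the unconditional second moment, componentwise, giving $\E\{F_i(x(0);\Param^0)^2\}\le 2F_i(c;\Param^0)^2+2k^2D^2B$ --- no cross-terms to handle. Second, the construction of the ensemble, which you identify as the main obstacle, is not where the paper spends effort: one simply fixes an arbitrary representative of $\cA^{(N)}$ for each admissible support (e.g. suitable linear drifts already meet (i)--(iv)), and the entropy count for the number of such supports is quoted from \cite{bresler2008reconstruction}. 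With the ensemble and normalization corrected as above, the rest of your outline does match the paper's proof.
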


\begin{remark}
The assumption that $F$ is Lipschitz is not very restrictive as it is a sufficient condition commonly used to guarantee existence and uniqueness of a solution of the SDE \eqref{eq:BasicModel} with finite expected energy, \cite{oksendal2003stochastic}.
\end{remark}

%
%

\section{Numerical illustration of some extensions} \label{sec:num_ext}

In Theorem \ref{th:main_cont_time} we describe a set
of conditions under which $\rls$ successfully
reconstructs the dynamics of a sparse system of linear SDEs.
These sufficient conditions naturally raise several
questions: do they hold when the entries
of $\Param^0$ are related to
some real world problem? Can $\rls$ perform well
even when these conditions do not hold?
Even more generally, can $\rls$ learn SDEs in a
scenario completly different than the one described
in Theorem \ref{th:main_cont_time}, e.g. in the presense of non-linearities?
Answering these questions is non-trivial
because it is hard to get a clear intuition of what assumptions
like Assumption 1 and Assumption 2 of Section \ref{sec:SparseLinear}
mean in practice. The same
difficulty arises with analogous results on the high-dimensional
consistency of the LASSO method\cite{wainwright2007high, zhao}.

In this section we provide concrete illustrations of the performance of $\rls$ when applied to scenarios
for which our upper bounds on time-complexity do not hold.
We compare its performance to the performance
predicted by our lower bounds, that hold in greater generality,
and observe that, in these examples, they match. Finally, although not the focus of this paper, our last example also illustrates the effect of $\lambda$ on the performance of
$\rls(\lambda)$.

%
%

\subsection{Time-complexity for dense linear SDEs}

First we study the time-complexity for learning dense linear SDEs.
We repeat the experiment of Section \ref{sec:num_resul_main_th}
for continuous-time linear SDEs but with a
dense matrix $\Param^0$ that we generate as  follows.
%
%
\begin{figure}[H]
\begin{center}
\includegraphics[trim=3cm 7.5cm 3.5cm 8cm,clip,width = .31 \textwidth]{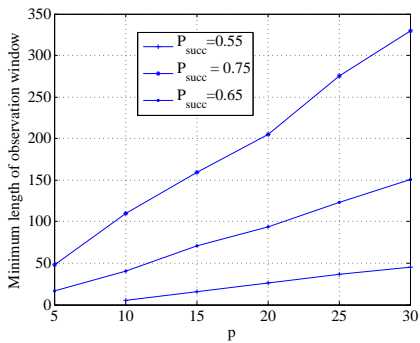}
\caption{Time-complexity to learn linear dense
SDEs as a function of the dimension of $\Param^0$ for different
probabilities of success. }
\end{center}
\label{fig:samp_comple_linear_dense}
\end{figure}

Generate $\tilde{\Param}^0 \in \reals^{p \times p}$
by sampling each entry from a standard Gaussian distribution; set every entry to zero with probability $1/2$; set  $\Param^0 = -(\rho + \sqrt{2} )\mathbb{I} + p^{-1/2} \tilde{\Param}^0$.
For large $p$, almost all such generated
matrices lead to a stable SDE.

The time-complexity curves we obtain are depicted in the
figure above. Just like pointed out in Remark \ref{sec:remark_dense_linear_sdes},
we observe that the time-complexity scales linearly with $p$, 
compared to $\bigo( \log p)$ for sparse matrices. The slope
is larger for larger probabilities of success.

\subsection{Time-complexity for non-linear SDEs}

The example in this subsection illustrates
that the time-complexity of $\rls$ scales
like $\bigo(\log p)$ even when learning sparse non-linear systems
of SDEs.

Consider a system of $p$ masses  in
$\reals^d$ connected by damped springs
that is vibrating under the influence of white-noise.
These can be thought of, for example, as points 
on a vibrating object whose physical structure we
are trying to reconstruct from the measured amplitude
of vibrations over time on a grid of points at its surface.

Let $C^0$ be the corresponding adjacency matrix,
i.e. $C^0_{ij}=1$ if and only if masses $i$  and $j$ are connected, and
$D^0_{ij}$  be the rest length of the spring $(i,j)$. Assuming unit
masses, unit rest lengths and unit elastic coefficients,
the dynamics of this system in
the presence of external noisy forces 
can be modeled by the following damped Newton equations
\begin{align} 
&\de v(t) = -\gamma v(t) \de t-\nabla U(q(t))\, \de t  +\sigma\, \de b(t), \label{eq:general_ms_dynamics}\\
&\de q(t) = v(t) \de t\, ,\label{eq:general_ms_dynamics_2}\\
& U(q) \equiv
\frac{1}{2}\sum_{(i,j)}C^0_{ij}(\|q_i-q_j\|-D^0_{ij})^2\, ,\nonumber
\end{align}
where $q(t) =
(q_1(t),\dots,q_p(t))$, $v(t) = (v_1(t),\dots,v_p(t))$, and
$q_i(t), v_i(t) \in \reals^d$ denote the position and velocity of mass $i$ at time $t$. 
This system of SDEs can be written in the form \eqref{eq:BasicModel}
by letting $x(t) = [q(t), v(t)]$ and $\Param^0 =[C^0, D^0]$.
A straightforward calculation shows that 
the drift $F(x(t);\Param^0)$ can be further written as a linear combination
of the following basis of non-linear functions
\begin{align*}
\bF(x(t)) = \Big[& \{v_i(t)\}_{i \in [p]}, \{ \Delta_{ij}(t)\}_{i,j \in [p]}, \\
& \Big\{ \frac{\Delta_{ij}(t)}{\|\Delta_{ij}(t)\|} \Big\}_{i,j \in [p]} \Big],
\end{align*}
where $\Delta_{ij}(t) = q_i(t)-q_j(t)$ and $[p] =
\{1,\dots,p\}$. Hence, the system can be modeled
according to \eqref{eq:BasicModelBasis}.
In many situations, only specific properties of the
parameters are of interest, for instance one might be interested
only in the network structure of the springs. 

We consider the trajectories of three
masses in a two-dimensional network of 36 masses and 90 springs evolving according
to Eq.~(\ref{eq:general_ms_dynamics}) and Eq.~(\ref{eq:general_ms_dynamics_2}).
How long does one need to observe these (and the other masses) trajectories 
in order to learn the structure of the underlying network? 
Notice that the system being considered is non-linear
and hence, a priori, we cannot apply any of our theorems
to guarantee that correct reconstruction will be achieved
for any $T$.
%
%
%
%
Figure \ref{fig:ms_network_reconstrunct} reproduces the network structure 
reconstructed using the ${\sf RLS}$ algorithm described in Sec. \ref{sec:ResContinuumRegularized}
for increasing observation intervals $T$. 
%
%
\begin{figure}[H]
\begin{center}
\includegraphics[scale = 1]{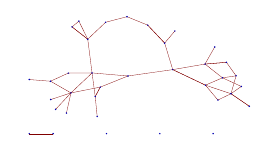}
\includegraphics[scale = 1]{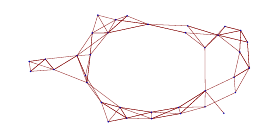}
\includegraphics[scale = 1]{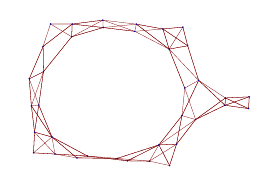}
\includegraphics[scale = 1]{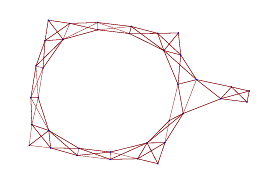}
\includegraphics[scale = 1]{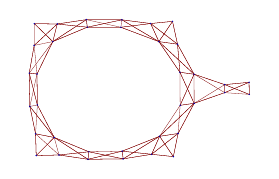}
\caption{From left to right, top to bottom: structures  reconstructed using $\rls$ with 
 observation time $T=500$, $1500$, $2500$, 
$3500$ and $4500$. For $T=4500$ exact reconstruction is achieved.}\label{fig:ms_network_reconstrunct}
\end{center}
\end{figure}
%
%
Despite
the non-linearities, the inferred structure
converges to the true one when $T$ is large enough
\footnote{The data was generated by a simulation of
Newton's equations of motion using an Euler
approximation with discrete time step of size $0.1$s}.

To quantify the efficiency of the regularized least-squares
in learning non-linear SDEs,
we generate multiple spring-mass networks
of sizes $p = 8, 16, 32, 64$ and $128$ and study the
mean minimum length of the observation window required for successful
reconstruction. The spring-mass networks
are sampled uniformly from the ensemble of regular graphs
of vertex degree 4. Like for the previous system,
the data is generated by simulating the dynamics
using an Euler approximation with a time step of $0.1$s.
The noise level, $\sigma$, is set to $0.5$ and the
damping parameter, $\gamma$, is set to $0.1$.

Figure \ref{fig:mass_spring_more_detail}--top
shows the probability
of success versus the length of the observation time window 
for systems of different sizes ($p = 8, 16, 32, 64$ and $128$)
and Figure \ref{fig:mass_spring_more_detail}--bottom
shows the minimum length of observation window for successful
reconstruction of the networks versus their size for
different probabilities of success ($\psucc = 0.1, 0.5$ and $0.9$).
In both pictures, error bars represent $\pm$
two standard errors. 
We define a successful reconstruction by an exact recovery of the whole network.
Since networks are sampled uniformly over 
regular graphs, the probability
of full exact reconstruction of the
network equals the probability of full exact
reconstruction of any node's neighborhood in the network.
This fact is used to minimize the number of
simulations required to achieve a small
fluctuation in our numerical results.

\begin{figure}[]
\begin{center}
\includegraphics[trim=2.5cm 6.5cm 3.3cm 7cm,clip,width = 2.25in]{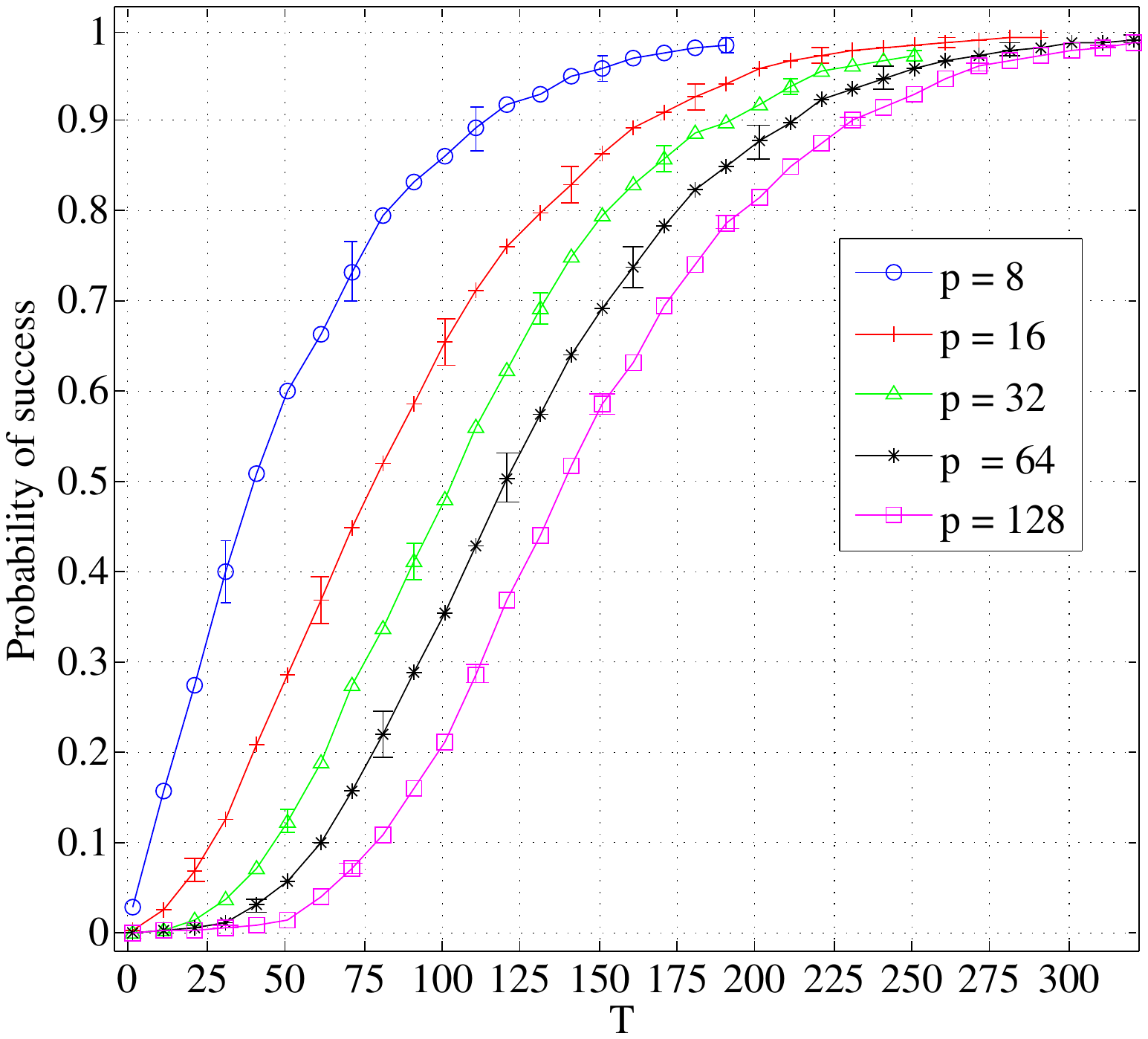}\\
\includegraphics[trim=1.2cm 6.3cm 2cm 6cm,clip,width = 2.60in]{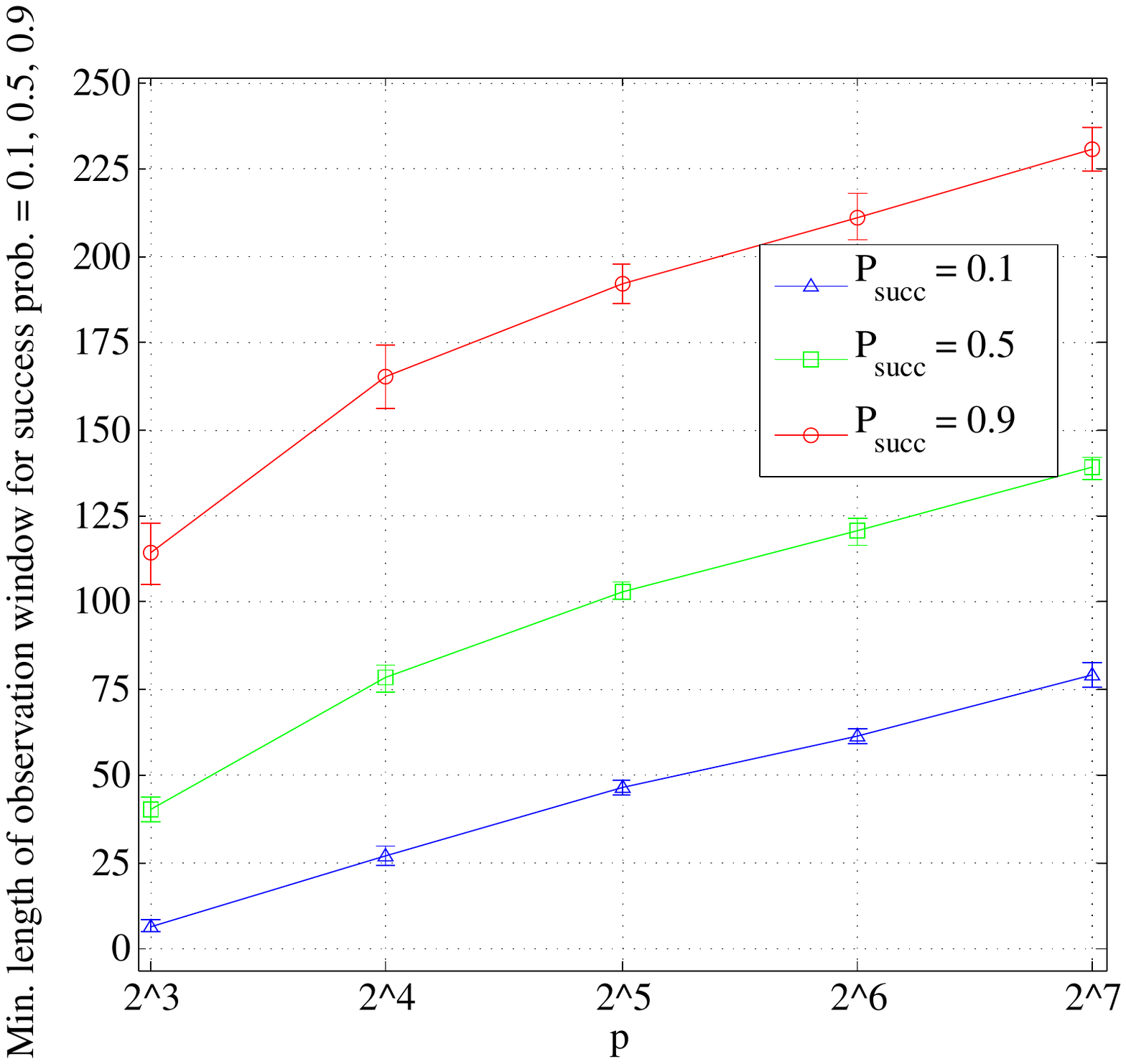}
\caption{(top) Probability of success versus length of
observation time window, $T$, for different network sizes
($p = 8, 16,32, 64$ and $128$).
(bottom) Minimum number of samples required to achieve
a probability of reconstruction of $\psucc = 0.1, 0.5$ and $0.9$
versus the size of the network $p$. All networks
where generated from random regular graphs
of degree 4 sampled uniformly at random.
The dynamics' parameters were set to
$\sigma = 0.5$ and $\gamma = 0.1$}
\label{fig:mass_spring_more_detail}
\end{center}
\end{figure}

In agreement with the lower bound of Theorem \ref{th:non_linear_lbound} for non-linear SDEs,
the time-complexity of $\rls$
in learning these sparse non-linear system of SDEs
also scales logarithmically with $p$.
The behavior of the plot also agrees with the
$\bigo(\log p)$ time-complexity for sparse linear SDEs,
even though the mass-spring system is non-linear (compare
Figure \ref{fig:mass_spring_more_detail}
with Figure \ref{fig:psVsN}). A careful
look into the proof of our main theorem suggests
that as long as the correlation
between consecutive samples decays exponentially
with time, the same proof should follow despite the non-linearities.
The difficulty in proving a generalization
of Theorem \ref{th:main_cont_time} to general non-linear SDEs
of the from \eqref{eq:BasicModelBasis} stems from the fact that
it is hard in to know what kind
of correlations a general SDE will induce on its
trajectory. However, given a sufficiently 'nice' trajectory
the success of the least-square method should not
be affected by the fact that we are considering
a non-linear basis of functions. In fact, even in this
case, the method still consists of minimizing a quadratic
function under a norm-1 constrain.

%
%

\subsection{Learning biochemical pathways and
the effect of the regularization parameter}

We now look at a biochemical pathway describing a general response of a cell to a change in its environment. We model the pathway behavior using non-linear SDEs, produce synthetic data by simulation and then try to recover it from the data using $\rls$. In this example, we also analyze how the regularization parameter, $\lambda$, affects the support recovery and the (normalized) error in estimating the values of $\Param^0$.

The pathway in consideration is described in \cite{aldridge2006physicochemical} and reproduced below.
\begin{align}
R+L &\xrightleftharpoons[k_{r1}]{k_{f1}} (LR^*),\\
(LR^*)+K &\xrightleftharpoons[k_{r2}]{k_{f2}} (LR^*K),\\
(LR^*K) &\xrightarrow[]{k_{f3}} (LR^*) + K^*,\\
K^* + S &\xrightleftharpoons[k_{r4}]{k_{f4}} (K^*S),\\
(K^*S) &\xrightarrow[]{k_{f5}} K^* + S^*.
\end{align}
This pathway can describe, for example, the response of cells to a lesion on the
skin. The lesion causes some cells to generate diffusible ligands ($L$). These
ligands come upon receptors ($R$) on the cell membrane, which act like antennas.
Receptors that have caught a ligand can then be modified (phosphorylated $*$) by
enzymes called kinases ($K$).  These modifications enable interactions with
other substrates ($S$) which eventually turn on the genetic program of platelets
to move towards the source of the injury. This sequence of events is what is
called a chemical pathway and can be thought of as a sequence of chemical reactions describing the interaction between difference species inside and outside the cell.
The symbols, $k_f$ and $k_r$ are the forward and backward rates of reaction.
Expressions inside parenthesis, e.g. $(LR^*)$, represent specific intermediary stages or compounds along the pathway. 

We assume the following
correspondence between the concentration of each
species and the variables $x_i(t), i \in [9]$:
$x_1 \leftrightarrow R$, $x_2 \leftrightarrow L$, $x_3 \leftrightarrow (LR^*)$,
$x_4 \leftrightarrow (LR^*K)$, $x_5 \leftrightarrow K$, $x_6
\leftrightarrow K^*$, $x_7 \leftrightarrow S$, $x_8 \leftrightarrow (K^*S)$,
$x_9 \leftrightarrow (S^*)$. With this notation, the model
proposed in \cite{aldridge2006physicochemical} takes the form of a
system of non-linear SDEs. Bellow are a few of the equations in the model.
\begin{align*}\label{eq:nine_gene_sde}
\de x_1(t) &= (k_{r1}x_3(t) - k_{f1}x_1(t)x_2(t))\de t + \de b_1(t) \nonumber \\
\de x_2(t) &= (k_{r1}x_3(t) - k_{f1}x_1(t)x_2(t))\de t + \de b_2(t) \nonumber \\
...\\
\de x_8(t) &= (k_{f4}x_6(t)x_7(t) - (k_{r4}+k_{f5})x_8(t))\de t + \de b_8(t) \nonumber \\
\de x_9(t) &= (k_{f5}x_8(t) )\de t + \de b_9(t) \nonumber
\end{align*}
The data we use for learning are synthetic sample-trajectories for the concentrations, $\{x_i(t)\}^9_{i=1,t\geq 0}$,  obtained
from these equations using the Euler-Maruyama method.

We learn the network of interaction as the support of a 
non-linear SDE of the form \eqref{eq:BasicModelBasis}
with a basis of functions consisting of monomials up to order two,
i.e., all the functions of the form $x_i^{\alpha_1}x_j^{\alpha_2}$
with $\alpha_1, \alpha_2 \in \{0,1\}$. 
Although there are only 9 species in the model, the adjacency matrix whose support we want to learn is $\Param^0 \in \mathbb{R}^{9\times 46}$ which translates into $414$ parameters to be estimated.

Figure \ref{fig:nine_gene_net_roc} summarizes the performance of $\rls$ in recovering the support
of $\Param^0$. Figure \ref{fig:nine_gene_net_roc}-top
shows that, for a fixed value of $\lambda$, as the length of the observation increases from $T = 150$ to $T = 3000$, in steps of $285$, the number of species that do not interact that are estimated as interacting (false positives) decreases and the number of species that do interact that are estimated as interacting (true positives) increases. It also shows that one can go from a high true positive rate and a high false positive rate 
to a low true positive rate and a low false positive rate by increasing $\lambda$. Figure \ref{fig:nine_gene_net_roc}-bottom shows the area under the previous curves as a function of $T$. In this case, the area under the curve does not have the usual probabilistic interpretation, but it does provide a metric of performance for $\rls$ that is independent of $\lambda$. The area increases with $T$ and approaches $1$, i.e. $\rls$ can recover the exact structure of the biochemical pathway if enough data is available.
%
%
\begin{figure}[h]
\center 
\includegraphics[width=.4\textwidth]{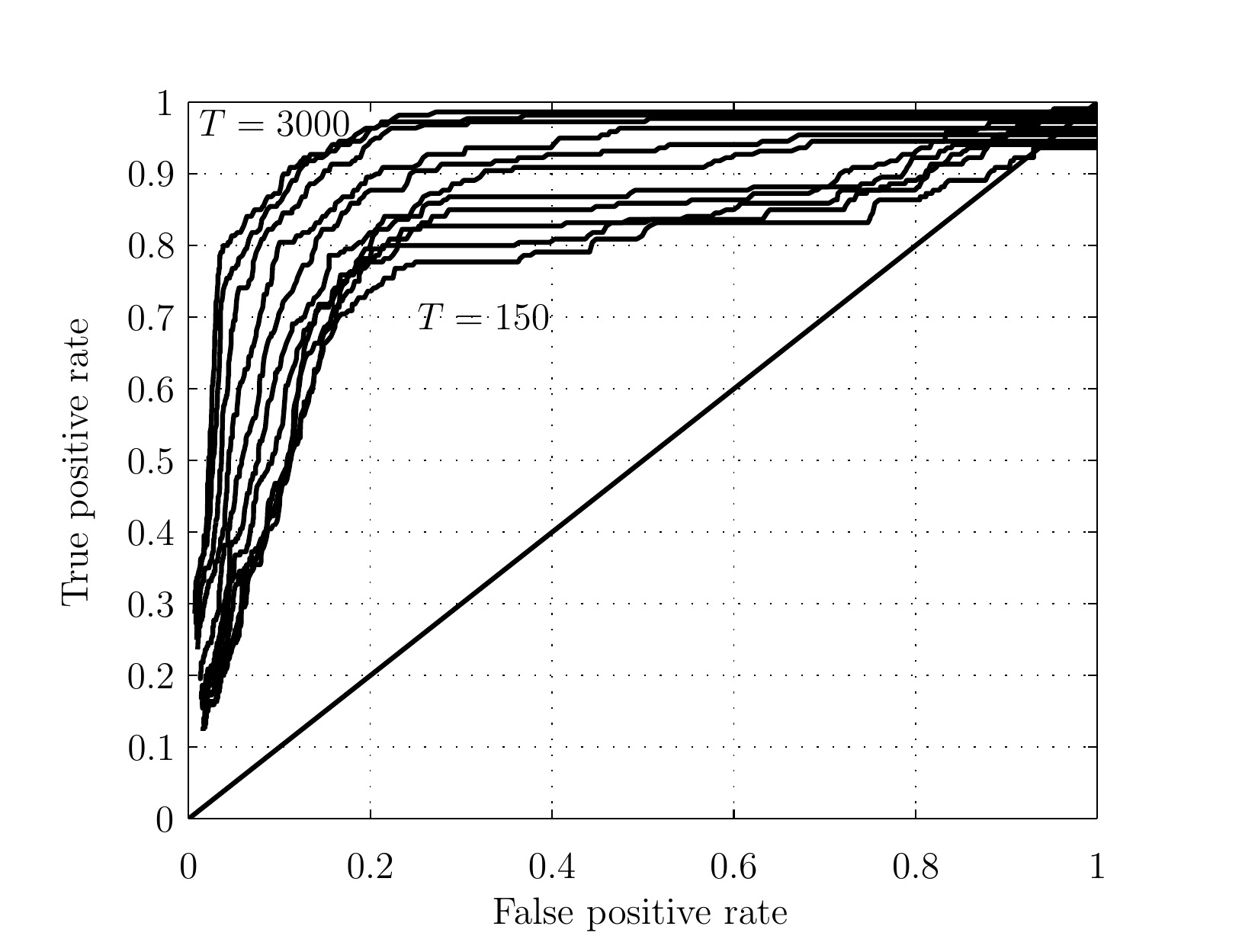}\\
\includegraphics[width=.4\textwidth]{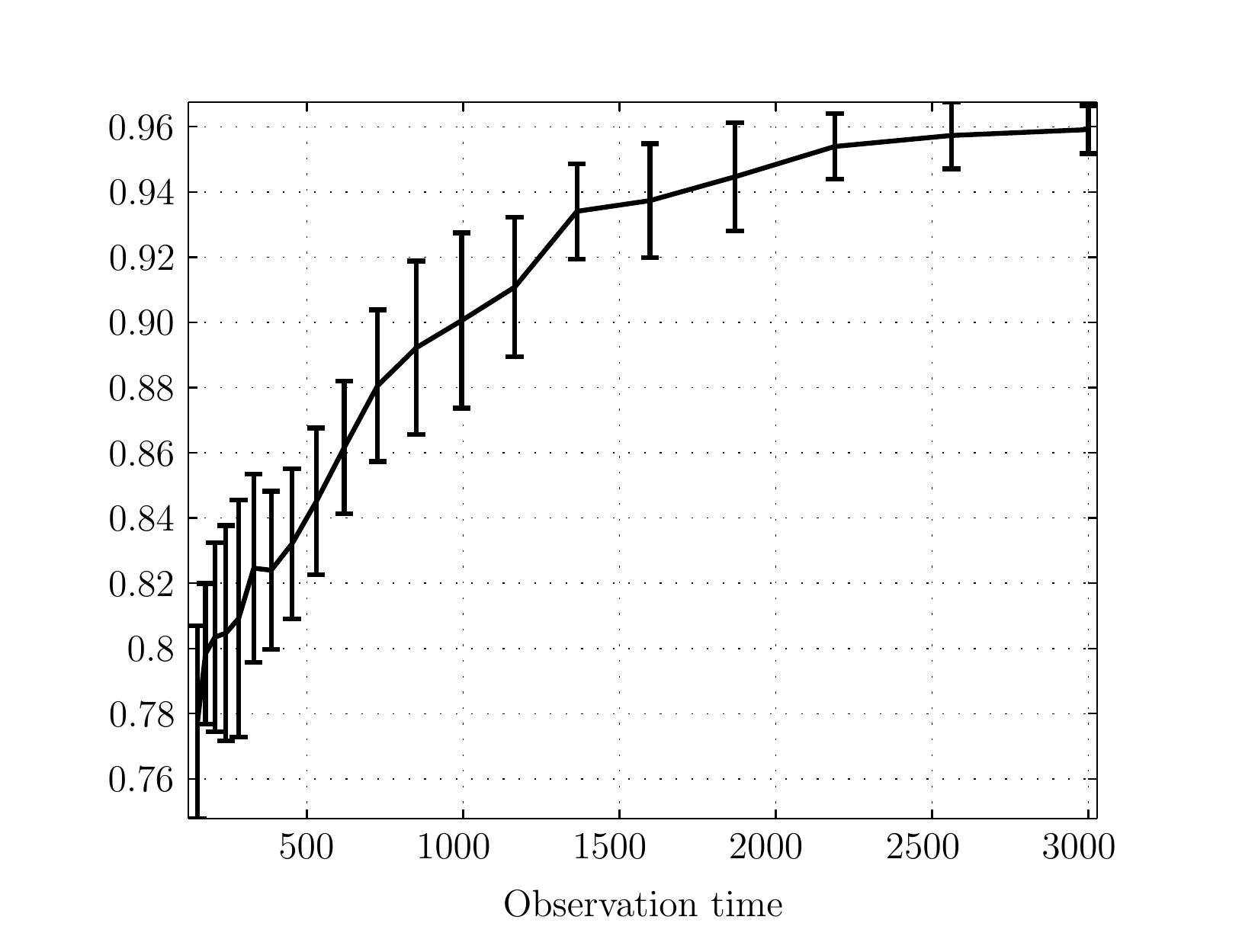}
\put(-200,57){\rotatebox{90}{\scriptsize Area under the curve}}
\caption{(top) True positive rate versus false positive rate for
the recover of the entries in the support of $\Param^0$ using $\rls$. The regularization parameter $\lambda$ changes along each curve. As $\lambda$ decreases (from $\infty$ to $0$), the true positive rate increases but the false positive rate also increases. (bottom) Area under the curves above as a function of $T$.}
\label{fig:nine_gene_net_roc}
\end{figure}

Although the focus of this paper is on support recovery,
$\rls$ also outputs real-value estimates for
the entries of $\Param^0$. Hence one can look
at the normalized RMSE $\III \hat{\Param} - \Param^0 \III_{\text{fro}} / \III\Param^0 \III_{\text{fro}}$ and
its relation with $\lambda$. Figure \ref{fig:nine_gene_net_rmse}-top shows this relation when running $\rls$ on $T = 1200$ seconds of data.
The curve follows the typical behavior described
in \cite{bayati2010lasso}. In particular,
there is a value of $\lambda$ that gives best 
parameter estimation. In Figure \ref{fig:nine_gene_net_rmse}-bottom we show the evolution of the value of the minimum normalized RMSE as a function of $T$ up to the
maximum duration we simulated.
\begin{figure}[h]
\center 
\includegraphics[width=.4\textwidth]{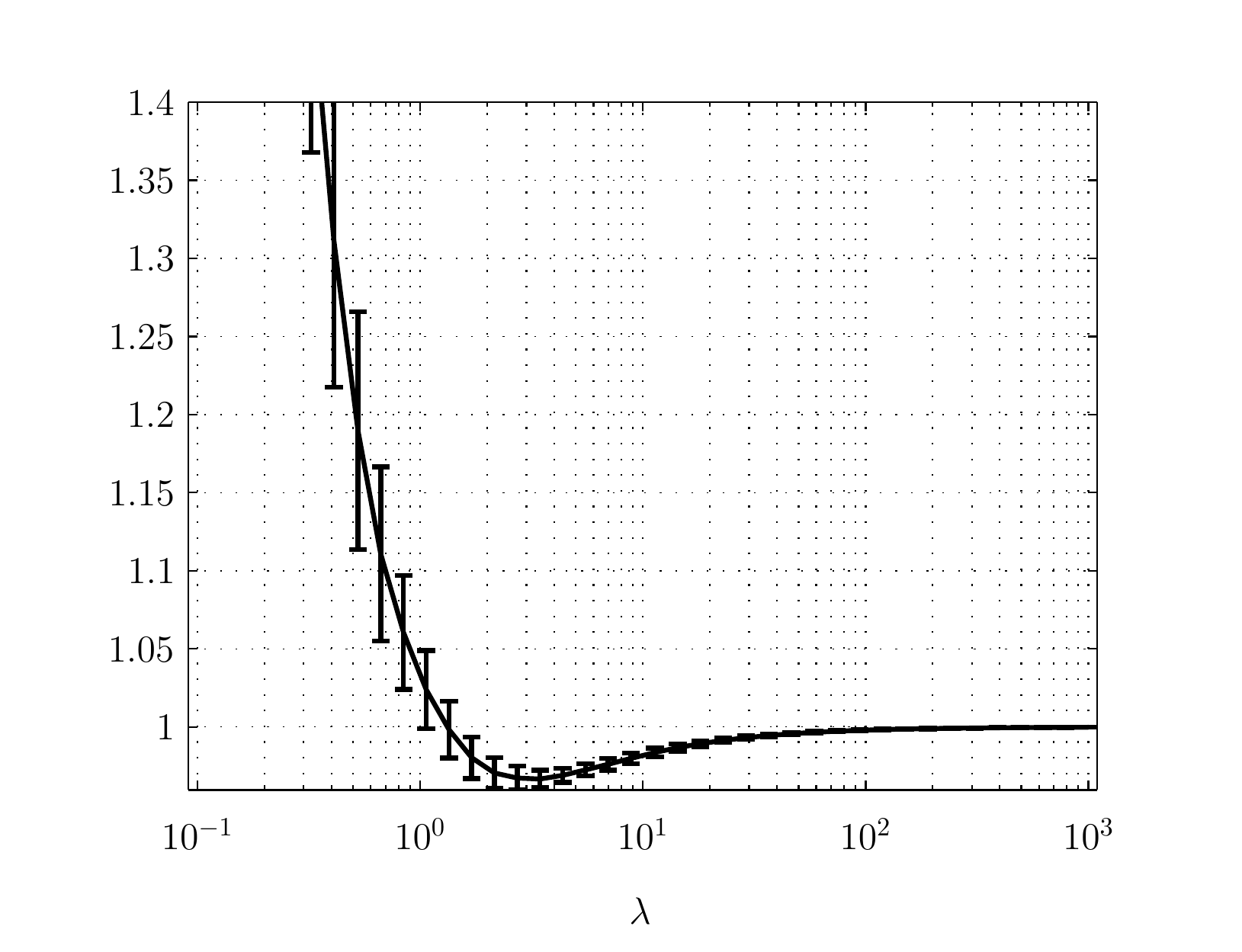}
\put(-200,56){\rotatebox{90}{\scriptsize Normalized RMSE}}
\\
\includegraphics[width=.4\textwidth]{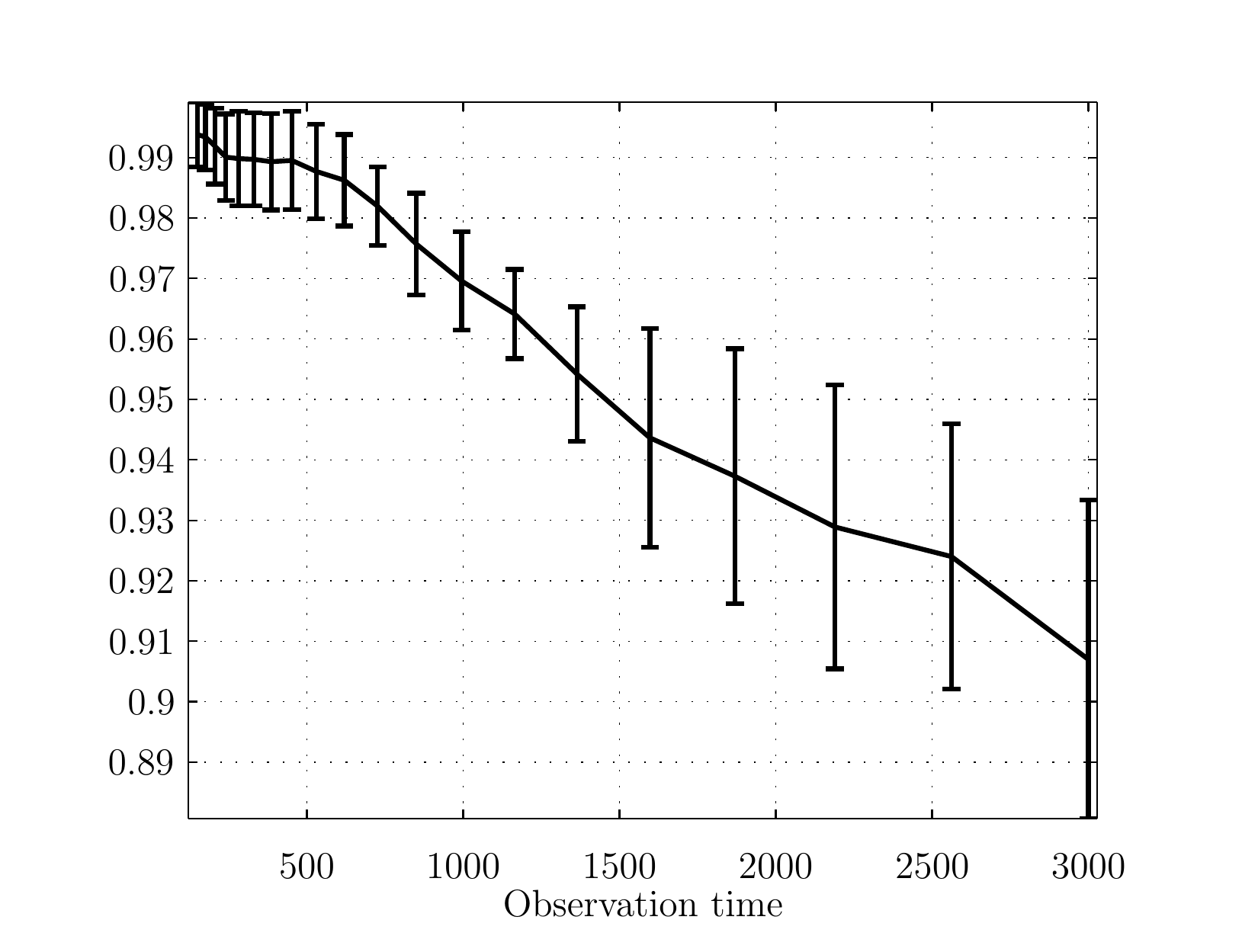}
\caption{(top) Normalized RMSE versus $\lambda$ for $T = 1200$. (bottom) Normalized RMSE versus $T$ for best value of $\lambda$.}
\label{fig:nine_gene_net_rmse}
\end{figure}
%

%
%

\subsection*{Acknowledgments}

This work was partially supported by the NSF CAREER award CCF-0743978,
the NSF grant DMS-0806211, the AFOSR grant FA9550-10-1-0360 and by a Portuguese Doctoral FCT fellowship. A great part of this work was
done under the suppervision of Professor
Andrea Montanari. We are very thankful for
his help and contribution. Finally,
we thank Jiantao Jiao 
and Peter Trocha  for their feedback
on this document.
%
%
%

%
%

\bibliographystyle{IEEEbib}
\bibliography{inference}
\clearpage 

\begin{IEEEbiography}{Jose Bento Ayres Periera}
Biography text here.
\end{IEEEbiography}

\begin{IEEEbiography}{Morteza Ibrahimi}
Biography text here.
\end{IEEEbiography}

\newpage

%

%
%
\appendices

The appendix is divided into two parts. The first half
contains the proofs of the upper bounds on the
sample-complexity and the second half the proofs of the
lower bounds.

\section{Proofs of the upper bounds on the sample-complexity of the regularized least square algorithm}

Our bounds for the continuous model follow from an analysis of the problem for discrete case (introduced in Section \ref{sec:DiscreteResults}) when taking the limit when $\eta \rightarrow 0$.
Hence, we first prove Theorem \ref{th:main_discrete}. We begin by giving an outline of the
proof in Section \ref{sec:outline_proof_discrete_case_upper_bound}
based on three propositions. The three propositions
are proved in Section \ref{sc:auxi_upper_bound}
and, in particular, the details of how to combine them
to complete the proof of Theorem \ref{th:main_discrete}
are in Section \ref{sc:full_main_disc_theo_proof}.
Afterwards, in Section \ref{sec:proof_cont_theorem_upper_bound_appendix}, we prove Theorem \ref{th:main_cont_time}.
Finally, in Section \ref{sec:proof_of_reg_graph_upper_bound}, we specialize this bound to the case of the Laplacian of a graph and prove Theorem \ref{th:cont_reg_graph_bounded_degree} .

%
%

\subsection{Proof of Theorem \ref{th:main_discrete}}
\label{sec:outline_proof_discrete_case_upper_bound}

In this Section we detail the proof of our main result for discrete-time
dynamics, i.e., Theorem \ref{th:main_discrete}. We start by stating
a set of sufficient conditions for regularized least squares to recover the correct support and sign of the entries of $\Param^0$.
Then we present a series of concentration lemmas to be used to prove the validity of these conditions, and then finalize the proof. 

As mentioned in the main text, the proof strategy, and in particular the following proposition, Proposition \ref{th:cond_to_hold}, which provides a compact set of sufficient conditions for the sign-support to be recovered correctly, is analogous to the one in 
\cite{zhao}. A proof of this proposition can be found in in Section \ref{sec:proof_of_norm_bounds_to_prove_for_cont_theorem}. 

In the following we denote by $X\in\reals^{p\times n}$
the matrix whose $(t+1)^{\text{th}}$ column corresponds to the configuration $x(t)$,
i.e. $X = [x(0), x(1), \dots, x(n-1)]$. Furthermore,  $\Delta X \in \reals^{p\times n}$ 
is the matrix containing consecutive state changes, namely  
$\Delta X=[x(1)-x(0), \dots, x(n)-x(n-1)]$.
It is important not to confuse $X^n_0 \equiv \{x(t):   t-1 \in [n-1]\}$ with $X$ defined here.
These are not the same, although both are related. In addition, $X^n_0$ should not be confused with the $n^{th}$
power of $X$ (which is never mentioned in this paper).
Finally we write $W = [w(1),\dots,w(n-1)] \in \reals^{p \times n}$ for the matrix containing 
the Gaussian noise realization and observe that
\begin{align*}
W = \Delta X-\eta \Param\, X\, .
\end{align*}
The $r^{\rm th}$ row of $W$ is denoted by $W_r$.

In order to lighten the notation, we omit the reference to 
$X_0^n$ in the likelihood function (\ref{eq:opt_prob}) 
and simply write $\cL(\Param_r)$. 
We define its normalized gradient and Hessian by
\begin{align}
\hG & = - \nabla \cL (\Param^0_r) =  \frac{1}{n \eta} X W_r^* \, \text{   \;\;\;   and  \;\;\;    }\nonumber \\
\hQ &=  \nabla^2 \cL (\Param^0_r) =  \frac{1}{n} X X^*\, .
\end{align}

\begin{proposition}\label{th:cond_to_hold}
Let $\alpha, C_{\min}>0$ be  defined by
\begin{align}
\lambda_{\min}(Q^0_{{S^0},{S^0}}) & \equiv C_{\min} \nonumber \\
\III Q^0_{(S^0)^C, {S^0}} \left(Q^0_{{S^0},{S^0}}\right)^{-1} \III _\infty & \equiv
 1 - \alpha\, .
 \label{eq:cond_irrep}
\end{align}
If the following conditions hold then the
regularized least square solution \eqref{eq:discr_reg_prob} 
correctly recovers the 
signed-support of $\Param^0$, i.e. $\sign(\Param^0_r)$:
\begin{align}
\|\hG\|_\infty &\leq \frac{\lambda \alpha}{3} \label{eq:Conditions0}\, ,  \\
\|\hG_{S^0}\|_\infty & \leq \frac{\Param_{\min} C_{\min}}{4 k} - \lambda,\label{eq:Conditions1}\\
 \III \hQ_{(S^0)^C ,{S^0}} - Q^0_{(S^0)^C, {S^0}} \III _\infty &\leq
\frac{\alpha}{12} \frac{C_{\rm min}}{\sqrt{k}}\, ,  \\ 
 \III \hQ_{{S^0},{S^0}} - Q^0_{{S^0},{S^0}} \III _\infty & \leq \frac{\alpha}{12} \frac{C_{\min}}{\sqrt{k}}\, .\label{eq:matrix_norm_cond_12}
\end{align}
Further the same statement holds for the continuous model \ref{eq:ContCost},
provided  $\hG$ and $\hQ$ are the gradient and the Hessian of the 
likelihood (\ref{eq:ContCost}).
\end{proposition}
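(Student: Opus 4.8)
The plan is to run the primal--dual witness construction of \cite{zhao}, exploiting the only structural feature of $\cL$ that matters here: it is a quadratic form in $\Param_r$. Indeed, with $\hG$ and $\hQ$ as defined above,
\[
\cL(\Param_r) = \cL(\Param^0_r) - \<\hG,\Param_r-\Param^0_r\> + \tfrac12\,(\Param_r-\Param^0_r)^*\hQ\,(\Param_r-\Param^0_r),
\]
and the identical identity holds for the continuous likelihood \eqref{eq:ContCost} once $\hG$ and $\hQ$ are taken to be its gradient and Hessian at $\Param^0_r$. Hence every step below is insensitive to which of the two models one works with, which will give the final sentence of the proposition for free.

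First I would solve the ``oracle'' program obtained from \eqref{eq:discr_reg_prob} by adding the constraint $\supp(\Param_r)\subseteq S^0$, and call its solution $\hParam_r$ (so $\hParam_{r,(S^0)^C}=0$). Using the bound $\III A^{-1}\III_\infty \le \sqrt{k}\,\lambda_{\min}(A)^{-1}$ valid for symmetric positive-definite $k\times k$ matrices, the Hessian deviation bound \eqref{eq:matrix_norm_cond_12} together with $\lambda_{\min}(Q^0_{S^0,S^0})=C_{\min}$ gives $\lambda_{\min}(\hQ_{S^0,S^0})\ge C_{\min}/2>0$; thus $\hParam_r$ is the unique oracle solution and there is $\hz_{S^0}\in\partial\|\hParam_{r,S^0}\|_1$ with $\hQ_{S^0,S^0}(\hParam_{r,S^0}-\Param^0_{r,S^0})=\hG_{S^0}-\lambda\hz_{S^0}$, i.e.
\[
\hParam_{r,S^0}-\Param^0_{r,S^0} = (\hQ_{S^0,S^0})^{-1}\big(\hG_{S^0}-\lambda\hz_{S^0}\big).
\]
Taking $\ell_\infty$ norms, using $\III(\hQ_{S^0,S^0})^{-1}\III_\infty\le 2\sqrt{k}/C_{\min}$ and $\|\hz_{S^0}\|_\infty\le1$, and inserting \eqref{eq:Conditions1}, I get $\|\hParam_{r,S^0}-\Param^0_{r,S^0}\|_\infty<\param_{\min}$, so property $(ii)$ of $\cA^{(S)}$ forces $\sign(\hParam_{r,S^0})=\sign(\Param^0_{r,S^0})$.

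Next I would check strict dual feasibility. Reading off the stationarity condition of \eqref{eq:discr_reg_prob} on $(S^0)^C$ and substituting the closed form above yields
\[
\lambda\,\hz_{(S^0)^C} = \hG_{(S^0)^C} - \hQ_{(S^0)^C,S^0}(\hQ_{S^0,S^0})^{-1}\hG_{S^0} + \lambda\,\hQ_{(S^0)^C,S^0}(\hQ_{S^0,S^0})^{-1}\hz_{S^0}.
\]
The heart of the matter is the incoherence estimate $\III\hQ_{(S^0)^C,S^0}(\hQ_{S^0,S^0})^{-1}\III_\infty\le 1-\tfrac{3\alpha}{4}$, which I would obtain from the hypothesis $\III Q^0_{(S^0)^C,S^0}(Q^0_{S^0,S^0})^{-1}\III_\infty\le1-\alpha$ via the resolvent identity $(\hQ_{S^0,S^0})^{-1}=(Q^0_{S^0,S^0})^{-1}-(Q^0_{S^0,S^0})^{-1}(\hQ_{S^0,S^0}-Q^0_{S^0,S^0})(\hQ_{S^0,S^0})^{-1}$, bounding every factor by submultiplicativity of $\III\cdot\III_\infty$ together with $\III(Q^0_{S^0,S^0})^{-1}\III_\infty\le\sqrt{k}/C_{\min}$ and the last two deviation conditions of the proposition. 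Given this, and using \eqref{eq:Conditions0} to control $\|\hG\|_\infty$, collecting the three terms gives $\|\hz_{(S^0)^C}\|_\infty\le 1-\tfrac{\alpha}{12}<1$. Strict dual feasibility then certifies, by the standard lemma of \cite{zhao}, that $\hParam_r$ is the unique minimizer of the full program \eqref{eq:discr_reg_prob} and that every minimizer is supported on $S^0$; combined with the sign statement above, this is exactly $\sign(\hParam_r)=\sign(\Param^0_r)$, and the continuous case is obtained by repeating the argument verbatim with the gradient and Hessian of \eqref{eq:ContCost}.

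I expect the propagation of the irrepresentability condition from the deterministic matrix $Q^0$ to the (random) Hessian $\hQ$ to be the one delicate point: the slack $\alpha$ must be apportioned carefully across the error terms of the resolvent expansion so that the final bound on $\|\hz_{(S^0)^C}\|_\infty$ stays strictly below $1$, and this is precisely what fixes the numerical constants ($\alpha/12$ in the Hessian bounds, $\lambda\alpha/3$ in the gradient bound) appearing in the hypotheses. Everything else is routine $\ell_\infty$ operator-norm bookkeeping.
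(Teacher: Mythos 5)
Your proposal is correct and follows essentially the same route as the paper: a primal--dual witness argument in the style of \cite{zhao} and \cite{ravikumar2008high}, combining the KKT dual-vector bound $\|\hz_{(S^0)^C}\|_\infty \le \III \hQ_{(S^0)^C,S^0}(\hQ_{S^0,S^0})^{-1}\III_\infty(1+\|\hG_{S^0}\|_\infty/\lambda)+\|\hG_{(S^0)^C}\|_\infty/\lambda$ with a decomposition of the sample incoherence into the population term $1-\alpha$ plus perturbation terms controlled by the stated deviation bounds, together with the $\ell_\infty$ error bound on the support-restricted solution to fix the signs. The only difference is cosmetic: the paper imports these two ingredients as lemmas from \cite{ravikumar2008high} and \cite{zhao} and assembles them, whereas you re-derive their content directly (with slightly different but internally consistent intermediate constants), and in both cases the quadratic form of the likelihood makes the argument carry over verbatim to the continuous model \eqref{eq:ContCost}.
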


The proof of Theorem \ref{th:main_discrete}
consists in checking that, under the hypothesis 
\eqref{eq:sample_bound_disc} on the number of consecutive configurations,
conditions (\ref{eq:Conditions1}) to (\ref{eq:matrix_norm_cond_12}) hold with high probability.
Checking these conditions can be regarded in turn as 
concentration-of-measure statement. Indeed, if expectation is taken with 
respect to a stationary trajectory, we have
$\E\{\hG\}=0$, $\E\{\hQ\}= Q^0$. 
%
%

\subsubsection{Technical lemmas}

In this section we state the necessary concentration lemmas for proving Theorem \ref{th:main_discrete}. These are non-trivial because $\hG$, $\hQ$ are quadratic  
functions of \emph{dependent} random variables $\big($the samples $\{x(t)\}_{0 \le t\le n}$$\big)$. 
The proofs of Proposition \ref{th:gradie_prob_bound} and Proposition \ref{en:bddQij} can be found in Section \ref{sec:proof_of_concentration_bounds}.

Our first Proposition implies concentration of $\hG$ around $0$.
\begin{proposition} \label{th:gradie_prob_bound}
Let $S\subseteq [p]$ be any set of vertices and $\epsilon<1/2$.
If $\sigma_{\rm max}\equiv\sigma_{\max}(I+\eta\, \Param^0)<1$, then 
\begin{align}
\prob\big\{\|\hG_S\|_\infty > \epsilon\big\} \leq 2 |S|\, \exp\left(-n (1-\sigma_{\max})\, \epsilon^2/4 \right).
\end{align}
\end{proposition}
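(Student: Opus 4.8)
The plan is to exploit the explicit structure of $\hG_r = \frac{1}{n\eta} X W_r^*$, where $W_r$ is a vector of i.i.d.\ $\normal(0,\eta)$ entries and $X$ is the design matrix built from the stationary trajectory. First I would fix a coordinate $i \in S$ and write $\hG_i = \frac{1}{n\eta}\sum_{t=0}^{n-1} x_i(t)\, w_r(t+1)$. The key observation is that, conditionally on the full trajectory $X$ (equivalently on all of $\{x(t)\}_{0\le t\le n-1}$), each $w_r(t+1)$ is independent of $x_i(t)$ only in a martingale sense: $w_r(t+1)$ is independent of $\{x(s): s\le t\}$ but \emph{not} of later states. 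So $\hG_i$ is not simply a Gaussian conditioned on $X$. I would instead condition on $X$ entirely and note that $W_r$ restricted to the "innovation" directions is still Gaussian; more precisely, writing things out, $n\eta\, \hG_i = \sum_t x_i(t) w_r(t+1)$ is, conditionally on $X = [x(0),\dots,x(n-1)]$, \emph{not} a clean Gaussian because $x(t+1) = (I+\eta\Param^0)x(t) + w(t+1)$ couples $w_r(t+1)$ into $X$. The standard fix (used in this line of work) is to observe that $W_r$ is independent of $X$ after the right change of variables, or alternatively to bound the moment generating function of $\hG_i$ directly.

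Concretely, I would proceed via the Chernoff bound on $\hG_i$ by computing $\E[\exp(s\,\hG_i)]$. Using the tower property, peel off the sum from $t=n-1$ down to $t=0$: at each step, $\E[\exp(s\, x_i(t) w_r(t+1)/(n\eta)) \mid \mathcal{F}_t] = \exp\big(\tfrac{s^2 x_i(t)^2}{2 n^2 \eta}\big)$ since $w_r(t+1)\sim\normal(0,\eta)$ is independent of $\mathcal{F}_t = \sigma(x(0),\dots,x(t))$. This reduces the problem to controlling $\E[\exp(\tfrac{s^2}{2n^2\eta}\sum_t x_i(t)^2)]$, i.e.\ the MGF of a quadratic form $\|X_i\|_2^2$ in the Gaussian stationary sequence $\{x_i(t)\}$. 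The covariance of this sequence is governed by the matrix $A \equiv I + \eta\Param^0$: $\E[x_i(s)x_i(t)] = (A^{|s-t|} Q^0)_{ii}$ roughly, and the relevant operator has spectral radius controlled by $\sigma_{\max}(A) = \sigma_{\max} < 1$. The sum $\sum_t x_i(t)^2$ is then a sub-exponential quadratic form whose MGF is bounded in terms of the operator norm of the associated covariance operator, which is at most $C/(1-\sigma_{\max})$. Optimizing over $s$ yields a tail bound of the form $\exp(-c\, n(1-\sigma_{\max})\epsilon^2)$; tracking the constants should give the factor $4$ in the exponent, valid in the regime $\epsilon < 1/2$ (the restriction $\epsilon<1/2$ is exactly what keeps the quadratic-form MGF away from its blow-up point, so that the Gaussian-type bound rather than a heavier tail applies). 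Finally, a union bound over $i\in S$ supplies the factor $2|S|$.

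The main obstacle is the second moment-generating-function estimate: bounding $\E[\exp(\beta \sum_t x_i(t)^2)]$ uniformly in $n$ with the \emph{correct} dependence on $1-\sigma_{\max}$. This requires diagonalizing (or at least norm-bounding) the $n\times n$ covariance matrix of the stationary Gaussian sequence $(x_i(0),\dots,x_i(n-1))$ and showing its largest eigenvalue is $O\big(1/(1-\sigma_{\max})\big)$ uniformly in $n$ — this is where the stability of the dynamics, encoded in $\sigma_{\max}<1$, does the real work, and where one must be careful that the bound does not degrade as $\eta\to 0$ (since $1-\sigma_{\max}\approx \eta\,D$ in that limit, the product $n(1-\sigma_{\max}) \approx T D$ stays finite, which is the whole point of the uniform-in-$\eta$ claim). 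Everything else — the martingale peeling, the union bound, the Chernoff optimization — is routine once this spectral estimate is in hand.
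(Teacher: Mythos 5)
There is a genuine gap in the reduction step. The tower-property ``peeling'' does not, as stated, reduce $\E[\exp(s\,\hG_i)]$ to $\E[\exp(\tfrac{s^2}{2n^2\eta}\sum_t x_i(t)^2)]$. Writing $c=s/(n\eta)$ and conditioning on $\mathcal{F}_{n-1}$, integrating out $w_r(n)$ leaves you with $\E\big[\exp\big(c\sum_{t\le n-2}x_i(t)w_r(t+1)\big)\,\exp\big(\tfrac{c^2\eta}{2}x_i(n-1)^2\big)\big]$; the quadratic factor is not $\mathcal{F}_{n-2}$-measurable (it contains $w(n-1)$) and is correlated with the remaining linear terms, so the iteration does not telescope into a product, and the inequality you are implicitly invoking is not established (and is false without extra constants). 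The standard repair is the exponential supermartingale $M_t=\exp\big(c\sum_{u\le t}x_i(u)w_r(u+1)-\tfrac{c^2\eta}{2}\sum_{u\le t}x_i(u)^2\big)$, $\E M_{n-1}\le 1$, combined with Cauchy--Schwarz, which gives $\E\big[e^{c\sum_t x_i(t)w_r(t+1)}\big]\le \E\big[e^{2c^2\eta\sum_t x_i(t)^2}\big]^{1/2}$. This preserves the $n(1-\sigma_{\max})\epsilon^2$ scaling but degrades the constants, whereas the proposition carries a specific constant ($1/4$) that is used downstream in the proof of Theorem \ref{th:main_discrete}. A second, smaller slip: the operator norm of the temporal covariance of $(x_i(0),\dots,x_i(n-1))$ is not $O(1/(1-\sigma_{\max}))$; the stationary variance is already of order $\eta/(1-\sigma_{\max})$ and geometric correlation decay gives operator norm $O(\eta/(1-\sigma_{\max})^2)$. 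What actually controls the Chernoff exponent in your route is the trace, of order $n\eta/(1-\sigma_{\max})$, and with that the optimization does return $\exp(-c\,n(1-\sigma_{\max})\epsilon^2)$, so the scaling survives even though the stated spectral claim does not.

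For comparison, the paper sidesteps both the feedback issue and the decoupling loss by a different decomposition: it first proves the bound for a trajectory started from pure noise at time $-m$ (and recovers the stationary case by letting $m\to\infty$ and invoking the continuous mapping theorem), so that $X$ is an explicit linear function of the i.i.d.\ driving noise; it then writes $X_jW_r^*=\eta\, z^*R(j)z$ as a single \emph{indefinite} quadratic form in a standard Gaussian vector $z$ and applies the Bernstein/Chernoff method using only the spectral facts of Lemma \ref{th:matrix_proper_grad}: $\sum_i\nu_i=\tr R(j)=0$, $\max_i|\nu_i|\le 1/(1-\sigma_{\max})$ and $\sum_i\nu_i^2\le n/(2(1-\sigma_{\max}))$. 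The trace-zero cancellation is precisely what yields the clean exponent $n(1-\sigma_{\max})\epsilon^2/4$ under $\epsilon<1/2$, with no decoupling penalty. Your martingale-plus-quadratic-form route can be made rigorous with the supermartingale/Cauchy--Schwarz step and a covariance-trace estimate for the stationary sequence, and it has the virtue of working directly with the stationary process, but as written the key reduction is unjustified and the constants will not match the statement being proved.
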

Furthermore, we need to bound the matrix norms as per \eqref{eq:matrix_norm_cond_12}  in proposition $\ref{th:cond_to_hold}$. First we relate bounds on $ \III \hQ_{JS} - {Q^0}_{JS}  \III_\infty$ with bounds on $|\hQ_{ij} - Q^0_{ij}|$, ($i \in J, j \in S$) where $J$ and $S$ are any subsets of $\{1,...,p\}$. Namely, we have
\begin{align}\label{eq:UnionBound}
&\prob ( \III  \hQ_{JS} - Q^0_{JS})  \III _{\infty} > \epsilon ) \nonumber \\
&\qquad \leq |J| |S| \max_{i \in J, j \in S} \prob ( | \hQ_{ij} - Q^0_{ij} | > \epsilon / |S|).
\end{align}
Then, we bound $|\hQ_{ij} - Q^0_{ij}|$ using the following proposition
\begin{proposition}
Let $i,j\in \{ 1,...,p\}$,  $\sigma_{\max} \equiv \sigma_{max}(I + \eta \Param^0) < 1$, $T = \eta n > 3/D$ and $0 < \epsilon< 2/D$ where $D = (1-\sigma_{\max})/\eta$ then, 
\begin{align}
\prob(|\hQ_{ij} - Q^0_{ij})|  > \epsilon) \leq   2 \exp\left(-\frac{n}{32 \eta^2}  (1-\sigma_{\max})^3 \epsilon^2 \right) .
\end{align} \label{en:bddQij}
\end{proposition}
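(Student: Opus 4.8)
The plan is to write $\hQ_{ij} - Q^0_{ij}$ as a normalized sum over the sample path and control its fluctuations via an exponential moment (Chernoff) bound, exactly as one would for a quadratic form in a Gaussian vector, taking care of the dependence between consecutive samples. First I would note that $\hQ_{ij} = \frac1n \sum_{t=0}^{n-1} x_i(t)x_j(t)$ and that, under stationarity, the whole vector $(x(0),\dots,x(n-1))$ is jointly Gaussian with a known covariance structure determined by $\Param^0$ through the modified Lyapunov equation \eqref{eq:Lyapunov2}. Stacking the relevant coordinates, $\hQ_{ij}-Q^0_{ij} = \frac1n(z^* A z - \E\{z^*Az\})$ for a suitable Gaussian vector $z$ and a fixed symmetric matrix $A$ (a permutation of a matrix with $1/2$'s in two off-diagonal positions per block, or, after a change of basis, of rank $O(n)$). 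The standard bound for such centered quadratic forms gives, for $|s|$ small enough,
\begin{equation*}
\log \E\, e^{s(z^*Az - \E z^*Az)} \le \frac{C s^2 \|\Sigma^{1/2} A \Sigma^{1/2}\|_F^2}{1 - c\,s\,\|\Sigma^{1/2}A\Sigma^{1/2}\|_2},
\end{equation*}
so the key quantities to estimate are the operator norm and the Frobenius norm of $\Sigma^{1/2} A \Sigma^{1/2}$, where $\Sigma$ is the covariance of $z$.

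The main work, and the main obstacle, is bounding those two norms in terms of $n$, $\eta$, and $\sigma_{\max}$ — and crucially getting the $(1-\sigma_{\max})$ powers right, since this is precisely where the non-uniformity in $\eta$ would creep in if one were careless. The key structural fact I would exploit is that the stationary process \eqref{eq:DiscreteTimeModel} has covariances that decay geometrically: $\E\{x(t)x(t+\ell)^*\}$ has operator norm bounded by (const)$\cdot \sigma_{\max}^{\ell}/(1-\sigma_{\max})$, which follows from $\|(I+\eta\Param^0)^\ell\| \le \sigma_{\max}^\ell$ together with the stationary covariance bound $\|Q^0\| \le C/(1-\sigma_{\max}^2) \le C/(\eta D)$. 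From geometric decay one gets that $\Sigma$, viewed as a block-Toeplitz matrix, has operator norm $O(1/(1-\sigma_{\max}))$, hence $\|\Sigma^{1/2}A\Sigma^{1/2}\|_2 \le \|\Sigma\|_2\|A\|_2 = O(1/(1-\sigma_{\max}))$ since $\|A\|_2 = O(1)$. For the Frobenius norm I would compute $\|\Sigma^{1/2}A\Sigma^{1/2}\|_F^2 = \tr((\Sigma A)^2) = \sum_{s,t} (\text{cross-covariance entries})$, and the geometric decay collapses the double sum over $s,t \in \{0,\dots,n-1\}$ to $O(n/(1-\sigma_{\max})^2)$. Plugging $\|\cdot\|_F^2 = O(n/(1-\sigma_{\max})^2)$ and $\|\cdot\|_2 = O(1/(1-\sigma_{\max}))$ into the Chernoff bound, optimizing over $s$ (valid in the stated range $\epsilon < 2/D$ which keeps the denominator bounded away from zero), and recalling $\hQ_{ij}-Q^0_{ij} = \frac1n(z^*Az - \E z^*Az)$, the $1/n^2$ from the normalization combines with the $n$ from the Frobenius norm to produce $\exp(-c\, n\, (1-\sigma_{\max})^3\epsilon^2/\eta^2)$ after tracking that one factor of $1/\eta^2$ arises because $D = (1-\sigma_{\max})/\eta$ and the natural scale of $\epsilon$ is set by $D$; matching constants gives the claimed $2\exp(-\frac{n}{32\eta^2}(1-\sigma_{\max})^3\epsilon^2)$.

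One technical point I would be careful about: the matrix $A$ is not positive semidefinite (it encodes a product $x_ix_j$ of two distinct coordinates), so I cannot directly use a one-sided bound; instead I split $x_ix_j = \frac14((x_i+x_j)^2 - (x_i-x_j)^2)$ into a difference of squares, apply the one-sided $\chi^2$-type tail bound to each PSD piece, and union-bound — this costs only the constant $32$ rather than anything $\eta$- or $\sigma_{\max}$-dependent. The hypotheses $T = \eta n > 3/D$ and $\epsilon < 2/D$ are exactly what is needed to stay inside the region where the exponential moment bound is valid and where the quadratic (rather than linear) term dominates the exponent; I would verify this explicitly when optimizing over $s$. The whole argument parallels the proof of Proposition \ref{th:gradie_prob_bound} for $\hG$, the only new feature being the need to also control $\E z^*Az = nQ^0_{ij}$ and the extra power of $(1-\sigma_{\max})$ coming from the Frobenius-norm estimate.
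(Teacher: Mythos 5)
Your overall strategy is, up to a change of variables, the same as the paper's: the paper writes $\hQ_{ij}=\frac{\eta}{n}z^*R(i,j)z$ for an i.i.d.\ standard Gaussian vector $z$ (by representing the trajectory as a linear map of the noise), bounds $\max_l|\nu_l|$ and $\sum_l\nu_l^2$ for $R(i,j)$ via geometric decay of powers of $I+\eta\Param^0$ (Lemma \ref{th:matrix_proper_hess}), and runs Bernstein's method, with $\epsilon<2/D$ ensuring validity of the $\log(1-x)\ge -x-x^2$ step and $T>3/D$ absorbing a finite-$n$ correction; your matrix $\frac1n\Sigma^{1/2}A\Sigma^{1/2}$ is exactly the same object in disguise, so the operator-norm/Frobenius-norm route and the eigenvalue route coincide. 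Working directly with the stationary vector does spare you the paper's device of initializing at $t=-m$, invoking the Continuous Mapping Theorem, and controlling the bias via Lemma \ref{th:EQhat_Q_start_bound}, and your PSD splitting of $x_ix_j$ is unnecessary (the two-sided Bernstein bound handles indefinite quadratic forms directly) but harmless.

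The genuine gap is in your stated norm estimates: as written they do not yield the claimed exponent. The per-step noise in \eqref{eq:DiscreteTimeModel} has covariance $\eta I$, so $Q^0=\eta\sum_{\ell\ge 0}(I+\eta\Param^0)^\ell\big((I+\eta\Param^0)^*\big)^\ell$ and $\|Q^0\|_2\le \eta/(1-\sigma_{\max}^2)$, the lag-$\ell$ cross-covariance has norm at most $\sigma_{\max}^{\ell}\,\eta/(1-\sigma_{\max}^2)$, and hence $\|\Sigma^{1/2}A\Sigma^{1/2}\|_2=O\big(\eta/(1-\sigma_{\max})^2\big)$ (not $O(1/(1-\sigma_{\max}))$) and, collapsing the double sum, $\|\Sigma^{1/2}A\Sigma^{1/2}\|_F^2=O\big(n\eta^2/(1-\sigma_{\max})^3\big)$ (not $O(n/(1-\sigma_{\max})^2)$). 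Plugging the values you wrote into the Hanson--Wright/Chernoff bound gives an exponent of order $n(1-\sigma_{\max})^2\epsilon^2$, and the remark that ``a factor $1/\eta^2$ arises because the natural scale of $\epsilon$ is set by $D$'' cannot repair this: the $\eta$'s and the third power of $(1-\sigma_{\max})$ must come out of the covariance computation itself. With the corrected norms the tail is $2\exp\big(-c\min\{n(1-\sigma_{\max})^3\epsilon^2/\eta^2,\; n(1-\sigma_{\max})^2\epsilon/\eta\}\big)$, and the hypothesis $\epsilon<2/D$ is precisely what makes the quadratic branch binding, recovering the statement; these corrected quantities are exactly the paper's bounds $\max_l|\nu_l|\le (1-\sigma_{\max})^{-2}$ and $\frac1n\sum_l\nu_l^2\le 3(1-\sigma_{\max})^{-3}$ once the prefactor $\eta/n$ is accounted for. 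So the framework is sound and repairable, but the step ``plugging the norms into the Chernoff bound produces $\exp(-c\,n(1-\sigma_{\max})^3\epsilon^2/\eta^2)$'' fails as stated.
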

Finally,  the next corollary follows
from Proposition \ref{en:bddQij} and Eq.~\eqref{eq:UnionBound}.
\begin{corollary}\label{en:bddQJS}
Let $J,S$ ($|S| \leq k$) be any two subsets of $\{ 1,...,p\}$ and  $\sigma_{\max} \equiv \sigma_{\max}(I + \eta \Param^0)<1$, $\epsilon< 2 k/D$ and $n\eta > 3/D$ (where $D = (1 - \sigma_{\max})/\eta )$ then, 
\begin{align}
& \prob( \III \hQ_{JS} - Q^0_{JS} \III _{\infty}  > \epsilon) \nonumber \\
 & \qquad \leq   2 |J| k \exp\left(-\frac{n}{32 k^2 \eta^2} (1-\sigma_{\max})^3\epsilon^2\right).
\end{align} \label{th:prob_bound_inft_matrix_hess}
\end{corollary}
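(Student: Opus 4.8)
The plan is to obtain the bound by feeding the single-entry estimate of Proposition~\ref{en:bddQij} into the elementary union bound~\eqref{eq:UnionBound}, while tracking carefully how the sparsity constraint $|S|\le k$ interacts with the factor $1/|S|$ that appears there. First I would apply~\eqref{eq:UnionBound} verbatim to the index sets $J$ and $S$, which gives
\[
\prob\big( \III  \hQ_{JS} - Q^0_{JS}  \III _{\infty} > \epsilon \big) \le |J|\,|S|\, \max_{i \in J,\, j \in S} \prob\big( | \hQ_{ij} - Q^0_{ij} | > \epsilon / |S| \big).
\]
Because $|S|\le k$ we have $\epsilon/|S|\ge \epsilon/k$, so the event $\{|\hQ_{ij}-Q^0_{ij}|>\epsilon/|S|\}$ is contained in $\{|\hQ_{ij}-Q^0_{ij}|>\epsilon/k\}$; hence each term in the maximum is at most $\prob(|\hQ_{ij}-Q^0_{ij}|>\epsilon/k)$, and, bounding $|S|\le k$ once more in the prefactor, the right-hand side is at most $|J|\,k\,\max_{i\in J,\,j\in S}\prob(|\hQ_{ij}-Q^0_{ij}|>\epsilon/k)$.

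Next I would invoke Proposition~\ref{en:bddQij} with its error parameter taken to be $\epsilon/k$. Its hypotheses transfer directly: $\sigma_{\max}(I+\eta\Param^0)<1$ and $n\eta>3/D$ are assumed here word for word, while the requirement $0<\epsilon/k<2/D$ is exactly the standing assumption $\epsilon<2k/D$. Proposition~\ref{en:bddQij} then bounds every term $\prob(|\hQ_{ij}-Q^0_{ij}|>\epsilon/k)$ by $2\exp\!\big(-\tfrac{n}{32\eta^2}(1-\sigma_{\max})^3(\epsilon/k)^2\big)=2\exp\!\big(-\tfrac{n}{32k^2\eta^2}(1-\sigma_{\max})^3\epsilon^2\big)$, and substituting into the previous display yields the claimed inequality $2|J|k\exp\!\big(-\tfrac{n}{32k^2\eta^2}(1-\sigma_{\max})^3\epsilon^2\big)$.

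There is no substantive obstacle here; this is pure bookkeeping. The only place that calls for a moment's attention is the direction of the inequality when replacing $\epsilon/|S|$ by $\epsilon/k$ — one must use $|S|\le k$ so that the threshold only decreases, not increases — together with verifying that the rescaled error $\epsilon/k$ remains in the admissible window $(0,2/D)$ required by Proposition~\ref{en:bddQij}, which is precisely why the corollary hypothesizes $\epsilon<2k/D$ rather than $\epsilon<2/D$.
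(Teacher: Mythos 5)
Your proposal is correct and is exactly the argument the paper intends: the corollary is obtained by combining the union bound \eqref{eq:UnionBound} with Proposition \ref{en:bddQij} applied at threshold $\epsilon/k$, using $|S|\le k$ both to enlarge the event (since $\epsilon/|S|\ge\epsilon/k$) and to bound the prefactor, with the hypothesis $\epsilon<2k/D$ ensuring the rescaled error lies in the admissible range of the proposition. Nothing further is needed.
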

%
%
\subsubsection{Outline of the proof of Theorem \ref{th:main_discrete}}

With these concentration bounds we can now easily prove Theorem \ref{th:main_discrete}. All we need to do is to compute the probability that the conditions given by Proposition \ref{th:cond_to_hold} hold.
From the statement of the theorem we have that the first two conditions ($\alpha, C_{\min} > 0$) of Proposition \ref{th:cond_to_hold} hold. In order to make the first condition on $\hG$ imply the second condition on $\hG$, we assume that 
$\lambda \alpha/3 \leq (\param_{\min} C_{\min})/(4 k) - \lambda$
which is guaranteed to hold if
\begin{equation}
\lambda \leq \param_{\min} C_{\min} / 8k. \label{eq:th_main_disc_lambda} 
\end{equation}
We also combine the two last conditions on $\hQ$, thus obtaining the following sufficient condition
\begin{equation}
 \III \hQ_{[p], {S^0}} - Q^0_{[p], {S^0}} \III _\infty \leq \frac{\alpha}{12} \frac{C_{\min}}{\sqrt{k}}\, , 
\end{equation}
since $[p]=S^0 \cup (S^0)^C$. We then impose that both the probability of the condition on $\hQ$ failing and the probability of the condition on $\hG$ failing are upper bounded by $\delta/2$ using Proposition \ref{th:gradie_prob_bound} and Corollary \ref{th:prob_bound_inft_matrix_hess}. 
It is shown in the end of this section, Section \ref{sc:full_main_disc_theo_proof}, that this is satisfied if condition (\ref{eq:sample_bound_disc}) holds.

%
%

\subsection{Proof of Theorem \ref{th:main_cont_time}}
\label{sec:proof_cont_theorem_upper_bound_appendix}

To prove Theorem \ref{th:main_cont_time} we recall that 
Proposition \ref{th:cond_to_hold} holds provided the appropriate
continuous time expressions are used for $\hG$ and $\hQ$, namely
\begin{align}
\hG & = - \nabla \cL (\Param^0_r) =  \frac{1}{T}\int_0^T\! x(t)\, \de b_r(t)\, , \nonumber \\
\hQ & =  \nabla^2 \cL (\Param^0_r) =   \frac{1}{T}\int_0^T\! x(t)x(t)^*
\, \de t \, .
\end{align}
These are of course random variables. In order to distinguish these from 
the discrete time version, we will adopt the notation $\hG^n$, $\hQ^n$
for the latter. We claim that these random variables can be coupled
(i.e. defined on the same probability space) in such a way that 
$\hG^n\to \hG$ and $\hQ^n\to\hQ$ almost surely as $n\to\infty$
for fixed $T$. Under assumption (\ref{eq:sample_bound_cont}),
and making use of Lemma \ref{th:equiv_of_bound_in_the_limit}
it is easy to show that (\ref{eq:sample_bound_disc})
holds for all $n>n_0$ with $n_0$ a sufficiently large constant.

Therefore, by the proof of Theorem \ref{th:main_discrete},
the conditions in Proposition \ref{th:cond_to_hold} hold 
for gradient $\hG^n$ and Hessian $\hQ^n$ for any $n\ge n_0$, with probability 
larger than $1-\delta$. But by the claimed convergence 
$\hG^n\to \hG$ and $\hQ^n\to\hQ$, they hold also for $\hG$ and $\hQ$
with probability at least $1-\delta$ which proves the theorem.

We are left with the task of showing that the discrete and continuous
time processes can be coupled in such a way that $\hG^n\to \hG$ 
and $\hQ^n\to\hQ$.
With slight abuse of notation, the state of the discrete time system 
\eqref{eq:DiscreteTimeModel} will be denoted by $x(i)$ where 
$i \in \naturals$ and the state of continuous time system 
\eqref{eq:BasicModel} by $x(t)$ where $t \in \reals$. We denote by $Q^0$ the solution of \eqref{eq:Lyapunov} and by $Q^0(\eta)$ the solution of \eqref{eq:Lyapunov2}. It is easy to check that $Q^0(\eta) \rightarrow Q^0$ as $\eta \rightarrow 0$ by the uniqueness of stationary state distribution. We couple the
process as follows.

The initial state of the continuous time system $x(t=0)$ is a 
$\normal(0, Q^0)$ random variable independent of $b(t)$ and the initial 
state of the discrete time system is defined to be 
$x(i=0) = (Q^0(\eta))^{1/2} (Q^0)^{-1/2} x(t = 0)$.
At subsequent times, $x(i)$ and $x(t)$ are assumed to be  generated by 
the respective dynamical systems using the same matrix $\Param^0$
using common randomness provided by the standard  Brownian motion 
$\{b(t)\}_{0\le t\le T}$ in $\reals^p$.
In order to couple $x(t)$ and 
$x(i)$, we construct $w(i)$, the noise 
driving the discrete time system, by letting 
$w(i)\equiv (b(Ti/n) - b(T(i-1)/n))$. 

The almost sure convergence $\hG^n\to \hG$ and $\hQ^n\to\hQ$
follows then from standard convergence of random walk to Brownian motion.

\begin{lemma}
\label{th:equiv_of_bound_in_the_limit}
Let $\sigma_{\max} \equiv \sigma_{\max} (I + \eta \Param^0)$ and $\rho_{\min} = - \lambda_{\max}(( \Param^0 + (\Param^0)^*) /2) > 0$ then,
\begin{align}
-\lambda_{\min}\left(\frac{\Param^0 + (\Param^0)^*}{2}\right) &\geq \limsup_{\eta \rightarrow 0} \frac{1 - \sigma_{\max}}{\eta},\\
\liminf_{\eta \rightarrow 0} \frac{1 - \sigma_{\max}}{\eta} &\geq -\lambda_{\max}\left(\frac{\Param^0 + (\Param^0)^*}{2}\right).
 \end{align} \label{th:limit_D}
\end{lemma}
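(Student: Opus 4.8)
The plan is to show that the two one-sided limits appearing in the statement are in fact a single two-sided limit, namely $\lim_{\eta\to 0}(1-\sigma_{\max})/\eta = -\lambda_{\max}\big((\Param^0+(\Param^0)^*)/2\big)$; both displayed inequalities then follow immediately, the first because $-\lambda_{\min}(S)\ge -\lambda_{\max}(S)$ for any symmetric $S$, and the second with equality.

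First I would compute $\sigma_{\max}^2$ explicitly: since $\sigma_{\max}(I+\eta\Param^0)^2 = \lambda_{\max}\big((I+\eta\Param^0)(I+\eta\Param^0)^*\big)$ and the product equals $I + \eta\,(\Param^0+(\Param^0)^*) + \eta^2\,\Param^0(\Param^0)^*$, writing $S\equiv(\Param^0+(\Param^0)^*)/2$ I would apply Weyl's inequality together with the facts that the eigenvalues of $I+2\eta S$ are exactly $1+2\eta\lambda_i(S)$ and that $\Param^0(\Param^0)^*$ is positive semidefinite with largest eigenvalue $\III\Param^0\III_2^2$. This sandwiches $\sigma_{\max}^2$ between $1+2\eta\lambda_{\max}(S)$ and $1+2\eta\lambda_{\max}(S)+\eta^2\III\Param^0\III_2^2$ for every $\eta>0$.

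Next I would pass from $\sigma_{\max}^2$ to $\sigma_{\max}$ through the identity $1-\sigma_{\max} = (1-\sigma_{\max}^2)/(1+\sigma_{\max})$ — valid because $I+\eta\Param^0$ is invertible, hence $\sigma_{\max}>0$, for all sufficiently small $\eta$ — then divide by $\eta$ and let $\eta\to 0$. The quantity $\eta^{-1}(1-\sigma_{\max}^2)$ is squeezed between $-2\lambda_{\max}(S)-\eta\III\Param^0\III_2^2$ and $-2\lambda_{\max}(S)$, while $1+\sigma_{\max}\to 2$, so $(1-\sigma_{\max})/\eta\to -\lambda_{\max}(S)$. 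Since $-\lambda_{\max}(S)\le -\lambda_{\min}(S) = -\lambda_{\min}\big((\Param^0+(\Param^0)^*)/2\big)$, both inequalities of the lemma hold; the hypothesis $\rho_{\min}>0$ enters only to guarantee that this common limiting value is strictly positive, which is what keeps the constant $D = (1-\sigma_{\max})/\eta$ of Theorem \ref{th:main_discrete} bounded away from $0$ in the $\eta\to0$ coupling used in the proof of Theorem \ref{th:main_cont_time}.

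I do not anticipate any serious difficulty. The two points that require a little care are the Weyl sandwich — specifically, exploiting the semidefiniteness of $\Param^0(\Param^0)^*$ so that the $\eta^2$ term can only push $\lambda_{\max}$ upward, which is what makes the error one-sided and of order $\eta^2$ — and checking that $\sigma_{\max}>0$ before extracting the square root, which holds as soon as $\eta$ is small enough that $I+\eta\Param^0$ is nonsingular. Everything else is a routine limit manipulation.
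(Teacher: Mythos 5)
Your proof is correct, and it is a bit stronger than what the lemma asserts: it establishes the two-sided limit $\lim_{\eta\to 0}(1-\sigma_{\max})/\eta=-\lambda_{\max}(S)$ with $S=(\Param^0+(\Param^0)^*)/2$, from which both stated inequalities follow (the second with equality). The paper starts from the same place --- writing $\sigma_{\max}^2$ as $\lambda_{\max}\big(I+\eta(\Param^0+(\Param^0)^*)+\eta^2(\Param^0)^*\Param^0\big)$ --- but then handles the limit differently: it represents this eigenvalue as a Rayleigh quotient $1+\eta\,u^*(\Param^0+(\Param^0)^*+\eta(\Param^0)^*\Param^0)u$ with an $\eta$-dependent maximizing unit vector $u$, Taylor-expands $\sqrt{1+x}=1+x/2+\bigo(x^2)$, and then uses only $\lambda_{\min}(S)\le u^*Su\le\lambda_{\max}(S)$, which is why it obtains the two one-sided bounds rather than the exact limit. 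Your route replaces the variational representation and the Taylor step by a Weyl sandwich on $\lambda_{\max}$ (exploiting, as you note, that the $\eta^2$ term is positive semidefinite so the error is one-sided) together with the rationalization $1-\sigma_{\max}=(1-\sigma_{\max}^2)/(1+\sigma_{\max})$; this avoids any expansion with an implicit remainder and yields explicit nonasymptotic bounds $-\lambda_{\max}(S)-\tfrac{\eta}{1+\sigma_{\max}}\III\Param^0\III_2^2\le(1-\sigma_{\max})/\eta\le\tfrac{-2\lambda_{\max}(S)}{1+\sigma_{\max}}$, which is arguably cleaner and slightly more informative than needed. One tiny remark: the invertibility of $I+\eta\Param^0$ and the condition $\sigma_{\max}>0$ are not actually needed anywhere --- singular values are nonnegative by definition, so the identity $1-\sigma_{\max}=(1-\sigma_{\max}^2)/(1+\sigma_{\max})$ only requires $1+\sigma_{\max}\neq 0$, which always holds; likewise $\rho_{\min}>0$ plays no role in the two inequalities themselves, exactly as you observe.
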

\begin{proof}
 \begin{align}
  \frac{1 - \sigma_{\max}}{\eta} &= \frac{1 -  \lambda^{1/2}_{\max}((I + \eta \Param^0)^* (I + \eta \Param^0))}{\eta} \\
  &= \frac{1 -  \lambda^{1/2}_{max}(I + \eta (\Param^0 + (\Param^0)^*) + \eta^2 (\Param^0)^* \Param^0 ) }{\eta} \\
  &= \frac{1 - (1 + \eta u^* (\Param^0 + (\Param^0)^* + \eta (\Param^0)^* \Param^0 ) u)^{1/2}}{\eta},
 \end{align}
where $u$ is some unit vector that depends on $\eta$. Thus, since $\sqrt{1+x} = 1 + x/2 + \bigo(x^2)$,
\begin{align}
\liminf_{\eta \rightarrow 0} \frac{1 - \sigma_{\max}}{\eta} & = - \limsup_{\eta \rightarrow 0} u^*\left( \frac{\Param^0 + (\Param^0)^*}{2} \right) u \nonumber \\
& \geq -\lambda_{\max}\left(\frac{\Param^0 + (\Param^0)^*}{2} \right).
\end{align}
The other inequality is proved in a similar way.
\end{proof}

%
%

\subsection{Proof of Theorem \ref{th:cont_reg_graph_bounded_degree}}
\label{sec:proof_of_reg_graph_upper_bound}

In order to prove Theorem \ref{th:cont_reg_graph_bounded_degree} we first state and prove the following lemma,
\begin{lemma}
Let $G$ be a simple connected graph of vertex degree bounded above by $k$. Let $\tilde{\Param}$ be its adjacency matrix and $\Param^0 = -h I + \tilde{\Param}$ with $h > k$. Then, for this $\Param^0$, the system in \eqref{eq:BasicModel} has $Q^0 = -(1/2) (\Param^0)^{-1}$ and,
\begin{align}
 \III Q^0_{(S^0)^C, {S^0}} (Q^0_{{S^0},{S^0}})^{-1}  \III _{\infty} & =  \III (\Param^0_{(S^0)^C, (S^0)^C})^{-1} \Param^0_{(S^0)^C, S^0} \III _{\infty} \nonumber \\
 \qquad & \leq k/h.
\end{align}
\label{th:cont_time_incoher_bound}
\end{lemma}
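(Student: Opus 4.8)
The plan is to prove the lemma in three steps: (i) compute the stationary covariance $Q^0$ in closed form; (ii) use the Schur-complement formula to rewrite the irrepresentability matrix $Q^0_{(S^0)^C,S^0}(Q^0_{S^0,S^0})^{-1}$ in terms of blocks of $\Param^0$ itself; and (iii) bound the resulting block expression by an M-matrix monotonicity argument. For step (i): since $\tilde{\Param}$ is symmetric, $\Param^0=-hI+\tilde{\Param}$ is symmetric and Lyapunov's equation \eqref{eq:Lyapunov} reduces to $\Param^0 Q^0+Q^0\Param^0+I=0$. The adjacency matrix of a graph with maximum degree at most $k$ satisfies $\III\tilde{\Param}\III_\infty\le k$, so $\rho(\tilde{\Param})\le k<h$ and every eigenvalue of $\Param^0$ lies in $[-h-k,-h+k]\subset(-\infty,0)$; hence $\Param^0$ is negative definite and invertible, condition $(iii)$ of $\cA^{(S)}$ holds (with $\rho_{\min}=h-k$), and the stationary covariance is the unique solution of the Lyapunov equation. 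Substituting $Q^0=-\tfrac12(\Param^0)^{-1}$ gives $-\tfrac12 I-\tfrac12 I+I=0$, which verifies the claim; moreover $Q^0\succ 0$, so $Q^0_{S^0,S^0}$ is invertible, and $\Param^0_{(S^0)^C,(S^0)^C}$, a principal submatrix of a negative definite matrix, is invertible as well.

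For step (ii), write $A=\Param^0$ and partition indices into $S^0$ and $(S^0)^C$. The standard block-inverse identities give $(A^{-1})_{(S^0)^C,S^0}\big((A^{-1})_{S^0,S^0}\big)^{-1}=-\big(A_{(S^0)^C,(S^0)^C}\big)^{-1}A_{(S^0)^C,S^0}$. Since $Q^0=-\tfrac12 A^{-1}$, the scalar $-\tfrac12$ occurring in both factors on the left cancels, so $Q^0_{(S^0)^C,S^0}\big(Q^0_{S^0,S^0}\big)^{-1}=-\big(\Param^0_{(S^0)^C,(S^0)^C}\big)^{-1}\Param^0_{(S^0)^C,S^0}$, and the minus sign is irrelevant under $\III\cdot\III_\infty$; this is exactly the first equality of the lemma.

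For step (iii), note that because $S^0$ and $(S^0)^C$ are disjoint the off-diagonal block of $\Param^0$ gets no contribution from $-hI$, so $\Param^0_{(S^0)^C,S^0}=\tilde{\Param}_{(S^0)^C,S^0}$ while $\Param^0_{(S^0)^C,(S^0)^C}=-hI+M$ with $M\equiv\tilde{\Param}_{(S^0)^C,(S^0)^C}$ entrywise nonnegative and $\III M\III_\infty\le k<h$. Then $hI-M$ is an M-matrix and $(hI-M)^{-1}=\tfrac1h\sum_{j\ge 0}(M/h)^j$ is entrywise nonnegative. Let $\mathbf 1$ denote the all-ones vector, $d_S\equiv\tilde{\Param}_{(S^0)^C,S^0}\mathbf 1$ (counting neighbours in $S^0$) and $d_{S^C}\equiv M\mathbf 1$ (counting neighbours in $(S^0)^C$), so $d_S+d_{S^C}$ is the degree vector of the vertices in $(S^0)^C$ and $d_S+d_{S^C}\le k\mathbf 1$ entrywise. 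Testing $v=(k/h)\mathbf 1$ gives $(hI-M)v=k\mathbf 1-(k/h)d_{S^C}$; since $0<k/h<1$ we have $d_S+(k/h)d_{S^C}\le d_S+d_{S^C}\le k\mathbf 1$, i.e. $(hI-M)v\ge d_S$ entrywise. Applying the nonnegative matrix $(hI-M)^{-1}$ preserves the inequality, so $0\le(hI-M)^{-1}d_S\le v$; since $(hI-M)^{-1}\tilde{\Param}_{(S^0)^C,S^0}$ is entrywise nonnegative, its $\III\cdot\III_\infty$ equals the largest of its row sums, i.e. $\|(hI-M)^{-1}d_S\|_\infty\le\|v\|_\infty=k/h$, giving the bound.

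I expect step (iii) to be the main obstacle: a naive submultiplicativity estimate, or the crude Neumann-series bound $\sum_j\|M/h\|_\infty^{\,j}\,\|\tilde{\Param}_{(S^0)^C,S^0}\|_\infty$, only yields $k/(h-k)$ (or the trivial bound $1$), which does not suffice. Obtaining the sharper constant $k/h$ requires the M-matrix monotonicity together with the degree-splitting $d_S+d_{S^C}\le k\mathbf 1$ and the specific comparison vector $v=(k/h)\mathbf 1$; finding this comparison is the crux of the argument.
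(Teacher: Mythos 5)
Your proof is correct, and while your steps (i) and (ii) coincide with the paper's (the Lyapunov computation giving $Q^0=-\tfrac12(\Param^0)^{-1}$, and the Schur-complement/block-inverse identity reducing the irrepresentability matrix to $-(\Param^0_{(S^0)^C,(S^0)^C})^{-1}\Param^0_{(S^0)^C,S^0}$), your step (iii) takes a genuinely different route to the sharp constant $k/h$. The paper expands $(hI-\tilde{\Param}_{(S^0)^C,(S^0)^C})^{-1}\tilde{\Param}_{(S^0)^C,S^0}$ as a Neumann series, reads each row sum as a generating function $\sum_{\omega\in\Omega_i} z^{|\omega|}$ over paths from $i$ that first hit $S^0$, and dominates it by $\E_G\big[(kz)^{T_{i,S^0}}\big]$ for the simple random walk by inserting the factors $1/(k_1\cdots k_{|\omega|})$; the bound $k z = k/h$ then follows from $T_{i,S^0}\ge 1$ and $kz<1$. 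You instead avoid the path/hitting-time interpretation entirely: you use that $hI-M$ is an M-matrix, so $(hI-M)^{-1}\ge 0$ entrywise, and a comparison-vector argument with $v=(k/h)\mathbf 1$ together with the degree split $d_S+d_{S^C}\le k\mathbf 1$, which is exactly the extra structural input needed to beat the naive $k/(h-k)$. Both arguments ultimately rest on the entrywise nonnegativity of the Neumann series, but yours is more elementary and self-contained (it also never uses connectedness of $G$, which the paper nominally assumes for its Perron--Frobenius and random-walk phrasing), whereas the paper's probabilistic formulation exposes the quantity $\E_G[(kz)^{T_{i,S^0}}]$, which could in principle be sharpened when more is known about the hitting-time distribution.
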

\begin{proof}
$\tilde{\Param}$ is symmetric so $\Param^0$ is symmetric. Since $\tilde{\Param}$ is irreducible and non-negative, Perron-Frobenious theorem tells that $\lambda_{\max}(\tilde{\Param}) \leq k$ and consequently $\lambda_{\max}(\Param^0) \leq -h + \lambda_{\max}(\tilde{\Param}) \leq -h + k$. Thus $h > k$ implies that $\Param^0$ is negative definite and using equation \eqref{eq:Lyapunov} we can compute  $Q^0 = -(1/2) (\Param^0)^{-1}$.

Now notice that, by the block matrix inverse formula, we have
\begin{align}
(Q^0_{{S^0},{S^0}})^{-1} &=  -2 C^{-1},\\
Q^0_{(S^0)^C, S^0} &= \frac{1}{2}( (\Param^0_{(S^0)^C, (S^0)^C})^{-1} \Param^0_{(S^0)^C, {S^0}} C),
\end{align}
where $C = \Param^0_{{S^0},{S^0}} - \Param^0_{{S^0}, (S^0)^C} (\Param^0_{(S^0)^C, (S^0)^C})^{-1} \Param^0_{(S^0)^C, {S^0}}$ and thus
\begin{equation}
 \III Q^0_{(S^0)^C, {S^0}} (Q^0_{{S^0},{S^0}})^{-1}  \III _{\infty} =  \III (\Param^0_{(S^0)^C, (S^0)^C})^{-1} \Param^0_{(S^0)^C, {S^0}} \III _{\infty}.
\end{equation}
Recall the definition of the infinity norm of a matrix $B$, $ \III B \III _\infty$,
\begin{equation}
 \III B \III _\infty = \max_i \sum_j |B_{ij}|.\label{eq:infty_mat_norm}
\end{equation}
Let $z = h^{-1}$ and write,
\begin{align}
(\Param^0_{(S^0)^C, (S^0)^C})^{-1} & = -z (I - z \tilde{\Param}_{(S^0)^C, (S^0)^C})^{-1}\nonumber \\
& =  -z \sum^\infty_{n = 0} (z \tilde{\Param}_{(S^0)^C, (S^0)^C})^n,\\
\Param^0_{(S^0)^C, S^0} &= z^{-1} z \tilde{\Param}_{(S^0)^C, {S^0}}.
\end{align}
This allows us to conclude that $$ \III (\Param^0_{(S^0)^C, (S^0)^C})^{-1} \Param^0_{(S^0)^C, {S^0}} \III _{\infty}$$ is in fact the maximum over all path generating functions of paths starting from a node $i \in (S^0)^C$ and hitting  ${S^0}$ for a first time. Let $\Omega_i$ denote this set of paths, $\omega$ a general path in $G$ and $|\omega|$ its length. Let $k_1, ..., k_{|\omega|}$ denote the degree of each vertex visited by $\omega$ and note that $k_m \leq k, \forall m$. Then each of these path generating functions can be written in the following form,
\begin{equation}
\sum_{\omega \in \Omega_i } z^{|\omega|}  \leq \sum_{\omega \in \Omega_i } \frac{1}{k_1...k_{|\omega|}}(kz)^{|\omega|} = \E_G ((kz)^{T_{i,{S^0}}}),\label{eq:inc_cond_path}
\end{equation}
where $T_{i,{S^0}}$ is the first hitting time of the set ${S^0}$ by a random walk that starts at node $i \in {S^0}^C$ and moves with equal probability to each neighboring node. But $T_{i,{S^0}} \geq 1$ and $kz < 1$ so the previous expression is upper bounded by $kz$.
\end{proof}

Now what remains to complete the proof of Theorem \ref{th:cont_reg_graph_bounded_degree} is to compute
values for the upper bound constants
$\alpha$, $\param_{\min}$, $\rho_{\min}$ and $C_{\min}$ in Theorem \ref{th:main_cont_time} . From Lemma \ref{th:cont_time_incoher_bound} we can set $\alpha = 1 - k/(k+m)$. In addition, clearly, we can choose $\param_{\min} = 1$. We also have that $\sigma_{\min} (\Param^0) \geq k+m - \sigma_{\max} (\tilde{\Param}) \geq m+k - k = m$ so we set $\rho_{\min} = m$. Finally, notice that 
\begin{align}
\lambda_{\min}(Q^0_{{S^0},{S^0}}) & = \frac{1}{2}\lambda_{\min}(-(\Param^0)^{-1}) \nonumber \\
& = \frac{1}{2} \frac{1}{\lambda_{\max}(-\Param^0)} \nonumber \\
& \geq \frac{1}{2} \frac{1}{m+k+k} \nonumber \\
& \geq \frac{1}{4(m+k)},
\end{align}
where in the last step we made use of the fact that $m+k > k$. 
Hence, we choose $C_{\min} = 1/(4(m + k))$.
Substituting these values in the inequality from Theorem  \ref{th:main_cont_time} gives the desired result.

%
%

\subsection{Proofs of auxiliary results for the discrete-time model}
\label{sc:auxi_upper_bound}

\begin{figure*}[t]
\begin{equation}
\tilde{R}(j) =  \left(
\begin{array}{ccccccccc}
0&0& \ldots&0 &0 & 0 & \ldots &0 &0\\
\vdots&\vdots& \ddots&\vdots &\vdots & \vdots & \ddots & \vdots& \vdots\\
0&0& \ldots&0 &0 & 0 &\ldots & 0&0\\
\rho(m,j)&\rho(m-1,j)&\ldots&\rho(1,j)&\rho(0,j)&0&\ldots&0&0\\
\rho(m+1,j)&\rho(m,j)&\ldots&\rho(2,j)&\rho(1,j)&\rho(0,j)&\ldots&0&0\\
\vdots&\vdots&\ddots&\vdots&\vdots&\vdots&\ddots&0&0\\
\rho(m+n-1,j)&\rho(m+n-2,j)&\ldots&\rho(n,j)&\rho(n-1,j)&\rho(n-2,j)&\ldots&\rho(0,j)&0\\
\end{array}
\right).
\label{eq:rtilde_def}
\end{equation}
\hrulefill
\vspace{4pt}
\end{figure*}

\subsubsection{Proof of Proposition \ref{th:cond_to_hold}}
\label{sec:proof_of_norm_bounds_to_prove_for_cont_theorem}
In order to prove Proposition \ref{th:cond_to_hold} we first introduce two technical lemmas.

The following Lemma is taken from the proof of Lemma 6 in  \cite{ravikumar2008high}.

\begin{lemma}
For any subset $S \subseteq [p]$ the following decomposition holds,
\begin{equation}
\hQ_{S^C, S} \left(\hQ_{S,S}\right)^{-1} = T_1 + T_2 + T_3 + Q^0_{S^C, S} \left(Q^0_{S,S}\right)^{-1},
\end{equation}
where,
\begin{align}
T_1 &= Q^0_{S^C, S} \left( \left(\hQ_{S,S}\right)^{-1} - \left(Q^0_{S,S}\right)^{-1}\right), \\
T_2 &= (\hQ_{S^C, S} - Q^0_{S^C, S} ) \left(Q^0_{S,S}\right)^{-1} \text{   \;\;\; and \;\;\;    }\\
T_3 &= (\hQ_{S^C, S} - Q^0_{S^C, S})\left (\left(\hQ_{S,S}\right)^{-1} - \left(Q^0_{S,S}\right)^{-1}\right).\\
\end{align}
In addition, if $|S| \leq k$, if $ \III Q^0_{S^C, S} \left(Q^0_{S,S}\right)^{-1} \III _\infty < 1$ and $\lambda_{\min}(\hQ_{S,S}) \geq C_{\min}/2>0$ the following relations hold,
\begin{align}
 \III T_1 \III _\infty &\leq \frac{2 \sqrt{k} }{C_{\min}} \III \hQ_{S, S} - Q^0_{S, S} \III _\infty,\\
 \III T_2 \III _\infty &\leq \frac{\sqrt{k} }{C_{\min}} \III \hQ_{S^C, S} - Q^0_{S^C, S} \III _\infty \text{   \;\;\; and  \;\;\;  }\\
 \III T_3 \III _\infty &\leq \frac{2 \sqrt{k} }{C_{\min}^2} \III \hQ_{S^C, S} - Q^0_{S^C, S} \III _\infty \III \hQ_{S, S} - Q^0_{S, S} \III _\infty.
\end{align} \label{th:inco_decomp}
\end{lemma}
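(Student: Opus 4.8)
The plan is to get the displayed identity by pure algebra and then bound $T_1,T_2,T_3$ by repeated submultiplicativity of the $\III\cdot\III_\infty$ operator norm together with a handful of elementary norm-conversion facts for $k\times k$ matrices. For the identity: writing $\Delta = \hQ_{S^C,S}-Q^0_{S^C,S}$ and splitting $(\hQ_{S,S})^{-1}=(Q^0_{S,S})^{-1}+\big[(\hQ_{S,S})^{-1}-(Q^0_{S,S})^{-1}\big]$, one substitutes $\hQ_{S^C,S}=Q^0_{S^C,S}+\Delta$ and this last expression into $\hQ_{S^C,S}(\hQ_{S,S})^{-1}$ and multiplies out; the four resulting terms are exactly $Q^0_{S^C,S}(Q^0_{S,S})^{-1}$, $T_1$, $T_2$ and $T_3$. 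No cancellation or further inversion is involved.

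For the three bounds I would use the following ingredients: $(a)$ for symmetric $M\in\reals^{k\times k}$, $\III M\III_2\le\III M\III_\infty$, since the spectral radius is dominated by any induced operator norm; $(b)$ for $M\in\reals^{k\times k}$, $\III M\III_\infty\le\sqrt{k}\,\III M\III_2$, from $\|Mx\|_\infty\le\|Mx\|_2\le\III M\III_2\sqrt{k}\,\|x\|_\infty$; $(c)$ the eigenvalue hypotheses give $\III(\hQ_{S,S})^{-1}\III_2\le 2/C_{\min}$ and $\III(Q^0_{S,S})^{-1}\III_2\le 1/C_{\min}$; $(d)$ the resolvent identity $(\hQ_{S,S})^{-1}-(Q^0_{S,S})^{-1}=(Q^0_{S,S})^{-1}(Q^0_{S,S}-\hQ_{S,S})(\hQ_{S,S})^{-1}=(\hQ_{S,S})^{-1}(Q^0_{S,S}-\hQ_{S,S})(Q^0_{S,S})^{-1}$; and $(e)$ the irrepresentability hypothesis $\III Q^0_{S^C,S}(Q^0_{S,S})^{-1}\III_\infty<1$. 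For $T_1=Q^0_{S^C,S}\big[(\hQ_{S,S})^{-1}-(Q^0_{S,S})^{-1}\big]$ the key move is to re-associate it via $(d)$ as $T_1=\big[Q^0_{S^C,S}(Q^0_{S,S})^{-1}\big]\,(Q^0_{S,S}-\hQ_{S,S})\,(\hQ_{S,S})^{-1}$: bound the first bracket by $1$ using $(e)$, replace $\III Q^0_{S,S}-\hQ_{S,S}\III_2$ by its $\III\cdot\III_\infty$ using $(a)$, apply $(c)$ to the last factor, and pass back to $\III\cdot\III_\infty$ using $(b)$, which yields the factor $2\sqrt{k}/C_{\min}$. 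For $T_2=\Delta(Q^0_{S,S})^{-1}$, submultiplicativity together with $\III(Q^0_{S,S})^{-1}\III_\infty\le\sqrt{k}\,\III(Q^0_{S,S})^{-1}\III_2\le\sqrt{k}/C_{\min}$ gives the factor $\sqrt{k}/C_{\min}$ (here one uses $\lambda_{\min}(Q^0_{S,S})\ge C_{\min}$, which holds in the application $S=S^0$). For $T_3=\Delta\big[(\hQ_{S,S})^{-1}-(Q^0_{S,S})^{-1}\big]$, I would bound $\III(\hQ_{S,S})^{-1}-(Q^0_{S,S})^{-1}\III_\infty$ by first working in $\III\cdot\III_2$: applying $(d)$ and submultiplicativity in $\III\cdot\III_2$ while using $(a)$ on the symmetric factor $Q^0_{S,S}-\hQ_{S,S}$, then converting to $\III\cdot\III_\infty$ via $(b)$, which gives $(2\sqrt{k}/C_{\min}^2)\,\III\hQ_{S,S}-Q^0_{S,S}\III_\infty$; multiplying by $\III\Delta\III_\infty$ finishes it.

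I expect the only genuinely delicate point to be getting the powers of $\sqrt{k}$ exactly right. A careless application of $\III\cdot\III_\infty$-submultiplicativity to the three- and four-factor products would leave a spurious extra $\sqrt{k}$ in the bounds for $T_1$ and $T_3$; the clean constants are recovered only by exploiting that $\hQ_{S,S}-Q^0_{S,S}$ is symmetric (so one may trade $\III\cdot\III_2$ for $\III\cdot\III_\infty$ in the favorable direction) and by being deliberate about the single step at which one moves from the $\ell_2$ to the $\ell_\infty$ operator norm. Everything else is routine bookkeeping.
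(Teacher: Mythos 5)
Your proof is correct and takes essentially the same route as the argument this lemma rests on (the paper does not reprove it, attributing it to the proof of Lemma 6 in Ravikumar et al.): the same algebraic split into $T_1,T_2,T_3$, the resolvent identity, and the same $\ell_2$/$\ell_\infty$ conversions supplying the single $\sqrt{k}$ factor, with symmetry of $\hQ_{S,S}-Q^0_{S,S}$ used to pass from $\III\cdot\III_2$ to $\III\cdot\III_\infty$ in the favorable direction. Your observation that the $T_2$ and $T_3$ bounds additionally require $\lambda_{\min}(Q^0_{S,S})\ge C_{\min}$, which is implicit here because $C_{\min}$ is defined as $\lambda_{\min}(Q^0_{S^0,S^0})$ in the application, is accurate.
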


The following lemma, directly obtained from the proofs of Proposition 1 in \cite{ravikumar2008high} and Proposition 1 in \cite{zhao} respectively,
resumes the conditions that guarantee correct signed-support reconstruction
of $\Param_r^0$.

\begin{lemma}
If $\hQ_{{S^0},{S^0}} > 0$, then the dual vector $\hat{z}$ from the KKT conditions of the optimization problem \eqref{eq:discr_reg_prob} satisfies the following inequality,
\begin{align}
\|\hat{z}_{(S^0)^C}\|_\infty  & \leq  \III  \hQ_{(S^0)^C, {S^0}} \left(\hQ_{{S^0}, {S^0}}\right)^{-1}  \III _\infty \nonumber \\
& \qquad \left( 1 +  \frac{\|\hG_{S^0}\|_\infty}{\lambda}  \right) + \frac{\|\hG_{(S^0)^C}\|_\infty}{\lambda} \label{eq:bound_on_dual_vector}.
\end{align}
In addition, if 
\begin{equation}\label{eq:important_condition_on_gradient}
\|\hG_{S^0}\|_\infty \leq \frac{\param_{\min} \lambda_{\min}(\hQ_{{S^0},{S^0}}) }{2 k} - \lambda
\end{equation}
then $\|\Param^0_r - \hat{\Param}_r\|_\infty \leq \param_{\min}/2$. The same result holds for problem \eqref{eq:cont_reg_prob}.
\label{th:main_cond}
\end{lemma}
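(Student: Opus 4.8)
The plan is to carry out the standard primal–dual witness argument for $\ell_1$-regularized least squares, exploiting that $\cL(\Param_r;X^n_0)$ is an exact quadratic whose gradient at $\Param^0_r$ is $-\hG$ and whose Hessian is the constant matrix $\hQ$. First I would set up the restricted problem: minimize $\cL(\Param_r)+\lambda\|\Param_r\|_1$ over vectors $\Param_r$ whose support is contained in $S^0$. Because $\hQ_{S^0,S^0}\succ0$, this restricted objective is strictly convex on the coordinate subspace indexed by $S^0$, so it has a unique minimizer $\hat\Param_r$; let $\hat z_{S^0}\in\partial\|\hat\Param_{r,S^0}\|_1$ be the subgradient appearing in its KKT condition. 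Extending $\hat\Param_r$ by zeros off $S^0$, the full-problem stationarity equation $-\hG+\hQ(\hat\Param_r-\Param^0_r)+\lambda\hat z=0$ holds on the $S^0$-block by construction, and I would \emph{define} $\hat z_{(S^0)^C}$ to be whatever the $(S^0)^C$-block of that equation forces; this is the dual vector referred to in the statement.

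Next I would solve the two blocks explicitly. Since $\hat\Param_r-\Param^0_r$ is supported on $S^0$, the $S^0$-block gives $\hat\Param_{r,S^0}-\Param^0_{r,S^0}=(\hQ_{S^0,S^0})^{-1}(\hG_{S^0}-\lambda\hat z_{S^0})$, and substituting into the $(S^0)^C$-block yields
\[
\hat z_{(S^0)^C}=\lambda^{-1}\hG_{(S^0)^C}-\hQ_{(S^0)^C,S^0}(\hQ_{S^0,S^0})^{-1}\big(\lambda^{-1}\hG_{S^0}-\hat z_{S^0}\big).
\]
Taking $\ell_\infty$ norms, using submultiplicativity of $\III\cdot\III_\infty$ under matrix–vector and matrix–matrix products, and $\|\hat z_{S^0}\|_\infty\le1$, gives exactly the bound \eqref{eq:bound_on_dual_vector}.

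For the second claim I would bound $\|\Param^0_r-\hat\Param_r\|_\infty=\|\hat\Param_{r,S^0}-\Param^0_{r,S^0}\|_\infty$ straight from the $S^0$-block formula: it is at most $\III(\hQ_{S^0,S^0})^{-1}\III_\infty\big(\|\hG_{S^0}\|_\infty+\lambda\big)$. Then I would use the elementary inequalities $\III A\III_\infty\le\sqrt{k}\,\sigma_{\max}(A)$ for a $k\times k$ matrix and $\sigma_{\max}\big((\hQ_{S^0,S^0})^{-1}\big)=1/\lambda_{\min}(\hQ_{S^0,S^0})$ to conclude $\III(\hQ_{S^0,S^0})^{-1}\III_\infty\le\sqrt{k}/\lambda_{\min}(\hQ_{S^0,S^0})$. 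Plugging in hypothesis \eqref{eq:important_condition_on_gradient} — which is in fact slightly stronger than needed, since $1/(2k)\le1/(2\sqrt{k})$ — yields $\|\Param^0_r-\hat\Param_r\|_\infty\le\param_{\min}/2$. Finally, every step used only that $\cL$ is a quadratic with gradient $-\hG$ and Hessian $\hQ$, so the same computation, with the continuous-time expressions for $\hG,\hQ$ from Section \ref{sec:proof_cont_theorem_upper_bound_appendix}, establishes the claim for problem \eqref{eq:cont_reg_prob}.

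The calculations are routine; the points needing care are conceptual rather than technical. One must be explicit that $\hat z$ is the dual certificate \emph{attached to the restricted minimizer}, so that $\hat\Param_r-\Param^0_r$ is genuinely supported on $S^0$ and the block algebra is legitimate. It is also worth stressing that this lemma alone does not yet give signed-support recovery: that requires additionally the strict inequality $\|\hat z_{(S^0)^C}\|_\infty<1$, deduced in Proposition \ref{th:cond_to_hold} from conditions \eqref{eq:Conditions0}–\eqref{eq:Conditions1}, together with the $\ell_\infty$ bound just proved (which forces $\sign(\hat\Param_{r,S^0})=\sign(\Param^0_{r,S^0})$ since each nonzero $|\param^0_{ri}|\ge\param_{\min}$); only these combined promote the restricted minimizer to the unique global optimum carrying the correct signed support.
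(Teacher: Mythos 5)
Your proof is correct: the block decomposition of the quadratic stationarity condition $-\hG+\hQ(\hat{\Param}_r-\Param^0_r)+\lambda\hat z=0$ for the restricted (oracle) minimizer, the resulting explicit formula for $\hat z_{(S^0)^C}$, and the bound $\III(\hQ_{S^0,S^0})^{-1}\III_\infty\le\sqrt{k}/\lambda_{\min}(\hQ_{S^0,S^0})$ all check out, and your observation that the hypothesis \eqref{eq:important_condition_on_gradient} with $1/(2k)$ is slightly stronger than the $1/(2\sqrt{k})$ actually needed is accurate. The paper gives no proof of Lemma \ref{th:main_cond}, importing it directly from Proposition 1 of \cite{ravikumar2008high} and Proposition 1 of \cite{zhao}; your argument is exactly the standard primal--dual witness construction used in those references (with the correct caveat that strict dual feasibility and the sign conclusion are only assembled later, in Proposition \ref{th:cond_to_hold}), so it is essentially the same approach, written out in full.
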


\begin{proofof}{Proposition \ref{th:cond_to_hold}}\label{pf:compact_condition}
To guarantee that our estimated support is at least contained in the true support we need to impose that $\|\hat{z}_{S^C}\|_\infty < 1$. To guarantee that we do not introduce extra elements in estimating the support and also to determine the correct sign of the solution we need to impose that $\|\Param^0_r - \hat{\Param}_r\|_\infty \leq \param_{\min}/2$.

Now notice that since $\lambda_{\min}(Q^0_{{S^0},{S^0}}) = C_{\min}$ the relation $\lambda_{\min}(\hQ_{{S^0},{S^0}}) \geq C_{\min}/2$ is guaranteed as long as $ \III \hQ_{{S^0}, {S^0}} - Q^0_{{S^0}, {S^0}} \III _\infty \leq C_{\min}/2$. Using the norm triangle inequality together with Lemma \ref{th:inco_decomp} to bound $\III...\III_{\infty}$ in \eqref{eq:bound_on_dual_vector}, it is easy to see that the bounds of Proposition \ref{th:cond_to_hold} lead to $|\hat{z}_{(S^0)^C}|_{\infty} < 1$ and to \eqref{eq:important_condition_on_gradient} being verified. In turn, these lead to a correct recovery of the sign and support of $\Param^0_r$.   
\end{proofof}

%
%

\subsubsection{Proof of Propositions \ref{th:gradie_prob_bound} and \ref{en:bddQij}}
\label{sec:proof_of_concentration_bounds}
\begin{figure*}
\begin{equation}
\Phi_j = \left(
\begin{array}{ccccccccc}
\rho(m,j)&\rho(m-1,j)&\ldots&\rho(1,j)&\rho(0,j)&0&\ldots&0\\
\rho(m+1,j)&\rho(m,j)&\ldots&\rho(2,j)&\rho(1,j)&\rho(0,j)&\ldots&0\\
\vdots&\vdots&\ddots&\vdots&\vdots&\vdots&\ddots&0\\
\rho(m+n-1,j)&\rho(m+n-2,j)&\ldots&\rho(n,j)&\rho(n-1,j)&\rho(n-2,j)&\ldots&\rho(0,j)\\
\end{array}
\right).
\label{eq:phi_j_def}
\end{equation}
\hrulefill
\vspace{4pt}
\end{figure*}
To prove the concentration bounds of Propositions \ref{th:gradie_prob_bound} and \ref{en:bddQij} we need the following lemmas.

\begin{lemma}\label{proof:first_bound_egein_values}
Let $r,j \in [p]$ and let $\rho(\tau,j)$ represent a $p \times p$ matrix with all rows equal to zero except the $r^{th}$ row which equals the $j^{th}$ row of ${(I + \eta \Param^0)}^\tau$ (the $\tau^{th}$ power of $I + \eta \Param^0$ ). Let $\tilde{R}(j) \in \reals^{(n+m+1)p\times(n+m+1)p}$ be defined as in Eq.~\ref{eq:rtilde_def}.

Define $R(j) = 1/2 (\tilde{R} + \tilde{R}^*)$ and let $\nu_i$ denote its $i^{th}$ eigenvalue and assume $\sigma_{\max} \equiv \sigma_{\max} (I + \eta \Param^0) < 1$. Then,
\begin{align}
\sum^{p(n+m+1)}_{i=1}\nu_i &= 0,\label{eq:gradient_sum_eigen_bound}\\
\max_i |\nu_i|  &\leq \frac{1}{1-\sigma_{\max}} \, \text{   \;\;\; and \;\;\;   } \label{eq:gradient_eigen_bound}\\
\sum^{p(n+m+1)}_{i=1}\nu^2_i &\leq \frac{1}{2} \frac{n}{1-\sigma_{\max}}.\label{eq:gradient_sum_eigen2_bound}
\end{align} \label{th:matrix_proper_grad}
\end{lemma}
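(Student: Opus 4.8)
\begin{proofsketch}
Write $A\equiv I+\eta\,\Param^{0}$, so that $\sigma_{\max}=\|A\|_{2}$ is its largest singular value, and observe that $\rho(\tau,j)$ is the rank-one matrix $e_{r}v_{\tau}^{*}$ with $v_{\tau}\equiv(A^{*})^{\tau}e_{j}$ — indeed all its rows vanish except the $r$-th, which is the $j$-th row of $A^{\tau}$. The plan is to read the three estimates directly off the explicit block form \eqref{eq:rtilde_def}, using only elementary norm inequalities together with the submultiplicative bound $\|A^{\tau}\|_{2}\le\sigma_{\max}^{\tau}$, which in particular gives $\|\rho(\tau,j)\|_{2}=\|\rho(\tau,j)\|_{F}=\|v_{\tau}\|_{2}\le\sigma_{\max}^{\tau}$ since $\rho(\tau,j)$ is rank one.

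\emph{Structural observations.} From \eqref{eq:rtilde_def}, the $(a,b)$ block of $\tilde R(j)$ equals $\rho(a-b-1,j)$ whenever $a\ge m+1$ and $a-b\ge 1$, and vanishes otherwise; in particular $\tilde R(j)$ is strictly block-lower-triangular and its top $m+1$ block-rows are zero, so exactly $n$ of its block-rows are nonzero. Strict triangularity gives $\tr\tilde R(j)=0$ and $\tr\big(\tilde R(j)^{2}\big)=0$. Since the trace is transposition-invariant, $\sum_{i}\nu_{i}=\tr R(j)=\tfrac12\big(\tr\tilde R(j)+\tr\tilde R(j)^{*}\big)=0$, which is \eqref{eq:gradient_sum_eigen_bound}. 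Expanding $\|R(j)\|_{F}^{2}=\tfrac14\|\tilde R(j)+\tilde R(j)^{*}\|_{F}^{2}$ and using $\tr\big(\tilde R(j)^{2}\big)=0$ yields the identity $\|R(j)\|_{F}^{2}=\tfrac12\|\tilde R(j)\|_{F}^{2}$, which we will use for \eqref{eq:gradient_sum_eigen2_bound}.

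\emph{Operator norm.} As $R(j)$ is symmetric, $\max_{i}|\nu_{i}|=\|R(j)\|_{2}\le\tfrac12\big(\|\tilde R(j)\|_{2}+\|\tilde R(j)^{*}\|_{2}\big)=\|\tilde R(j)\|_{2}$. Writing $\tilde R(j)=\sum_{k=1}^{n+m}N_{k}\otimes\rho(k-1,j)$, where $N_{k}$ is the $0/1$ matrix supported on the $k$-th sub-diagonal with its top $m+1$ rows zeroed (a partial permutation, hence $\|N_{k}\|_{2}\le 1$), the triangle inequality together with the fact that singular values multiply under $\otimes$ give
\begin{equation*}
\max_{i}|\nu_{i}|\ \le\ \|\tilde R(j)\|_{2}\ \le\ \sum_{k=1}^{n+m}\|\rho(k-1,j)\|_{2}\ \le\ \sum_{\tau\ge 0}\sigma_{\max}^{\tau}\ =\ \frac{1}{1-\sigma_{\max}},
\end{equation*}
which is \eqref{eq:gradient_eigen_bound}.

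\emph{Frobenius norm.} Here the vanishing of the top $m+1$ block-rows is exactly what is needed. The block $\rho(\tau,j)$ sits on the $(\tau+1)$-st sub-diagonal, hence appears only in block-rows $a=m+1+s'$ with $a\ge\tau+1$, i.e. for $s'\in\{\max(0,\tau-m),\dots,n-1\}$ — at most $n$ values. Consequently, using $\|\rho(\tau,j)\|_{F}^{2}\le\sigma_{\max}^{2\tau}$,
\begin{equation*}
\sum_{i}\nu_{i}^{2}\ =\ \|R(j)\|_{F}^{2}\ =\ \tfrac12\|\tilde R(j)\|_{F}^{2}\ \le\ \frac{n}{2}\sum_{\tau\ge 0}\sigma_{\max}^{2\tau}\ =\ \frac{n}{2(1-\sigma_{\max}^{2})}\ \le\ \frac{n}{2(1-\sigma_{\max})},
\end{equation*}
which is \eqref{eq:gradient_sum_eigen2_bound}. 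The one genuinely delicate point is this block count: without using that the first $m+1$ block-rows are empty one would only get the multiplicity bound $n+m$, and hence the weaker constant $(n+m)/(2(1-\sigma_{\max}))$; it is precisely the truncated row structure of $\tilde R(j)$ that produces the factor $n$ demanded by the statement. All remaining manipulations are routine.
\end{proofsketch}
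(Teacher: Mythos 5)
Your proof is correct and follows essentially the same route as the paper's: the trace vanishes because the block diagonal is zero, the operator-norm bound comes from expanding along block sub-diagonals as Kronecker products with partial permutation matrices ($\|N_k\|_2\le 1$, just as the paper's $I_{1\tau},I_{2\tau}$), and the bound $\sum_i\nu_i^2\le \tfrac{n}{2(1-\sigma_{\max})}$ uses exactly the paper's key observation that each $\rho(\tau,j)$ occurs at most $n$ times because the top $m+1$ block rows of $\tilde R(j)$ vanish. The only blemish is a harmless off-by-one in your block indexing (the nonzero block rows are $a\ge m+2$, not $a\ge m+1$), which affects neither the multiplicity count of $n$ nor any of the three estimates.
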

\begin{proof}
First it is immediate to see that $\sum^{p(n+m+1)}_{i=1}\nu_i = Tr(R) = 0$. Let ${I_1}_\tau$ represent a $(n+m+1) \times (n+m+1)$ matrix with zeros everywhere and ones in the block-positions of $R(j)$ where $\rho(\tau,j)$ appears and ${I_2}_\tau$ represent a similar matrix but with ones in the block-position of $R(j)$ where $\rho(\tau,j)^*$ appears. Then $R$ can be written as,
\begin{equation}
R = \frac{1}{2} \left(  \sum^{m+n-1}_{\tau =  0} {I_1}_\tau \otimes   \rho(\tau,j)   +  {I_2}_\tau \otimes   \rho(\tau,j)^*  \right),
\end{equation}
where $\otimes$ denotes the Kronecker product of matrices. 
This expression can be used to compute an upper bound on $|\nu_i|$. Namely,
\begin{align}
\max_i{|\nu_i|} &= \sigma_{\max}(R) \nonumber \\
& \leq  \sum^\infty_{\tau = 0} \sigma_{\max} ( {I_1}_\tau \otimes   \rho(\tau,j) ) \nonumber \\
& \leq  \sum^\infty_{\tau = 0} \sigma_{\max} ( {I_1}_\tau)\sigma_{\max} (  \rho(\tau,j)  ) \nonumber \\
& \leq   \sum^\infty_{\tau = 0} \sigma_{\max} ( \rho(\tau,j) ) \nonumber \\
& \leq  \sum^\infty_{\tau = 0} \sigma_{\max}^{\tau} =\frac{1}{1-\sigma_{\max}}.
\end{align}
For the other bound we do,
\begin{align}
\sum^{(n+m+1)p}_{i=1}\nu^2_i &= Tr(R^2) \nonumber \\
& \leq  \frac{2}{4} \; n \; \sum^{\infty}_{\tau = 0} Tr(  \rho(\tau,j) \rho(\tau,j)^*  )\nonumber \\
&=\frac{1}{2}  n  \sum^{\infty}_{\tau = 0} \|  \rho(\tau,j)  \|^2_2 \nonumber \\
& \leq \frac{1}{2} n  \sum^{\infty}_{\tau = 0} \sigma_{\max}^{2\tau} \nonumber \\
& \leq \frac{1}{2} \frac{n}{1-\sigma_{\max}},
\end{align}
where in the last step we used the fact that $0 \leq \sigma_{\max} < 1$.
\end{proof}

\begin{lemma}
  Let $j \in [p]$. Define \footnote{Note that, with regards to Lemma \ref{th:matrix_proper_grad}, we are redefining the meaning of $\rho(\tau,j)$} $\rho(\tau,j) \in \reals^{1 \times p}$ to be the $j^{th}$ row of $(I + \eta \Param^0)^\tau$. Let $\Phi_j \in \reals^{n \times (n+m)p}$ be defined as in Eq.~\eqref{eq:phi_j_def}

Let $\nu_l$ denote the $l^{th}$ eigenvalue of the matrix $R(i,j) = 1/2 (\Phi_j^* \Phi_i + \Phi_i^* \Phi_j) \in \reals^{(n+m)p \times (n+m)p} $ (where $i \in [p]$) and assume $\sigma_{\max} \equiv \sigma_{\max} (I + \eta \Param^0) < 1$ then,
\begin{align}
|\nu_l| &\leq \frac{1}{(1-\sigma_{\max})^2} \text{   \;\;\; and  \;\;\;  }\\
\frac{1}{n}\sum^{(n+m)p}_{l = 1} \nu^2_l &\leq \frac{2}{(1-\sigma_{\max})^3}\left(1 + \frac{3}{2n} \frac{1}{1-\sigma_{\max}} \right).
\end{align} \label{th:matrix_proper_hess}
\end{lemma}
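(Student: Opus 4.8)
The plan is to mirror the argument used for Lemma~\ref{th:matrix_proper_grad}: expose the block-Toeplitz structure of $\Phi_j$ in \eqref{eq:phi_j_def} as a sum of Kronecker products and then bound its operator and Frobenius norms term by term. Concretely, I would write $\Phi_j = \sum_{\tau=0}^{n+m-1} S_\tau \otimes \rho(\tau,j)$, where $S_\tau \in \{0,1\}^{n\times(n+m)}$ marks the block positions at which the row block $\rho(\tau,j)$ appears, i.e. the block of $\Phi_j$ in block-row $l$ and block-column $c$ equals $\rho(m+l-c,j)$. Reading this off \eqref{eq:phi_j_def}, $S_\tau$ has exactly one $1$ in each block-row $l$ with $l \ge \tau-m+1$ (in column $m+l-\tau$), and these $1$'s lie in distinct columns, so each $S_\tau$ is a partial permutation matrix with $\sigma_{\max}(S_\tau)\le 1$; moreover the number of block-rows containing $\rho(\tau,j)$ is $\min(n,n+m-\tau)\le n$, and the supports of the summands are disjoint in $\tau$. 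Finally, since $\rho(\tau,j)$ is the $j^{\rm th}$ row of $(I+\eta\Param^0)^\tau$, submultiplicativity of the operator norm gives $\|\rho(\tau,j)\|_2 = \sigma_{\max}(\rho(\tau,j)) \le \sigma_{\max}((I+\eta\Param^0)^\tau)\le \sigma_{\max}^\tau$.

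For the first inequality, I would use the triangle inequality to get $\sigma_{\max}(\Phi_j) \le \sum_{\tau\ge0}\sigma_{\max}(S_\tau)\,\sigma_{\max}(\rho(\tau,j)) \le \sum_{\tau\ge 0}\sigma_{\max}^\tau = (1-\sigma_{\max})^{-1}$, and likewise for $\Phi_i$. Since $R(i,j)$ is real symmetric, $\max_l|\nu_l| = \sigma_{\max}(R(i,j)) \le \tfrac12(\sigma_{\max}(\Phi_j^*\Phi_i)+\sigma_{\max}(\Phi_i^*\Phi_j)) \le \sigma_{\max}(\Phi_i)\,\sigma_{\max}(\Phi_j) \le (1-\sigma_{\max})^{-2}$, which is the claimed bound.

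For the second inequality I would pass to the Frobenius norm, using $\sum_l\nu_l^2 = \tr(R(i,j)^2) = \|R(i,j)\|_F^2$. Writing $R(i,j)=\tfrac12(\Phi_j^*\Phi_i + \Phi_i^*\Phi_j)$ and applying the triangle inequality together with $\|\Phi_j^*\Phi_i\|_F = \|\Phi_i^*\Phi_j\|_F$ bounds this by $\|\Phi_i^*\Phi_j\|_F^2 = \tr(\Phi_i\Phi_i^*\Phi_j\Phi_j^*)$; since $\Phi_i\Phi_i^*$ and $\Phi_j\Phi_j^*$ are positive semidefinite $n\times n$ matrices, the trace inequality $\tr(AB)\le\lambda_{\max}(A)\tr(B)$ gives $\le \sigma_{\max}(\Phi_i)^2\,\|\Phi_j\|_F^2$. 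The disjoint-support decomposition yields $\|\Phi_j\|_F^2 = \sum_\tau \min(n,n+m-\tau)\,\|\rho(\tau,j)\|_2^2 \le n\sum_{\tau\ge0}\sigma_{\max}^{2\tau} = n(1-\sigma_{\max}^2)^{-1}$, so $\tfrac1n\sum_l\nu_l^2 \le (1-\sigma_{\max})^{-2}(1-\sigma_{\max}^2)^{-1}$, which already implies the stated bound. To reproduce the exact constants $2$ and $\tfrac{3}{2n}(1-\sigma_{\max})^{-1}$, I would instead expand $\tr(R(i,j)^2)$ directly into a quadruple sum over the Kronecker terms, bound each summand by the product of the geometric weights $\sigma_{\max}^{\tau}$ (handling the cross terms $\tr((\Phi_j^*\Phi_i)^2)$ and $\tr((\Phi_i^*\Phi_j)^2)$ via $|\tr(A^2)|\le\|A\|_F^2$), and identify the $\bigo(1/n)$ correction as the contribution of the $\bigo(m)$ truncated block-rows near the top of $\Phi_j$ in which fewer than $n$ copies of some $\rho(\tau,j)$ occur.

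The conceptual content is light: the Kronecker decomposition and the operator/trace inequalities are routine, and everything reduces to summing geometric series in $\sigma_{\max}$. I expect the only delicate point to be the bookkeeping of the multiplicities forced by the Toeplitz pattern in \eqref{eq:phi_j_def} and tracking the cross terms in $\tr(R(i,j)^2)$ tightly enough to land precisely on the constants in the statement.
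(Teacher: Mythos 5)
Your proposal is correct. The first inequality is handled exactly as in the paper: both arguments reduce to $\max_l|\nu_l|\le\sigma_{\max}(\Phi_i)\,\sigma_{\max}(\Phi_j)$ and to the bound $\sigma_{\max}(\Phi_j)\le(1-\sigma_{\max})^{-1}$ obtained from the geometric series $\sum_\tau\sigma_{\max}^\tau$, just as in Lemma~\ref{th:matrix_proper_grad}. For the second inequality you take a genuinely different route. The paper partitions $\Phi_i^*\Phi_j$ into four block matrices $\tilde A,\tilde B,\tilde C,\tilde D$, symmetrizes, and evaluates $\tr(A^2)+\tr(B^2)+2\tr(CD)$ by explicit sums of products of the $\rho(\tau,\cdot)$'s; the three resulting geometric-series bounds ($\tr(B^2)\le 2n(1-\sigma_{\max})^{-3}$ and $\tr(A^2),\tr(DC)\le(1-\sigma_{\max})^{-4}$) are precisely what produces the constants $2$ and $\tfrac{3}{2n}(1-\sigma_{\max})^{-1}$ in the statement. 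You instead use $\sum_l\nu_l^2=\|R(i,j)\|_F^2\le\|\Phi_i^*\Phi_j\|_F^2=\tr(\Phi_i\Phi_i^*\Phi_j\Phi_j^*)\le\sigma_{\max}(\Phi_i)^2\|\Phi_j\|_F^2$ together with the multiplicity count $\|\Phi_j\|_F^2\le n\sum_\tau\sigma_{\max}^{2\tau}$, arriving at $\tfrac1n\sum_l\nu_l^2\le(1-\sigma_{\max})^{-3}(1+\sigma_{\max})^{-1}$, which is in fact a sharper bound than the one stated and therefore implies it; your multiplicity bookkeeping ($\min(n,n+m-\tau)\le n$ occurrences of $\rho(\tau,j)$, disjoint block supports) is accurate. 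Consequently the final step you sketch, expanding $\tr(R(i,j)^2)$ as a quadruple sum to reproduce the exact constants, is unnecessary — it would essentially re-derive the paper's block computation — and you can simply note that your cleaner bound dominates. What your approach buys is brevity and a slightly better constant via standard trace/Frobenius inequalities; what the paper's buys is the bound in exactly the stated form.
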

\begin{proof}
The first bound can be proved in a trivial manner. In fact, since for any matrix $A$ and $B$ we have $\sigma_{\max}(A+B) \leq \sigma_{\max}(A) + \sigma_{\max}(B)$ and $\sigma_{\max}(AB) \leq \sigma_{\max}(A)\sigma_{\max}(B)$ we can write
\begin{align}
\max_l |\nu_l| &= \sigma_{\max} (1/2 (\Phi_j^* \Phi_i + \Phi_i^* \Phi_j)) \nonumber \\
 & \leq 1/2 (\sigma_{\max}(\Phi_j^* \Phi_i) +  \sigma_{\max}(\Phi_i^* \Phi_j))\\
&\leq \sigma_{\max}(\Phi_i^* \Phi_j) \nonumber \\
& \leq \frac{1}{(1-\sigma_{\max})^2},
\end{align}
where in the last inequality we used the fact $\sigma_{\max}(\Phi_j) \leq 1/ (1-\sigma_{\max})$. The proof of this last fact is just a copy of the proof of the bound \eqref{eq:gradient_eigen_bound} in Lemma \ref{th:matrix_proper_grad}.

Now notice that $\Phi_i^* \Phi_j$ can be written as a block matrix
\begin{equation}
\left(
\begin{array}{cccc}
\tilde{A} &  \tilde{D}\\
\tilde{C} &  \tilde{B}
\end{array}
\right)
\end{equation}
where $\tilde{A}, \tilde{B}, \tilde{C}$ and $\tilde{D}$ are block-matrices. Each block is a $p$ by $p$ matrix. $\tilde{A}$ has $p \times p$ blocks, $\tilde{B}$ has $n \times n$ blocks, $\tilde{C}$ has $n \times m$ blocks and $\tilde{D}$ has $m \times n$ blocks. If we index the blocks of each matrix with the indices $x,y$ these can be described in the following way
\begin{align}
\tilde{A}_{xy} &=\sum^m_{s=1} \rho(m-x+s,i)^* \rho(m-y+s,j) \\
\tilde{B}_{xy} &=\sum^{n-x}_{s=0} \rho(s,i)^* \rho(s+x-y,j),\;\;x \geq y\\
\tilde{B}_{xy} &=\sum^{n-y}_{s=0} \rho(s+y-x,i)^* \rho(s,j),\;\;x \leq y\\
\tilde{C}_{xy} &=\sum^{n-x}_{s=0} \rho(s,i)^* \rho(m-y+x+s,j)\\
\tilde{D}_{xy} &=\sum^{n-y}_{s=0} \rho(m-x+y+s,i)^* \rho(s,j).
\end{align}
With this in mind and denoting by $A,B,C$ and $D$ the `symmetrized' versions of these same matrices (e.g.: $A = 1/2 (\tilde{A} + \tilde{A}^*)$) we can write,
\begin{equation}
\sum^{(n+m)p}_{l=1} \nu^2_l = Tr(R(i,j)^2) = Tr(A^2) + Tr(B^2) + 2 Tr(CD). 
\end{equation}
We now compute a bound for each one of the terms. We exemplify in detail the calculation of the first bound only.
First write,
\begin{align}
Tr(A^2) &= \sum^m_{x=1} \sum^m_{y=1} Tr(A_{xy} A^*_{xy}).
\end{align}
Now notice that each $Tr(A_{xy} A^*_{xy})$ is a sum over $\tau_1,\tau_2  \in [m]$ of terms of the type,
\begin{align}
& (\rho(m-x+\tau_1,i)^* \rho(m-y+\tau_1,j) \nonumber \\
& \qquad + \rho(m-x+\tau_1,j)^* \rho(m-y+\tau_1,i)) \nonumber \\
& \times(\rho(m-y+\tau_2,j)^* \rho(m-x+\tau_2,i)  \nonumber \\
& \qquad + \rho(m-y+\tau_2,i)^* \rho(m-x+\tau_2,j)).
\end{align}
The trace of a matrix of this type can be easily upper bounded by
\begin{align}
& (\sigma_{\max})^{m - x + \tau_1 + m - y + \tau_1 + m - y + \tau_2 + m - x + \tau_2} \nonumber \\
& \qquad = (\sigma_{\max})^{2(m-x) +2(m-y) +2 \tau_1 + 2\tau_2}
\end{align}
which finally leads to
\begin{equation}
Tr(A^2) \leq \frac{1}{(1-\sigma_{\max})^4}.
\end{equation}
Similarly for the other terms
\begin{align}
Tr(B^2) &\leq \sum^{n,n}_{x,y} \sum_{\tau_1,\tau_2} \sigma_{\max}^{2 \tau_1 + 2 \tau_2+ 2|x - y|} \leq \frac{2n}{(1 - \sigma_{\max})^3}\\
Tr(DC) &= \sum^m_{x=1}  \sum^n_{y=1} Tr(C_{xy} D_{yx}) \nonumber \\
& \leq \sum^{m,n,n-y,n-y}_{x,y,\tau_1,\tau_2} \sigma_{\max}^{2 (m-x) + 2y + 2\tau_1 + 2\tau_2} \nonumber \\
& \leq \frac{1}{(1 - \sigma_{\max})^4}.
\end{align}
Putting all these together leads to the desired bound.
\end{proof}

\vspace{0.5cm}

\begin{proofof}{Proposition \ref{th:gradie_prob_bound}}
We will start by proving that this exact same bound holds when the probability of the event $\{ \|\hG_S\|_\infty > \epsilon\}$ is computed with respect to a trajectory $\{x(t)\}^n_{t=-m}$ that is initiated at instant $t = -m$ with the value $x(-m) = w(-m)$. Assume we have done so. Now notice that as $m \rightarrow \infty$, $X$ converges in distribution to $n$ consecutive samples from the model \eqref{eq:DiscreteTimeModel} when this is initiated from stationary state. Since $\|\hG_S\|_\infty$ is a continuous function of $X = [x(0),...,x(n-1)]$, by the Continuous Mapping Theorem, $\|\hG_S\|_\infty$ converges in distribution to the corresponding random variable in the case when the trajectory $\{x(i)\}^n_{i=0}$ is initiated from stationary state. Since the probability bound does not depend on $m$ we have that this same bound holds for stationary trajectories too.

We now prove our initial claim. Recall that  $\hG_j = (X_j W_r^*)/(n\eta)$. Since $X$ is a linear function of
the independent Gaussian random variables $W$ we can write
$X_j W_r^* = \eta  z^* R(j) z$, where $z\in \reals^{p (n+m+1)}$ is a 
vector of i.i.d. $\normal(0,1)$ random variables and $R(j) \in \reals^{p(n+m+1)\times
p(n+m+1)}$ is the symmetric matrix defined in Lemma \ref{th:matrix_proper_grad}.

Now apply the standard Bernstein method. First by union bound we have
\begin{align*}
\prob\big\{\|\hG_S\|_\infty > \epsilon\big\} &\leq  2|S| 
\,\max_{j\in S} \prob\big\{z^* R(j) z > n \epsilon\big\} \, .
\end{align*}
Next denoting by $\{\nu_i\}_{1\le i\le p(n+m+1)}$ the eigenvalues
of $R(j)$, we have, for any $\gamma>0$,
\begin{align*}
& \prob\big\{z^* R(j) z > n \epsilon\big\} \nonumber \\
& \qquad = \prob\Big\{\sum^{p(n+m+1)}_{i=1}\nu_i z^2_i > n \epsilon \Big\} \\
&\qquad \leq  e^{-n\gamma\eps}\, \prod_{i=1}^{p(n+m+1)}
\E\big\{e^{\gamma\nu_i z_i^2}\big\}\\
& \qquad =\exp\left(-n \Big(\gamma \epsilon +  \frac{1}{2n} \sum^{(n+m+1)p}_{i=1} \log(1-2 \nu_i \gamma)  \Big)\right)\, .
\end{align*}
Let $\gamma = \frac{1}{2} (1-\sigma_{\max}) \epsilon$. Using the bound obtained for $\max_i{| \nu_i |}$  in Eq.~\eqref{eq:gradient_eigen_bound} (Lemma \ref{th:matrix_proper_grad}) we have $|2 \nu_i \gamma| \leq  \epsilon$. Now notice that if $|x|<1/2$ then $\log (1-x) > -x-x^2$. Thus, if we assume $\epsilon < 1/2$ and given that $\sum^{(n+m+1)p}_{i=1}\nu_i = 0$ (see Eq.~\eqref{eq:gradient_sum_eigen_bound} in Lemma \ref{proof:first_bound_egein_values}) we can continue the chain of inequalities,
\begin{align}
&\prob(\|\hG_S\|_\infty > \epsilon) \nonumber \\
& \qquad \leq 2 |S| \max_{j} \exp \left(-n \Big(\gamma \epsilon - 2 \gamma^2 \frac{1}{n} \sum^{(n+m+1)p}_{i=1} \nu^2_i \Big)\right) \nonumber \\
& \qquad \leq 2 |S| \exp \left(- \frac{n}{2} (1-\sigma_{\max}) \epsilon^2 \Big(1 + \frac{1}{2} \frac{1-\sigma_{\max}}{1-\sigma_{\max}}\Big)\right) \nonumber \\
& \qquad \leq 2 |S| \exp \left(-\frac{n}{4}  (1-\sigma_{\max}) \epsilon^2\right).
\end{align}
where the second inequality is obtained using the bound in Eq.~\eqref{eq:gradient_sum_eigen2_bound} from Lemma \ref{proof:first_bound_egein_values}.
\end{proofof}


\vspace{0.5 cm}

\begin{proofof}{Proposition \ref{en:bddQij}}
The proof is very similar to that of proposition \ref{th:gradie_prob_bound}. We will first show that the bound
\begin{equation}
\prob(|\hQ_{ij} - \E(\hQ_{ij}) |  > \epsilon) \leq   2 e^{-\frac{n}{32 \eta^2}  (1-\sigma_{max})^3 \epsilon^2},
\end{equation}
holds in the case where the probability measure and expectation are taken with respect to trajectories $\{x(i)\}^n_{i=0}$ that started at time instant $t = -m$ with $x(-m) = w(-m)$. Assume we have done so. Now notice that as $m \rightarrow \infty$, $X$ converges in distribution to $n$ consecutive samples from the model \ref{eq:DiscreteTimeModel} when this is initiated from stationary state. In addition, as $m \rightarrow \infty$, we have from lemma \ref{th:EQhat_Q_start_bound} that $\E(\hQ_{ij}) \rightarrow Q^0_{ij}$. Since $\hQ_{ij}$ is a continuous function of $X = [x(0),...,x(n-1)]$, a simple application of the Continuous Mapping Theorem plus the fact that the upper bound is continuous in $\epsilon$ leads us to conclude that the bound also holds when the system is initiated from stationary state.

To prove our initial statement first recall the definition of $\hQ$ and notice that we can write,
\begin{equation} \label{eq:matrix_formula_for_Q_hat}
\hQ_{ij}  =   \frac{\eta}{n} z^* R(i,j) z,
\end{equation}
where $z \in \reals^{m+n}$ is a vector of i.i.d. $\normal(0,1)$ and $R(i,j) \in \reals^{(n+m) \times (n+m)} $ is defined as in lemma \ref{th:matrix_proper_hess}. Letting $\nu_l$ denote the $l^{th}$ eigenvalue of the symmetric matrix $R(i,j)$ we can further write,
\begin{equation}
\hQ_{ij} - \E (\hQ_{ij}) = \frac{\eta}{n} \sum^{(n+m)p}_{l = 1} \nu_l (z^2_l - 1).
\end{equation}
By Lemma \ref{th:matrix_proper_hess} we know that,
\begin{align}
|\nu_l| &\leq \frac{1}{(1-\sigma_{\max})^2} \text{  \;\;\; and \;\;\;  }\\
\frac{1}{n}\sum^{(n+m)p}_{l = 1} \nu^2_l &\leq \frac{2}{(1-\sigma_{\max})^3}\left(1 + \frac{3}{2n} \frac{1}{1-\sigma_{\max}} \right) \nonumber \\
& \leq \frac{3}{(1-\sigma_{\max})^3},
\end{align}
where we applied $T > 3/D$ in the last line.

Now we are done since applying Bernstein method, this time with $\gamma = 1/8 \, (1-\sigma_{\max})^3 \epsilon / \eta$, and making again use of the fact that $\log(1-x)> -x-x^2$ for $|x|<1/2$ we get,
\begin{align}
& \prob ( \hQ_{ij} - \E (\hQ_{ij}) > \epsilon ) \nonumber \\
& \qquad = \prob\Big( \sum^{(n+m)p}_{l = 1} \nu_l (z^2_l - 1) > \epsilon n / \eta \Big)  \nonumber \\
&\qquad \leq e^{-\frac{\gamma \epsilon n}{\eta}} e^{- \gamma \sum^{(n+m)p}_{l=1} \nu_l} e^{-1/2 \sum^{(m+n)p}_{l=1} \log (1-2 \gamma \nu_l)} \nonumber \\
&\qquad \leq e^{-\frac{\gamma \epsilon n}{\eta} - \gamma \sum^{(n+m)p}_{l=1} \nu_l + \gamma \sum^{(n+m)p}_{l=1} \nu_l +  2 \gamma^2 \sum^{(n+m)p}_{l=1} \nu^2_l} \nonumber \\
&\qquad \leq e^{-\frac{n}{32 \eta^2} (1-\sigma_{\max})^3 \epsilon^2}.
\end{align}
Above, in order to apply the bound on $\log(1-x)$, we require that $\epsilon < 2/D$.

An analogous reasoning leads us to,
\begin{equation}
\prob ( \hQ_{ij} - \E (\hQ_{ij}) < - \epsilon ) \leq e^{-\frac{n}{32 \eta^2} (1-\sigma_{\max})^3 \epsilon^2}
\end{equation}
and the results follows.

\end{proofof}

\begin{lemma}
As before, assume $\sigma_{\max} \equiv \sigma_{\max} (I + \eta \Param^0) < 1 $ and consider that model \eqref{eq:DiscreteTimeModel} was initiated at time $-m$ with
$x(-m) = w(-m)$ then
\begin{equation}
|\E(\hQ_{ij}) - Q^0_{ij}| \leq \frac{1}{n+m} \frac{\eta}{( 1 - \sigma_{\max})^2  }.
\end{equation} \label{th:EQhat_Q_start_bound}
\end{lemma}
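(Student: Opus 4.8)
The plan is to evaluate $\E(\hQ_{ij})$ in closed form, compare it to the series representation of $Q^0$ coming from the modified Lyapunov equation \eqref{eq:Lyapunov2}, and control the difference by a geometric tail that is then averaged over the $n$ samples used to form $\hQ$.

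Write $A = I+\eta\,\Param^0$, so that $\sigma_{\max} = \sigma_{\max}(A)<1$ by hypothesis. Unrolling \eqref{eq:DiscreteTimeModel} from time $-m$ with $x(-m)=w(-m)$ gives $x(t)=\sum_{\tau=0}^{t+m} A^\tau w(t-\tau)$, and since the $w(s)$ are independent $\normal(0,\eta I)$ this yields $M_t := \E[x(t)x(t)^*] = \eta\sum_{\tau=0}^{t+m} A^\tau (A^*)^\tau$. Next I would record that $Q^0 = \eta\sum_{\tau=0}^{\infty} A^\tau (A^*)^\tau$: multiplying \eqref{eq:Lyapunov2} by $\eta$ and completing the product shows $Q^0 = A Q^0 A^* + \eta I$, and iterating this $N$ times leaves a remainder $A^N Q^0 (A^*)^N$ of spectral norm at most $\sigma_{\max}^{2N}\|Q^0\|\to 0$, which also certifies convergence of the series. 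Subtracting, $Q^0 - M_t = \eta\sum_{\tau=t+m+1}^{\infty} A^\tau (A^*)^\tau$ is positive semidefinite, so each of its entries is at most its largest eigenvalue, i.e. $|Q^0_{ij}-(M_t)_{ij}| \le \eta\sum_{\tau=t+m+1}^{\infty}\sigma_{\max}^{2\tau} = \eta\,\sigma_{\max}^{2(t+m+1)}/(1-\sigma_{\max}^2)$.

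Since $\hQ_{ij} = \frac1n\sum_{t=0}^{n-1} x_i(t)x_j(t)$, averaging over $t$ gives $|Q^0_{ij}-\E(\hQ_{ij})| \le \frac1n\sum_{t=0}^{n-1}|Q^0_{ij}-(M_t)_{ij}| \le \eta\,\sigma_{\max}^{2(m+1)}(1-\sigma_{\max}^{2n})/\big(n(1-\sigma_{\max}^2)^2\big)$ after summing the geometric series. To reach the stated form I would then use the elementary scalar bound $a^{m+1}(1-a^{n}) \le n/(n+m)$ for $a\in[0,1)$ (maximize $a\mapsto a^{m+1}-a^{m+n+1}$ on $[0,1]$: the interior critical point satisfies $a^n=(m+1)/(m+n+1)$, so the maximum is at most $n/(m+n+1) < n/(n+m)$) applied with $a=\sigma_{\max}^2$, together with $1-\sigma_{\max}^2 \ge 1-\sigma_{\max}$. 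This bounds the right-hand side by $\eta/\big((n+m)(1-\sigma_{\max})^2\big)$, which is the claim.

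The argument is essentially a chain of explicit computations, and the only non-mechanical ingredient is the scalar inequality $a^{m+1}-a^{m+n+1}\le n/(n+m)$. The one place that genuinely requires care — which I would flag as the main (mild) obstacle — is justifying the series manipulation for $Q^0$ so that it may be subtracted from the finite sum defining $M_t$ entrywise and then bounded entrywise via its spectral norm; all of this follows immediately from $\sigma_{\max}<1$, but it is the step where a sloppy treatment could hide a gap.
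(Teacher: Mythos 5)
Your proof is correct and follows essentially the same route as the paper's: both expand $x(t)$ in the noise to represent $\E(\hQ)$ and $Q^0$ as partial and full sums of the series $\eta\sum_{\tau}(I+\eta\Param^0)^{\tau}\big((I+\eta\Param^0)^{*}\big)^{\tau}$, bound the entries of the difference by powers of $\sigma_{\max}$, sum the geometric series, and close with an elementary scalar inequality. The only differences are organizational (you bound the tail for each sample time $t$ and then average over $t$, instead of exchanging the order of summation first) and the particular closing inequality $a^{m+1}(1-a^{n})\le n/(n+m)$, both of which are fine.
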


\begin{proof}
Let $\rho = I + \eta \Param^0$. 
Taking the expectation of $\hat{Q}_{ij}$ in \eqref{eq:matrix_formula_for_Q_hat}, and
recalling that $z$ is a vector of i.i.d. standard
Gaussian variables, we can write,
\begin{equation}\label{eq:sum_formula_for_Q_hat}
\E(\hQ_{ij}) = \eta \sum^{n+m-1}_{l = 0} \frac{m+n-l}{n+m} (\rho^l {\rho^*}^l)_{ij}.
\end{equation}
We also have that
\begin{equation}
Q^0_{ij} = \eta \sum^{\infty}_{l = 0} ({\rho}^l {\rho^*}^l)_{ij}.
\end{equation}
This last expression can be proved, for example, by taking
$n \rightarrow \infty$ in \eqref{eq:sum_formula_for_Q_hat}.
Putting these two expressions together we obtain
\begin{align}
& Q^0_{ij} - \E(\hQ_{ij}) \nonumber \\
& \qquad = \eta \left(\sum^{\infty}_{l = m+n} ({\rho}^l {{\rho}^*}^l)_{ij} + \sum^{n+m-1}_{l = 1} \frac{l}{m+n} ({\rho}^l {{\rho}^*}^l)_{ij} \right).
\end{align}
Using the fact that for any matrix $A$ and $B$ $\max_{ij}(A_{ij}) \leq \sigma_{\max} (A) $, $\sigma_{\max}(AB) \leq \sigma_{\max}(A)\sigma_{\max}(B)$ and $\sigma_{\max}(A + B) \leq \sigma_{\max}(A) + \sigma_{\max}(B)$, and introducing the notation $\zeta = \rho^2 $, we can write,
\begin{align}
|\E(\hQ_{ij}) - Q^0_{ij}| &\leq \eta \left( \frac{\zeta^{n+m}}{1-\zeta}  + \frac{\zeta}{n+m} \sum^{m+n-2}_{l = 0} \zeta^ l\right) \nonumber \\
& = \frac{\eta (\zeta^2 + \zeta^{n+m} -2 \zeta^{m+n+1})}{(m+n)(1 - \zeta)^2} \nonumber \\
&\leq  \frac{\eta}{(m+n)( 1 - \sigma_{max})^2  }.
\end{align}
Above, we used the fact that for $\zeta \in [0,1]$ and $n \in \naturals$ we have $1-\zeta \geq 1-\sqrt{\zeta}$ and $\zeta^2 + \zeta^n - 2\zeta^{1+n} \leq 1$.
\end{proof}

%
%

\vspace{0.5cm}

\subsubsection{Proof of Theorem \ref{th:main_discrete} for discrete case system}
\label{sc:full_main_disc_theo_proof}

In order to prove Theorem \ref{th:main_discrete} we need to compute the probability that the conditions given by Proposition \ref{th:cond_to_hold} hold.

From the statement of the theorem we have that the two conditions, $\alpha, C_{\min} > 0$, of Proposition \ref{th:cond_to_hold} hold.

In order to make the first condition on $\hG$ imply the second condition on $\hG$ we assume that 
\begin{equation}
\frac{\lambda \alpha}{3} \leq \frac{\param_{\min} C_{\min}}{4 k} - \lambda 	
\end{equation}
which is guaranteed to hold if
\begin{equation}
\lambda \leq \param_{\min} C_{\min} / 8k. \label{eq:main_disc_lambda} 
\end{equation}

We also combine the two last conditions on $\hQ$ and obtain the sufficient condition
\begin{equation} \label{eq:equivalent_condition_of_norm_of_Q}
 \III \hQ_{[p], {S^0}} - Q^0_{[p], {S^0}} \III _\infty \leq \frac{\alpha}{12} \frac{C_{\min}}{\sqrt{k}}. 
\end{equation}
Note that $[p]=S^0 \cup (S^0)^c$.

We then impose that both the probability that condition \eqref{eq:equivalent_condition_of_norm_of_Q} on $\hQ$ fails and the probability that condition \eqref{eq:Conditions0} on $\hG$ fails are upper bounded by $\delta/2$. 
Using Proposition \ref{th:gradie_prob_bound}, we can guarantee that the condition on $\hG$ fails with probability smaller than $\delta/2$ if we set
\begin{equation}
\lambda^2 = 36 \alpha^{-2} (n\eta D)^{-1} \log (4p/ \delta).
\end{equation}
Since we also want \eqref{eq:main_disc_lambda} to be
satisfied, we substitute $\lambda$ from the previous expression in \eqref{eq:main_disc_lambda} and we conclude that $n$ must satisfy
\begin{equation}\label{eq:bdd_n1}
n \geq 2304 k^2 {C_{\min}}^{-2} {\param_{\min}}^{-2} \alpha^{-2} (D \eta)^{-1} \log(4p/ \delta).
\end{equation}
Since in addition, the application of the probability bound in Proposition \ref{th:gradie_prob_bound} requires that
\begin{equation}
\frac{\lambda^2 \alpha^2}{9} < 1/4,
\end{equation}
we need to impose further that,
\begin{equation}\label{eq:bdd_n2}
n \geq 16 (D \eta)^{-1} \log(4 p /\delta).
\end{equation}

To use Corollary \ref{th:prob_bound_inft_matrix_hess} for computing the probability that the condition on $\hQ$ holds we need that,
\begin{equation}\label{eq:bdd_n3}
n \eta > 3/D,
\end{equation}
and
\begin{equation}
\frac{\alpha C_{\min}}{12 \sqrt{k}} <  2k D^{-1}.
\end{equation}
The last expression imposes the following conditions on $k$,
\begin{equation}
k^{3/2} > 24^{-1}  \alpha C_{\min} D \label{eq:main_th_restric_k}.
\end{equation}

We then have that the condition on $\hQ$ holds with probability smaller than $1/2$ if 
\begin{equation}\label{eq:bdd_n4}
n > 4608 \eta^{-1} k^{3} \alpha^{-2} {C_{\min}}^{-2} D^{-3} \log {4pk / \delta}.
\end{equation}
Note that the restriction \eqref{eq:main_th_restric_k} on $k$ looks unfortunate but, since $k \geq 1$, we can actually show it always holds. Just notice $\alpha < 1$ and that
\begin{align}
& \sigma_{\max}(Q^0_{{S^0},{S^0}}) \leq \sigma_{\max}(Q^0) \leq \frac{\eta}{1-\sigma_{\max}} \nonumber \\
& \qquad \Leftrightarrow D \leq \sigma_{\max}^{-1}(Q^0_{{S^0},{S^0}})
\end{align}
therefore $C_{\min} D \leq \sigma_{\min}(Q^0_{{S^0},{S^0}}) / \sigma_{\max}(Q^0_{{S^0},{S^0}}) \leq 1$. This last expression also allows us to simplify the four restrictions on $n$ (namely \eqref{eq:bdd_n1}, \eqref{eq:bdd_n2}, \eqref{eq:bdd_n3} and \eqref{eq:bdd_n4}) into a single one that dominates them. In fact, since $C_{\min} D \leq 1$ we also have $C_{\min}^{-2} D^{-2} \geq C_{\min}^{-1} D^{-1} \geq 1$ and this allows us to conclude that the only two conditions on $n$ that we actually need to impose are the one at Equations \eqref{eq:bdd_n1}, and \eqref{eq:bdd_n4}. A little more of algebra shows that these two inequalities are satisfied if

\begin{equation}
n \eta > \frac{10^4  k^2 (k D^{-2} + \param_{\min}^{-2})}{\alpha^{2} D C_{\min}^{2} }  \log(4p k/ \delta).
\end{equation}
This concludes the proof of Theorem \ref{th:main_discrete}.\\

%
%

\section{Proofs of the Lower Bounds on the Sample-Complexity of General Reconstruction Algorithms}

In this section we prove Theorem \ref{th:linear_lbound_sparse} and Theorem
\ref{th:main_lbound} to Theorem \ref{th:non_linear_lbound}.
 
Throughout, $\{x(t)\}_{t\ge 0}$ 
is assumed to be a stationary process. It is immediate to check
that under the assumptions of the Theorems \ref{th:linear_lbound_sparse} and \ref{th:linear_lbound_dense}, the SDE admit a unique stationary measure, with bounded covariance $Q^0$. Recall that
\begin{align}
Q^0 &= \E\{x(0) x(0)^*\} - \E\{x(0)\}(\E\{x(0)\})^*\\
& =\E\{x(t) x(t)^*\} - \E\{x(t)\}(\E\{x(t)\})^*.
\end{align}

%
%

\subsection{A general bound for linear SDEs}

Before passing to the actual proofs, it 
is useful to establish a general bound for linear SDEs  
(\ref{eq:BasicModelLin}) with symmetric
interaction matrix $\Param^0$. 
\begin{lemma} \label{th:mut_inf_bound_linear}
Assume that  $\{x(t)\}_{t\ge 0}$ is a stationary process
generated by the linear SDE (\ref{eq:BasicModelLin}), with $\Param^0$ symmetric.
Let $\hM_T(X^T)$ be an estimator of $M(\Param^0)$ based on $X^T$. 
If  $\prob(\hM_T(X^T) \neq M(\Param^0) ) <\frac{1}{2}$ then
\begin{equation} \label{eq:bound_linear}
T \geq \frac{H(M(\Param^0)) - \log(|\mathcal{M}|) - 2 I(\Param^0;x(0)) - 2}{ \frac{1}{2}\tr
\{ \E \{-\Param^0 \}-   (\E\ \{ -{(\Param^0)}^{-1} \})^{-1} \} \}},
\end{equation}
where $|\mathcal{M}|$ is the size of the alphabet
of $M(\Theta^0)$.
\end{lemma}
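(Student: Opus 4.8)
The plan is to obtain the bound as a specialization of Corollary \ref{th:main_simpler_lbound} to the linear drift $F(x;\Param^0)=\Param^0x$, the only real work being an upper bound on the conditional-variance term in that corollary's denominator. First I would record two structural facts. Since the drift of \eqref{eq:BasicModelLin} has no constant term, the stationary law of $x(0)$ conditioned on $\Param^0$ is $\normal(0,Q^0)$, so $\E_{x(0)|\Param^0}\{x(0)\}=0$ and $\E_{x(0)|\Param^0}\{x(0)x(0)^*\}=Q^0$; and when $\Param^0$ is symmetric, Lyapunov's equation \eqref{eq:Lyapunov} becomes $\Param^0Q^0+Q^0\Param^0+I=0$, whose unique solution is $Q^0=-\tfrac12(\Param^0)^{-1}$ (well defined, since stationarity forces the symmetric matrix $\Param^0$ to be negative definite).

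Next I would bound $D:=\E_{x(0)}\{\Var_{\Param^0|x(0)}(\Param^0x(0))\}$, the denominator in \eqref{eq:main_simpler_bound}. Since a conditional variance is minimized by the corresponding conditional mean, for every fixed symmetric $A\in\reals^{p\times p}$ one has the pointwise inequality $\Var_{\Param^0|x(0)}(\Param^0x(0))\le\E_{\Param^0|x(0)}\{\|(\Param^0-A)x(0)\|_2^2\}$. Taking $\E_{x(0)}$, then conditioning on $\Param^0$ and inserting $\E_{x(0)|\Param^0}\{x(0)x(0)^*\}=Q^0=-\tfrac12(\Param^0)^{-1}$, and using $(\Param^0-A)(\Param^0)^{-1}(\Param^0-A)=\Param^0-2A+A(\Param^0)^{-1}A$, I get
\[
D\ \le\ -\tfrac12\,\E\{\tr\Param^0\}+\tr A+\tfrac12\,\tr(APA),\qquad P:=\E\{-(\Param^0)^{-1}\}\succ 0.
\]
The right-hand side is a convex quadratic in the symmetric matrix $A$; completing the square, $\tr A+\tfrac12\tr(APA)=\tfrac12\tr\big((A+P^{-1})P(A+P^{-1})\big)-\tfrac12\tr(P^{-1})\ge-\tfrac12\tr(P^{-1})$, with equality at $A=-P^{-1}$. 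Hence $D\le-\tfrac12\E\{\tr\Param^0\}-\tfrac12\tr(P^{-1})=\tfrac12\tr\{\E\{-\Param^0\}-(\E\{-(\Param^0)^{-1}\})^{-1}\}$, exactly the denominator appearing in \eqref{eq:bound_linear}.

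To finish, I would plug this into Corollary \ref{th:main_simpler_lbound}: it gives $T\ge(2H(M(\Param^0))-\log|\mathcal{M}|-2I(\Param^0;x(0))-2)/D$, and since $H(M(\Param^0))\ge 0$ one may replace $2H$ by $H$ in the numerator without increasing it, while $D$ is bounded above by the displayed denominator of \eqref{eq:bound_linear}; combining these yields the claim (if the resulting numerator is nonpositive the bound is trivial). There is no deep obstacle here — the lemma is essentially a corollary of the mutual-information identity of Theorem \ref{th:main_lbound} plus a one-line matrix optimization; the points that need care are keeping the dependence between $x(0)$ and $\Param^0$ straight (hence conditioning on $\Param^0$ \emph{before} using the Lyapunov identity) and checking that $P=-\E\{(\Param^0)^{-1}\}=2\,\E\{x(0)x(0)^*\}$ is indeed symmetric positive definite, so that $A=-P^{-1}$ is admissible.
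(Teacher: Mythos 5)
Your proposal is correct and follows essentially the same route as the paper: specialize Corollary \ref{th:main_simpler_lbound}, then upper bound $\E_{x(0)}\{\Var_{\Param^0|x(0)}(\Param^0 x(0))\}$ by the mean-square error of a linear estimator of $\Param^0 x(0)$ from $x(0)$, using the symmetric Lyapunov identity $Q^0=-\tfrac12(\Param^0)^{-1}$. The only cosmetic difference is that you derive the best linear estimator by completing the square, whereas the paper directly plugs in $B=(\E\{\Param^0Q^0\})(\E\{Q^0\})^{-1}$, which under that identity equals your optimizer $A=-\big(\E\{-(\Param^0)^{-1}\}\big)^{-1}$.
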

\begin{proof}
The bound follows from Corollary \ref{th:main_simpler_lbound}
after showing that
\begin{align}
& \E_{x(0)}  \{ \Var_{\Param^0|x(0) } (\Param^0 x(0))) \nonumber \\
& \qquad \le (1/2)\tr\{ \E \{-\Param^0 \}- (\E\ \{ -{(\Param^0)}^{-1} \})^{-1} \}.
\end{align}

First note that
\begin{align}
& \E_{x(0)}  \{ \Var_{\Param^0|x(0) } (\Param^0 x(0)) \}  \nonumber \\
& \qquad = \E_{x(0)} \|  \Param^0 x(0)  - \E_{\Param^0|x(0) } (\Param^0 x(0)|x(0)) \|^2_2.\label{eq:norm2_err}
\end{align}
The quantity in \eqref{eq:norm2_err} can be thought of as the $\ell_2$-norm error of estimating $\Param^0 x(0)$ based on $x(0)$ using $\E_{\Param^0|x(0) }(\Param^0 x(0)|x(0))$. Since conditional expectation is the minimal mean square error estimator, replacing 
$\E_{\Param^0|x(0) } (\Param^0 x(0)|x(0))$ by any estimator of $\Param^0 x(0)$ based on $x(0)$ gives an upper bound for the expression in \eqref{eq:norm2_err}. We choose as an estimator a linear estimator, i.e., an estimator of the form $B x(0)$ where $B = (\E_{\Param^0} \Param^0 Q^0) (\E_{\Param^0} Q^0)^{-1}$. We then have
\begin{align} \label{eq:err_linear_est}
 &\E_{x(0)} ||  \Param^0 x(0)  - \E_{\Param^0|x(0) } (\Param^0 x(0)|x(0)) ||^2_2 \nonumber \\
& \qquad \leq  \E_{x(0)} ||  \Param^0 x(0)  - B x(0) ||^2_2  \nonumber \\
& \qquad =\tr \{ \E \{\Param^0 x(0) (x(0))^* {\Param^0}^*\} \} \nonumber \\
& \qquad \qquad - 2 \tr \{  B \E \{ x(0) (x(0))^* {\Param^0}^*\} \} \nonumber\\
& \qquad \qquad + \text{Tr} \{ B \E\{ x(0) (x(0))^* \} B^* \}.
\end{align}
Furthermore, for a linear system, $Q^0$ satisfies the Lyapunov equation $\Param^0 Q^0  + Q^0 {(\Param^0)}^* + I = 0$.
For $\Param^0$ symmetric, this implies $Q^0 = -(1/2) {(\Param^0)}^{-1}$. 
Substituting this expression in \eqref{eq:norm2_err} and 
\eqref{eq:err_linear_est} finishes the proof.
\end{proof}

%
%

\subsection{Proof of Theorem \ref{th:linear_lbound_sparse}}

We prove Theorem \ref{th:linear_lbound_sparse} by showing that the same complexity bound holds in the case when we are trying to estimate the signed support of $\Param^0$ for an $\Param^0$ that is uniformly randomly chosen with a distribution supported on ${\cA}^{(S)}$ and we simultaneously require that the average probability of error is smaller than $1/2$.
This guarantees that, unless the bound holds, there exists $A \in 
{\cA}^{(S)}$ for which the probability of error is bigger than $1/2$.
The complexity bound for random matrices $\Param^0$ is proved using Lemma \ref{th:mut_inf_bound_linear} together with Lemma \ref{th:random_matrix_calc_sparse} about random matrices.

More specifically, we generate $\Param^0$ at random as follows.
Let $G$ be the random matrix constructed from 
the adjacency matrix of a uniformly random $k$-regular graph.
Generate $\tilde{\Param^0}$ 
by flipping the sign of each non-zero entry in  $G$ 
with probability $1/2$ independently. We  define 
$\Param^0$ to be the random matrix $\Param^0 = -(\gamma + 2\param_{\min}\sqrt{k-1})I + \param_{\min} \tParam^0$ where  $\gamma=\gamma(\tParam^0)>0$ is the smallest value such that the 
maximum eigenvalue of $\Param$ is smaller than $-\rho$.
This guarantees that $\Param^0$ satisfies the four properties
of the class $\cA^{(S)}$.

The following lemma encapsulates the necessary random matrix calculations to prove the complexity bound
for random matrices.
\begin{lemma} \label{th:random_matrix_calc_sparse}
Let $\Param$ be a random matrix defined as above and 
\begin{align*}
Q(\param_{\min},k,\rho)\equiv \lim_{p \rightarrow \infty} \frac{1}{p} \{ \tr \{ \E(-\Param) \}  - \tr \{ (\E(-\Param^{-1}))^{-1} \} \}.
\end{align*}
Then, there exists a constant  $C'$ only dependent on $k$ such that
\begin{align}
&Q(\param_{\min},k,\rho)  \leq \min \Big \{\frac{C' k \param^2_{\min}}{\rho} ,\frac{k \param_{\min}}{\sqrt{k-1}}\Big\}.
\end{align}
\end{lemma}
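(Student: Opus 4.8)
\emph{Proof plan.} The strategy is to reduce $Q(\param_{\min},k,\rho)$ to the diagonal Green's function of the infinite $k$-regular tree and then bound it by elementary estimates. First I would show that $\E\big(-(\Param^0)^{-1}\big)$ is a scalar multiple of the identity, say $\E\big(-(\Param^0)^{-1}\big) = a\,I$. This follows from the gauge invariance of the ensemble: for any fixed diagonal sign matrix $D = {\rm diag}(\pm 1,\dots,\pm 1)$ the map $\tParam^0 \mapsto D\tParam^0 D$ preserves the law of $\tParam^0$ (flipping all signs incident to a vertex is a measure-preserving bijection of the i.i.d.\ edge signs) and preserves its spectrum, hence $\gamma(\tParam^0)$, hence $\Param^0$ up to the conjugation $\Param^0 \mapsto D\Param^0 D$. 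Taking expectations gives $\E(-(\Param^0)^{-1}) = D\,\E(-(\Param^0)^{-1})\,D$ for every such $D$, which forces the off-diagonal entries to vanish; vertex-exchangeability of the uniform random $k$-regular graph forces the diagonal entries to be equal. Therefore $\tr\big((\E(-(\Param^0)^{-1}))^{-1}\big) = p/a$ with $a = \tfrac1p\,\E\,\tr\big((cI-\param_{\min}\tParam^0)^{-1}\big)$ and $c = \gamma + 2\param_{\min}\sqrt{k-1}$, while $\tr(\E(-\Param^0)) = p\,\E[c]$ since $\tr\tParam^0=0$. Hence $Q(\param_{\min},k,\rho) = \lim_{p\to\infty}\big(\E[c]-1/a\big)$.

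Next I would take the $p\to\infty$ limit using two standard facts about randomly signed random regular graphs. By the moment method, $\tfrac1p\E\,\tr(\tParam^0)^n$ equals the expected number of closed length-$n$ walks based at a vertex in which every edge is traversed an even number of times (any walk with an odd-multiplicity edge has vanishing sign expectation), and by local tree-likeness of the random $k$-regular graph this converges to the corresponding count on the $k$-regular tree, i.e.\ to the $n$-th moment of the Kesten--McKay law $\mu_k$ supported on $[-2\sqrt{k-1},2\sqrt{k-1}]$. Moreover $\lambda_{\max}(\tParam^0)\to 2\sqrt{k-1}$ (a Friedman/Bordenave-type bound together with the matching lower bound from moment growth), so $\gamma(\tParam^0)\to\rho$, $c\to c_\infty := \rho + 2\param_{\min}\sqrt{k-1}$, and $c_\infty/\param_{\min} = \rho/\param_{\min} + 2\sqrt{k-1}$ lies strictly to the right of $\supp(\mu_k)$. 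Combining these (with uniform integrability supplied by the tail control on $\lambda_{\max}$) yields $\E[c]\to c_\infty$ and $a\to \param_{\min}^{-1}G_k(z)$, where $z := \rho/\param_{\min} + 2\sqrt{k-1}$ and $G_k(w) := \int(w-\lambda)^{-1}\,d\mu_k(\lambda)$, so that $Q(\param_{\min},k,\rho) = \param_{\min}\big(z - 1/G_k(z)\big)$.

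Finally I would use Kesten's closed form for $G_k$. Letting $g(w)$ be the resolvent at the root of the rooted tree whose root has $k-1$ children, the self-consistency relation $g = 1/(w-(k-1)g)$ gives $g(w) = 2/\big(w+\sqrt{w^2-4(k-1)}\big)$, and the full-tree resolvent obeys $1/G_k(w) = w - k\,g(w)$; hence $z - 1/G_k(z) = k\,g(z)$ and
\[
Q(\param_{\min},k,\rho) = \param_{\min}\,k\,g(z), \qquad z = \frac{\rho}{\param_{\min}} + 2\sqrt{k-1}\ \ge\ \max\Big\{2\sqrt{k-1},\ \frac{\rho}{\param_{\min}}\Big\}.
\]
Since $g$ is positive and decreasing on $[2\sqrt{k-1},\infty)$ with $g(2\sqrt{k-1}) = 1/\sqrt{k-1}$, the first lower bound on $z$ gives $Q\le k\param_{\min}/\sqrt{k-1}$; since $g(w)\le 2/w$ for $w\ge 2\sqrt{k-1}$, the second gives $Q\le 2k\param_{\min}^2/\rho$. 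Taking the minimum proves the lemma (with $C' = 2$).

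I expect the main obstacle to be the passage to the limit: making rigorous the convergence of the Stieltjes transform of $\tParam^0$ evaluated at the \emph{random} argument $c/\param_{\min}$ — which requires the moment/tree-likeness computation together with an operator-norm bound keeping the argument uniformly away from the spectrum — and the convergence $\gamma\to\rho$ with the uniform integrability it requires, both of which rest on a quantitative second-eigenvalue estimate for randomly signed regular graphs. Everything downstream of the identity $Q = \param_{\min}\,k\,g(z)$ is elementary.
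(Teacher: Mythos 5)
Your proof is correct and arrives at the same core reduction as the paper --- expressing the limit through the Stieltjes transform of the Kesten--McKay law evaluated at $z=\rho/\param_{\min}+2\sqrt{k-1}$ --- but it differs in two substantive ways. First, where you prove that $\E\{-(\Param^0)^{-1}\}$ is a multiple of the identity via the gauge symmetry $\tParam^0\mapsto D\tParam^0 D$ and vertex exchangeability (so that $\tr\{(\E(-\Param^{-1}))^{-1}\}=p/a$ exactly), the paper sidesteps diagonality altogether: since the second term enters $Q$ with a negative sign, it only needs a \emph{lower} bound on $\frac{1}{p}\tr\{(\E(-\Param^{-1}))^{-1}\}$, which it gets from Jensen's inequality, $\frac{1}{p}\sum_i\lambda_i^{-1}\ge\big(\frac{1}{p}\sum_i\lambda_i\big)^{-1}=\big(\E\,\frac{1}{p}\tr\{(-\Param)^{-1}\}\big)^{-1}$. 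Your symmetry argument is valid and yields the exact limit, but it is more machinery than the bound requires. Second, after reaching $Q\le\param_{\min}\,(z-1/G_k(z))$, the paper computes the closed form of $G(k,z)$, evaluates the limits $\rho\to0$ and $\rho\to\infty$, and invokes monotonicity in $\rho/\param_{\min}$ to deduce the two bounds with a non-explicit constant $C'(k)$; your route through the tree recursion $1/G_k(z)=z-k\,g(z)$ gives the cleaner identity $Q=\param_{\min}\,k\,g(z)$ and the elementary estimates $g\le 1/\sqrt{k-1}$ and $g(w)\le 2/w$, which yield the stated inequality with the explicit (indeed $k$-independent) constant $C'=2$, consistent with the paper's asymptotics. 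On the analytic side, both arguments rest on the same two facts you flag --- the Kesten--McKay limiting spectral distribution and concentration of the extreme eigenvalues of the signed adjacency matrix near $\pm 2\sqrt{k-1}$, so that $\gamma\to\rho$ and the resolvent is evaluated uniformly away from the spectrum; the paper treats these just as informally as you do (citing them under the Kesten--McKay law), so your explicit acknowledgment of the needed Friedman-type edge bound and the uniform integrability for $\E\{\gamma\}\to\rho$ is, if anything, the more careful account.
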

\begin{proof}
First notice that 
\begin{align} \label{eq:linear_bound_first_term}
\lim_{p \rightarrow \infty } \frac{1}{p} \E \text{Tr} \{-\Param\} & = \lim_{p \rightarrow \infty} \E(\gamma) + 2\param_{\min}\sqrt{k-1} \nonumber \\
&= \rho +  2\param_{\min}\sqrt{k-1}
\end{align}
since by Kesten-McKay law \cite{friedman}, for large $p$, the spectrum of $\tParam$ has support in $(- \epsilon -2\param_{\min}\sqrt{k-1},2\param_{\min}\sqrt{k-1} + \epsilon)$ with high probability. Notice that unless we randomize each entry of $\tParam$ with $\{-1,+1\}$ values, every $\tParam$ will have $k$ as its largest eigenvalue and the above limit will not hold.

For the second term, $\text{Tr}\{ (\E(-\Param^{-1}))^{-1} \}$,  we will compute a lower bound. For that purpose let $\lambda_i > 0$ be the $i^{th}$ eigenvalue of the matrix $\E(-\Param^{-1})$. We can write,
\begin{align} \label{eq:jensen_trick}
\frac{1}{p} \text{Tr} \{ (\E(-\Param^{-1}))^{-1} \} & = \frac{1}{p} \sum^p_{i = 1} \frac{1}{\lambda_i} \nonumber \\
& \geq \frac{1}{\frac{1}{p} \sum^p_{i = 1} \lambda_i} \nonumber \\
& =\frac{1}{ \E  \{ \frac{1}{p} \text{Tr} \{ (-\Param)^{-1}  \}\} }
\end{align}
where we applied Jensen's inequality in the last step. By Kesten-McKay law we now have that,
\begin{align}\label{eq:linear_bound_second_term}
&\lim_{p \rightarrow \infty} \E  \Big\{ \frac{1}{p} \text{Tr} \{ (-\Param)^{-1}  \}\Big\}  = \E \Big\{  \lim_{p \rightarrow \infty}    \frac{1}{p} \text{Tr} \{ (-\Param)^{-1}  \}   \Big\} \nonumber \\
& = \frac{1}{\param_{\min}} G(k,\rho/\param_{\min}+2\sqrt{k-1})
\end{align}
where
\begin{align}\label{eq:integral_def_G}
G(k,z) = \int \frac{-1}{\nu - z} \de \mu(\nu). 
\end{align}
Above, $\mu(\nu)$ is the Kesten-McKay distribution and, inside its support, $\nu \in [-2\sqrt{k-1},-2\sqrt{k-1}]$, it is defined by
\begin{align*} \label{eq:mckay_law}
\de \mu(\nu)  = \frac{k}{2 \pi} \frac{\sqrt{4(k-1) - \nu^2 }}{k^2 - \nu^2} \de \nu.
\end{align*}
The integral \eqref{eq:integral_def_G} can be computed exactly 
\begin{align}
G(k,z)  = -\frac{(k-2) z-k \sqrt{-4 k+z^2+4}}{2 \left(z^2-k^2\right)}. 
\end{align}
From the closed form expression for $G(k,z)$ one can
see that
\begin{align}
 &\lim_{\rho \rightarrow 0} Q(\param_{\min},k,\rho) = \frac{\param_{\min} k}{\sqrt{k-1}} \text{  \;\;\; and \;\;\;  } \\
 &\lim_{\rho \rightarrow \infty}  \rho \, Q(\param_{\min},k,\rho)  = k (\param_{\min})^2. 
\end{align}
Finally, notice that $Q(\param_{\min},k,\rho)/\param_{\min}$ can be see as function of $k$ and $\rho/\param_{\min}$ alone. In addition, because it is strictly decreasing with $\rho/\param_{\min}$, the limits above
imply that $Q(\param_{\min},k,\rho)/\param_{\min} \leq k / \sqrt{k-1}$ and that there is a large $C'$ such that $Q(\param_{\min},k,\rho)/\param_{\min} \leq C' k \theta_{\min} / \rho$ for $\rho$ sufficiently high. From these two bounds, the proof follows.
\end{proof}

\vspace{0.5cm}

\begin{proof}[Proof (Theorem \ref{th:linear_lbound_sparse})]
We now show that when $\Param^0$ is chosen at random from $\mathcal{A}^{(S)}$, the right hand side of \eqref{eq:bound_linear}  reduces to the
right hand side of \eqref{eq:lower_bound_linear_model} in Theorem \ref{th:linear_lbound_sparse}.

Starting from the bound of Lemma \ref{th:mut_inf_bound_linear}, we divide both terms in the numerator and the denominator by $p$. Notice that we can ignore the term $2/p$ in the numerator when $p \rightarrow \infty$.

Recall that $\Param^0$ is built from the adjacency
matrix of a regular graph chosen uniformly
at random and whose entries have had their sign
flipped with probability $1/2$. Therefore,
since $M(\Param^0)$ is the sign-support
of $\Param^0$, we have $H(M(\Param^0)) = \log(|\mathcal{M}|)$. Hence, we can write
$p^{-1}(2H(M(\Param^0)) - \log(|\mathcal{M}|)) = p^{-1} \log(|\mathcal{M}|)$.
In addition, $|\mathcal{M}| =  2^{p k /2} |\mathcal{R}|$, where $|\mathcal{R}|$ is the number of regular graphs of degree $k$ on $p$ nodes and 
$2^{p k/2}$ accounts for the sign flips in the non-zero non-diagonal entries \footnote{Notice that diagonal entries are
constant and equal to $\gamma + 2\param_{\min} \sqrt{k -1}$.}. From \cite{bresler2008reconstruction}, we know that $\log(|\mathcal{R}|) \geq C p k \log(2p/k)$ for small enough constant $C$. And therefore,
$\log(|\mathcal{M}|) / p \geq (k/2) \log(2) + C k \log(2p/k)\geq  C' k \log(2p/k)$ for all $p$ large enough and small enough $C'$.

Lemma \ref{th:random_matrix_calc_sparse} gives an upper bound on the denominator when $p \rightarrow \infty$.

To finish the proof of Theorem \ref{th:linear_lbound_sparse}, we show that $\lim_{p \rightarrow \infty} I(x(0);\Param^0)/p \leq 1$. This finishes the proof since, after multiplying by a small enough constant (only dependent on $k$), the bound obtained by replacing the numerator and denominator with the above limiting lower bounds will be valid for all $p$ large enough.

First notice that $h(x(0)) \leq (1/2) \log (2 \pi e)^p |\E (Q^0)|$ and hence,
\begin{align}
&I(x(0);\Param^0) = h(x(0)) - h(x(0)|\Param^0)\\
&\leq \frac{1}{2} \log (2 \pi e)^p |\E (Q^0)| - \E \frac{1}{2} \log (2 \pi e)^p | Q^0| \label{eq:mut_inf_gaus_bound_1},
\end{align}
where $Q^0 = -(1/2) {(\Param^0)}^{-1}$ is the covariance matrix of the stationary process $x(t)$ and $|.|$ denotes the determinant of a matrix. Then we write, $I(x(0);\Param^0) \leq (1/2)\log |\E (-(\beta \Param^0)^{-1})|  + (1/2) \E \log (|-\beta \Param^0|) \leq \frac{1}{2} \text{Tr}\E (-I -(\beta \Param^0)^{-1})  + \frac{1}{2} \E \text{Tr}\{ -I-\beta \Param^0\}$
where $\beta > 0$ is an arbitrary rescaling factor and the last inequality follows from the matrix inequality $\log(I + (.)) \leq \text{Tr}(.)$. From this and equations \eqref{eq:linear_bound_first_term} and \eqref{eq:linear_bound_second_term} it follows that,
\begin{align}
\lim_{p \rightarrow \infty} \frac{1}{p} I(x(0);\Param^0) \leq -1 + (1/2) (\beta' z + \beta'^{-1} G(k,z))
\end{align}
where $z = \rho/ \param_{\min} + 2 \sqrt{k-1}$ and $\beta' = \beta \param_{\min} $. To finish, note that optimizing over $\beta'$ and then over $z$ gives,
\begin{align}
\beta' z + \beta'^{-1} G(k,z) \leq 2 \sqrt{z G(k,z)} \leq \sqrt{\frac{8(k-1)}{k-2}} \leq 4.
\end{align}
\end{proof}
%
%
\subsection{Proof of Theorem \ref{th:linear_lbound_dense}}
\label{sec:appendix_lbound_dense_proof}
The proof of this theorem follows closely the proof of Theorem \ref{th:linear_lbound_sparse}. Basically, the claim follows by proving that the bound (\ref{eq:lbound_dense}) holds for an $\Param^0$ chosen at random with a distribution supported on $\cA^{(D)}$.

Again, in order to lower bound the sample-complexity for
random matrices, we make use of Lemma \ref{th:mut_inf_bound_linear}.

Now, however, we construct the random matrix $\Param^0$ as follows.
Let $\tilde{\Param^0}$ be a random symmetric matrix with zero-diagonal and with $\{\param_{ij}\}_{i< j}$ i.i.d. random variables where $\prob(\param_{ij}=\param_{\min}) = \prob(\param_{ij}=-\param_{\min})
=1/4$, and $\prob(\param_{ij}=0) = 1/2$. Notice that the second moment of each entry $i \neq j$ is $\E(\Param_{ij}^2) = \param^2_{\min}/2\equiv \alpha$. We then 
define $\Param^0 = -(\gamma + 2 \sqrt{\alpha})I + \tParam^0/\sqrt{p}$ where 
$\gamma=\gamma(\tParam^0)$ is the smallest value that guarantees that $\lambda_{\min}(-\Param) \ge \rho$.

The following Lemma contains a matrix theory calculation that will be later used in this proof when applying Lemma \ref{th:mut_inf_bound_linear}. Recall that we defined $\alpha = \param^2_{\min}/2$.

\begin{lemma} \label{th:random_matrix_calc_dense}
Let $\Param$ be a random matrix defined as above and 
\begin{align}
Q(\param_{\min},\rho)\equiv \lim_{p \rightarrow \infty} \frac{1}{p} \{ \tr \{ \E(-\Param) \}  - \tr \{ (\E(-\Param^{-1}))^{-1} \} \}.
\end{align}
Then, there exists a constant  $C'$ such that
\begin{align}
&Q(A_{\min},\rho)  \leq \min \{\frac{C' \param^2_{\min} }{2\rho} ,\frac{\param_{\min}}{\sqrt{2}} \}.
\end{align}
\end{lemma}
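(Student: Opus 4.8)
The plan is to follow the proof of Lemma~\ref{th:random_matrix_calc_sparse} almost verbatim, replacing the Kesten--McKay law by Wigner's semicircle law. The matrix $\tParam/\sqrt p$ is a Wigner matrix whose above-diagonal entries are i.i.d., symmetric, bounded by $\param_{\min}$, and of variance $\alpha=\param_{\min}^2/2$. Hence, as $p\to\infty$, its empirical spectral distribution converges almost surely to the semicircle law $\mu_{\rm sc}$ on $[-2\sqrt{\alpha},2\sqrt{\alpha}]$ with density $(2\pi\alpha)^{-1}\sqrt{4\alpha-\nu^2}$, and, because the entries are bounded, its extreme eigenvalues converge almost surely and in $L^1$ to $\pm 2\sqrt{\alpha}$. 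Since $\gamma(\tParam)=\max\{0,\,\rho-2\sqrt{\alpha}+\lambda_{\max}(\tParam/\sqrt p)\}$, this yields $\gamma\to\rho$ almost surely and in $L^1$, so that, exactly as in \eqref{eq:linear_bound_first_term},
\begin{equation}
\lim_{p\to\infty}\tfrac1p\,\E\,\tr\{-\Param\}=\rho+2\sqrt{\alpha}.
\end{equation}

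For the second term I would reuse the Jensen step \eqref{eq:jensen_trick}: $\tfrac1p\tr\{(\E(-\Param^{-1}))^{-1}\}\ge(\E\{\tfrac1p\tr\{(-\Param)^{-1}\}\})^{-1}$. By the choice of $\gamma$ one has $\lambda_{\min}(-\Param)\ge\rho$, hence $\tfrac1p\tr\{(-\Param)^{-1}\}\le\rho^{-1}$ deterministically; moreover the spectrum of $-\Param$ is that of $\tParam/\sqrt p$ shifted by $\gamma+2\sqrt{\alpha}$ and is therefore eventually confined to a compact interval bounded away from $0$, so dominated convergence together with the semicircle law gives
\begin{equation}
\lim_{p\to\infty}\E\Big\{\tfrac1p\tr\{(-\Param)^{-1}\}\Big\}=\int\frac{\de\mu_{\rm sc}(\nu)}{(\rho+2\sqrt{\alpha})-\nu}=s(\rho+2\sqrt{\alpha}),
\end{equation}
where $s(z)=\int(z-\nu)^{-1}\de\mu_{\rm sc}(\nu)=\frac{z-\sqrt{z^2-4\alpha}}{2\alpha}$ is the Stieltjes transform of $\mu_{\rm sc}$, here evaluated at $z=\rho+2\sqrt{\alpha}>2\sqrt{\alpha}$.

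Combining the two limits and using the elementary identity $1/s(z)=\tfrac12(z+\sqrt{z^2-4\alpha})$ gives
\begin{equation}
Q(\param_{\min},\rho)\le(\rho+2\sqrt{\alpha})-\frac{1}{s(\rho+2\sqrt{\alpha})}=\alpha\,s(\rho+2\sqrt{\alpha})=\frac{(\rho+2\sqrt{\alpha})-\sqrt{(\rho+2\sqrt{\alpha})^2-4\alpha}}{2}.
\end{equation}
The right-hand side is strictly decreasing in $\rho$ (since $s$ is decreasing on $(2\sqrt{\alpha},\infty)$) and, divided by $\sqrt{\alpha}$, depends only on $\rho/\param_{\min}$. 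Its supremum over $\rho>0$ is $\lim_{\rho\to0}\alpha\,s(\rho+2\sqrt{\alpha})=\sqrt{\alpha}=\param_{\min}/\sqrt{2}$; and since $\rho\mapsto\rho\,\alpha\,s(\rho+2\sqrt{\alpha})$ is continuous on $[0,\infty)$ and tends to $\alpha$ as $\rho\to\infty$ (using $s(z)=z^{-1}+\alpha z^{-3}+\bigo(z^{-5})$), it is bounded by $C'\alpha=C'\param_{\min}^2/2$ for some absolute $C'\ge1$. This proves $Q(\param_{\min},\rho)\le\min\{C'\param_{\min}^2/(2\rho),\,\param_{\min}/\sqrt{2}\}$. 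As in the proof of Theorem~\ref{th:linear_lbound_sparse}, feeding this estimate into Lemma~\ref{th:mut_inf_bound_linear} (with $M(\Param)$ the signed support, so $H(M(\Param))=\log|\mathcal{M}|$, and with $I(x(0);\Param)/p$ controlled by the same Gaussian-entropy argument) then yields the bound \eqref{eq:lbound_dense} of Theorem~\ref{th:linear_lbound_dense}.

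I expect the only genuinely delicate points to be the two interchanges of limit and expectation. For $\tfrac1p\E\tr\{-\Param\}$ one needs $L^1$-convergence of $\lambda_{\max}(\tParam/\sqrt p)$, which holds because the uniform boundedness of the entries makes all moments of $\|\tParam/\sqrt p\|$ bounded uniformly in $p$, so $\gamma$ is uniformly integrable. For $\tfrac1p\E\tr\{(-\Param)^{-1}\}$ the integrand is deterministically dominated by $\rho^{-1}$ thanks to the definition of $\gamma$, so ordinary dominated convergence suffices. Everything else is the same bookkeeping as in Lemma~\ref{th:random_matrix_calc_sparse}.
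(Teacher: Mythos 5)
Your proposal is correct and follows essentially the same route as the paper's proof: Wigner's semicircle law for the trace term, the Jensen inequality step of Eq.~\eqref{eq:jensen_trick}, the Stieltjes transform of the semicircle evaluated at $\rho+2\sqrt{\alpha}$ for the second term, and then monotonicity plus the $\rho\to 0$ and $\rho\to\infty$ limits to obtain the two-sided bound. Your explicit identity $z-1/s(z)=\alpha\,s(z)$ and the careful justification of the limit--expectation interchanges are slightly more detailed than the paper's treatment, but they do not change the argument.
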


\begin{proof}
Using Wigner's Semicircle law for random symmetric matrices \cite{Zeitouni} and the bound described in \eqref{eq:jensen_trick} it follows that,
\begin{align}
 &\lim_{p \rightarrow \infty} \frac{1}{p} \{ \text{Tr} \{ \E(-\Param) \}  = \rho + 2\sqrt{\alpha} \text{  \;\;\; and \;\;\;  }\\
 &C(\alpha,\rho) \equiv \lim_{p \rightarrow \infty}  \E \{ \frac{1}{p} \text{Tr} \{ (-\Param)^{-1}\} \}  \label{eq:second_term_wigner}\\
&=\frac{-\sqrt{\rho  \left(4 \sqrt{\alpha }+\rho \right)}+2 \sqrt{\alpha }+\rho }{2 \alpha}.
\end{align}
Since $C(\alpha,\rho) = \alpha^{-1/2} C(1,\rho/\sqrt{\alpha})$, we can write 
$\rho +  2\sqrt{\alpha} - ( C(\alpha,\rho) )^{-1} = \sqrt{\alpha} G(\rho/\sqrt{\alpha})$ where $G(x)$ is a strictly decreasing function. Since $\lim_{\rho \rightarrow 0} = \sqrt{\alpha} G(\rho/\sqrt{\alpha}) = \sqrt{\alpha}$ and $\lim_{\rho \rightarrow \infty}  \rho \sqrt{\alpha} G(\rho/\sqrt{\alpha}) = \alpha$ it follows that there is a constant $C'$ independent of $\alpha$ or $\rho$ such that $\sqrt{\alpha} G(\rho/\alpha) \leq \sqrt{\alpha}\min \{1,C' \sqrt{\alpha} / \rho \}$. The result now follows by replacing $\alpha = \Param^2_{\min}/2$.
\end{proof}

\vspace{0.5cm}

\begin{proof}[Proof (Theorem \ref{th:linear_lbound_dense})]
Like in the proof of Theorem \ref{th:linear_lbound_sparse} we start by dividing both numerator and denominator of \eqref{eq:bound_linear} in Lemma 
\ref{th:mut_inf_bound_linear} by $p$. 
By multiplying the resulting expression by an appropriately small constant we can replace the denominator and $\lim_{p \rightarrow \infty} I(x(0);\Param^0)/p$ by their limits when $p \rightarrow \infty$ and get an expression that is still valid for all $p$ large enough.

Let us produce a lower bound for $2 H(M(\Param^0)) - \log(|\mathcal{M}|)$.
First notice that we again have,
$H(M(\Param^0)) = \log(|\mathcal{M}|)$
since every $M(\Param^0)$ is equally
likely. Therefore,
$2 H(M(\Param^0)) - \log(|\mathcal{M}|) = H(M(\Param^0))$.

Since $H(M(\Param^0))/p = \frac{p-1}{2} H(\{1/2,1/4,1/4 \} ) \geq \frac{(p-1)}{4}  \log 2$ \footnote{$H(\{1/2,1/4,1/4 \} )$ is the entropy of the distribution $\{1/2,1/4,1/4\}$.}, and since by Lemma \ref{th:random_matrix_calc_dense} we already know the limiting expression of the denominator, all we have to do is find $\lim_{p \rightarrow \infty} I(x(0);\Param^0)/p$. By an analysis very similar to that in the proof of Theorem \ref{th:linear_lbound_sparse} one can show that
\begin{align}
\lim_{p \rightarrow \infty} \frac{1}{p} I(x(0);\Param^0)  \leq -1 + \sqrt{(z+2) C(1,z)} \leq 1.
\end{align}
where $C(\alpha,\rho)$ was defined in \eqref{eq:second_term_wigner}, which finishes the proof.
\end{proof}

%
%
\subsection{Proof of Theorem \ref{th:non_linear_lbound}}

The proof consists in evaluating the lower bound in Corollary 
\ref{th:main_simpler_lbound}. We prove the theorem by showing the bound holds
for functions uniformly chosen over a specific subset of $\cA^{(N)}$. Consider the set of functions such that for each possible support of a $p \times p$ matrix with at most $k$ non-zero entries per row there is one and only one function in the family with $JF$ having that support for all $x$. Note that this
implies that, when evaluating
\ref{th:main_simpler_lbound}, here with $\mathcal{M}  = \cA^{(N)}$, we have $\log(|\mathcal{M}|) = H(M(\Param^0))$. Hence
$2 H(M(\Param^0)) - \log(|\mathcal{M}|) = H(M(\Param^0))$.

Now notice that $\E_{x(0)} \Var_{x(0)|\Param^0} F(x(0);\Param^0) \leq \E (||F(x(0);\Param^0)||^2)$. Secondly notice that, if $x$ and $x'$ only differ on the $j^{th}$ component and $(JF)_{ij} \neq 0$ then $|F_i(x;\Param^0)| \leq |F_i(x';\Param^0)| + D||x'-x||$. Since $JF$ has at most $k$ non-zero entries per row, we get that for any $x$ and $x'$, $|F_i(x;\Param^0)| \leq |F_i(x';\Param^0)| + kD ||x'-x||$. If $x = x(0)$ and $x' = \E_{x(0)|\Param^0}(x(0)|\Param^0)$ 
then squaring the previous expression and taking expectations gives us $\E_{x(0)|\Param^0}(F_i(x;\Param^0)^2|\Param^0) \leq 2 F_i(x';\Param^0)^2 + 2k^2 D^2 B$. From this we get that $\E (||F(x(0);\Param^0)||^2)/p \leq C +  2k^2 D^2 B $ where $C$ is a constant independent of $\Param^0$.
For this sub family of functions we have $H(M(\Param^0)) \geq pk \log(p/k)$ (see \cite{bresler2008reconstruction}). By \eqref{eq:mut_inf_gaus_bound_1}, we know that $I(x(0);\Param^0) \leq (1/2) \log((2 \pi e)^p |\E Q^0| ) - (1/2) \E \log((2 \pi e)^p | Q^0| )$. The first term, which is the entropy of a $p$-dimensional Gaussian with covariance matrix $\E\{Q^0\}$, can be upper bounded by the sum of the entropy of its individual components, which have variance upper bounded by $B$. Finally, since $\lambda_{\min}(Q^0) \geq L$, we have $\log | Q^0| \geq p \log L$ and therefore $I(x(0);\Param^0) \leq p/2 \log B/L$, which completes the proof.
\endproof

\end{document}